\setlist[itemize]{
  noitemsep,
  parsep=0pt,
  topsep=0pt,
  itemsep=1pt,
  partopsep=0pt,
  label=--}
\theoremstyle{plain}
\newtheorem{theorem}{Theorem}[section]
\newtheorem*{theorem*}{Theorem}
\newtheorem{proposition}[theorem]{Proposition}
\newtheorem*{proposition*}{Proposition}
\newtheorem*{corollary*}{Corollary}
\newtheorem{lemma}[theorem]{Lemma}
\newtheorem*{lemma*}{Lemma}
\theoremstyle{definition}
\newtheorem{definition}[theorem]{Definition}
\newtheorem{remark}[theorem]{Remark}
\newtheorem{example}[theorem]{Example}
\newtheorem*{example*}{Example}
\newtheorem{protocol}[theorem]{Protocol}
\renewcommand\epsilon{\varepsilon}
\newcommand\calA{\mathcal{A}}
\newcommand\calC{\mathcal{C}}
\newcommand\calQ{\mathcal{Q}}
\newcommand\Query{\mathsf{Query}}
\newcommand\Answer{\mathsf{Answer}}
\newcommand\Recover{\mathsf{Recover}}
\newcommand\FF{\mathbb{F}}
\newcommand\PP{\mathbb{P}}
\newcommand\NN{\mathbb{N}}
\renewcommand\AA{\mathbb{A}}
\newcommand\bfa{{\bm a}}
\newcommand\bfc{{\bm c}}
\newcommand\bfe{{\bm e}}
\newcommand\bfi{{\bm i}}
\newcommand\bfk{{\bm k}}
\newcommand\bfq{{\bm q}}
\newcommand\bfw{{\bm w}}
\newcommand\bfx{{\bm x}}
\newcommand\bfy{{\bm y}}
\newcommand\bfz{{\bm z}}
\newcommand\bfzero{{\bm 0}}
\newcommand\mydef{\coloneqq}
\DeclareMathOperator{\Deg}{Deg}
\DeclareMathOperator{\WRM}{WRM}
\DeclareMathOperator{\RS}{RS}
\DeclareMathOperator{\ev}{ev}
\DeclareMathOperator{\wdeg}{wdeg}
\DeclareMathOperator{\Span}{Span}
\DeclareMathOperator{\Lift}{Lift}
\newcommand\Red[1]{{\rm Red}^\star_{#1}}
\newcommand\equivstar[1]{\,\equiv^\star_{#1}\,}
\newcommand\ps[2]{{\left\langle#1,#2\right\rangle}}
\title{
  Weighted Lifted Codes: Local Correctabilities\\
  and Application to Robust Private Information Retrieval
}
\author{
    Julien Lavauzelle\thanks{IRMAR - UMR CNRS 6625, Université de Rennes 1, France. Email: {\tt julien.lavauzelle@univ-rennes1.fr}}
    \and
    Jade Nardi\thanks{Institut de Math\'ematiques de Toulouse~; UMR 5219, Universit\'e de Toulouse~; CNRS UPS IMT, F-31062 Toulouse Cedex 9, France. Email: {\tt jade.nardi@math.univ-toulouse.fr}}
}
\date{\today}
\begin{document}

\maketitle

\begin{abstract}
    Low degree Reed-Muller codes are known to satisfy local decoding properties which find applications in private information retrieval (PIR) protocols, for instance. However, their practical instantiation encounters a first barrier due to their poor information rate in the low degree regime. This lead the community to design codes with similar local properties but larger dimension, namely the lifted Reed-Solomon codes.
    
    However, a second practical barrier appears when one requires that the PIR protocol resists collusions of servers. In this paper, we propose a solution to this problem by considering \emph{weighted} Reed-Muller codes. We prove that such codes allow us to build PIR protocols with optimal computation complexity and resisting to a small number of colluding servers.
    
    In order to improve the dimension of the codes, we then introduce an analogue of the lifting process for weigthed degrees. With a careful analysis of their degree sets, we notably show that the weighted lifting of Reed-Solomon codes produces families of codes with remarkable asymptotic parameters.
\end{abstract}


\section{Introduction}

\subsection{Weighted Reed-Muller codes}

Weighted Reed-Muller codes were introduced by S{\o}rensen in 1992, as a generalisation of Reed-Muller codes in the context of weighted polynomial rings~\cite{Sor}. Formally, given a finite field $\FF_q$, a \emph{weight} $\omega = (\omega_1, \dots, \omega_m) \in (\NN^*)^m$ and a polynomial \[P(X_1, \dots, X_m) = \sum_{\bfi = (i_1, \dots, i_m) \in I} p_\bfi X^{i_1} \dots X^{i_m} \in \FF_q[X_1, \dots, X_m],\] the weighted degree of $P$ with respect to $\omega$ is
\[
    \wdeg_\omega(P) \mydef \max \left\{ \sum_{j=1}^m \omega_j i_j \mid \bfi = (i_1, \dots, i_m) \in I \text{ and } p_\bfi \ne 0 \right\}\,.
\]
In particular, if $\omega = (1, \dots, 1)$, then we get the usual notion of total degree for multivariate polynomials.

In order to build codes from subspaces of polynomials, we consider the evaluation map
\[
    \begin{array}{rclc}
    \ev_{\FF_q^m} :& \FF_q[X_1,\dots,X_m] &\to &\FF_q^{q^m}  \\
                   & P(x_1, \dots, x_m)  & \mapsto &(  P(x_1, \dots, x_m), \bfx = (x_1, \dots, x_m) \in \FF_q^m )
    \end{array}
\]
Then, a weighted Reed-Muller code is defined as the image by $\ev_{\FF_q^m}$ of a subspace of polynomials whose weighted degree is bounded by some integer $d$.

\begin{definition}[Weighted Reed-Muller code]
    Let $m \ge 1$, $\omega \in (\NN^*)^m$ and $d \in \NN$. The \emph{weighted (affine) Reed-Muller code} of order $m$, degree $d$ and weight $\omega$ is:
    \[
        \WRM_q^\omega(d) = \{ \ev_{\FF_q^m}(P), P \in \FF_q[X_1, \dots, X_m], \wdeg_\omega(P) \le d \}\,.
    \]
\end{definition}
Note that weighted Reed-Muller codes are generalised Goppa codes on the weighted projective space $\PP(1,\omega_1,\dots,\omega_m)$ with evaluation points outside the line at infinity $X_0 =0$.

The dimension of weighted Reed-Muller codes, as well as bounds on the minimum distance, are given by S{\o}rensen in his seminal paper~\cite{Sor}. Notice that these parameters are also analysed in a recent work~\cite{AubryCGLOR17} by Aubry, Castryck, Ghorpade, Lachaud, O'Sullivan, and Ram, who also describe minimum weight codewords with geometric techniques. Geil and Thomsen~\cite{GT} finally proved that weighted Reed-Muller codes are efficiently decodable up to half their minimum distance, notably using an embedding of weighted Reed-Muller codes into Reed-Solomon codes.

\subsection{Technical overview and organisation}

In this work, we will only focus on the case where $m = 2$ and $\omega$ is of the form $\omega = (1, \eta)$ where $\eta \ge 1$. This setting seems very restrictive, but it is the most promising in terms of parameters (see for instance~\cite{Sor, GT}) and it also finds a practical application in private information retrieval protocols. For simplicity, we will use the shorter notation $\WRM_q^\eta(d)$ for $\WRM_q^{(1, \eta)}(d)$.

Our first observation is that, when $d \le q-1$, the evaluation map $\ev_{\FF_q^2}$ is injective. This has two major consequences: (i) the code and its parameters are easier to describe and (ii) puncturing the code on \enquote{lines of weighted degree $\eta$} leads to highly-sound local correction. More precisely, in Section~\ref{sec:local-correction} we prove the following result.
\begin{theorem}[informal]
    Let $\eta \ge 1$, $q$ be a prime power and $\gamma \in (0,1)$. For a fixed $\delta \in (0,1)$ small enough, the family of weighted Reed-Muller codes $\WRM^\eta_q( \lfloor \gamma q \rfloor)$ are $(q-1, \delta, \epsilon)$-locally correctable, where $\epsilon = O_\gamma(\delta)$.
\end{theorem}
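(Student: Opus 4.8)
The plan is to build a \textbf{one-curve local corrector}. To recover the symbol of a codeword at a point $p_0 = (x_0,y_0) \in \FF_q^2$, sample a random \emph{weighted line} through $p_0$, namely a curve $L_{a,b}\colon t \mapsto (x_0 + at,\; y_0 + b t^\eta)$ with $a$ uniform in $\FF_q \setminus \{0\}$ and $b$ uniform in $\FF_q$; query the $q-1$ points $L_{a,b}(t)$ for $t \ne 0$; run the unique decoder of a Reed--Solomon code of degree $d = \lfloor \gamma q \rfloor$ on the received values; and return the value at $t = 0$ of the decoded polynomial. The query complexity is then exactly $q-1$, so it remains to bound the failure probability.

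First I would record why the restriction of a codeword of $\WRM^\eta_q(d)$ to $L_{a,b}$ is a Reed--Solomon codeword. If $\wdeg_\omega(P) \le d$, then every monomial $X^i Y^j$ appearing in $P$ satisfies $i + \eta j \le d$, hence $t \mapsto P(x_0 + at,\, y_0 + bt^\eta)$ is a univariate polynomial of degree at most $d$. Moreover, since $a \ne 0$ the first coordinates $x_0 + at$ are pairwise distinct as $t$ ranges over $\FF_q$, so $L_{a,b}$ consists of $q$ distinct points; restricting $\ev_{\FF_q^2}(P)$ to them gives a codeword of $\RS_q(d)$, which has relative minimum distance $\ge 1 - \gamma$ since $d \le q-1$ (this is exactly where the low-degree regime enters). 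Dropping $t = 0$ leaves $q-1$ distinct evaluation points, a Reed--Solomon code of distance $\ge (1-\gamma)q - 1$.

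Next I would analyse the distribution of the queried points. Fix $t \ne 0$; since $t^\eta \ne 0$, as $(a,b)$ ranges uniformly over $(\FF_q \setminus \{0\}) \times \FF_q$ the point $L_{a,b}(t) = (x_0 + at,\, y_0 + bt^\eta)$ is uniform over $(\FF_q \setminus \{x_0\}) \times \FF_q$, a set of size $q(q-1)$. Consequently, if the received word $w$ is at relative distance at most $\delta$ from a codeword $c = \ev_{\FF_q^2}(P)$, each of the $q-1$ queried coordinates is corrupted with probability at most $\delta q^2 / (q(q-1)) \le 2\delta$, so the expected number of errors seen on $L_{a,b} \setminus \{p_0\}$ is at most $2\delta(q-1)$. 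By Markov's inequality, the probability that this number reaches the unique-decoding radius $\tfrac12\big((q-1)-d\big) \ge \tfrac12\big((1-\gamma)q - 1\big)$ is at most $\tfrac{4\delta(q-1)}{(1-\gamma)q - 1} = O_\gamma(\delta)$; whenever it does not, the decoder outputs $t \mapsto P(L_{a,b}(t))$, whose value at $t = 0$ equals $P(p_0) = c_{p_0}$. This yields $(q-1,\delta,\epsilon)$-local correction with $\epsilon = O_\gamma(\delta)$, valid once $\delta$ is below a threshold depending only on $\gamma$ (so the Markov estimate is non-trivial) and $q$ is large enough that $(1-\gamma)q - 1 > 0$.

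The step needing the most care is the mild non-uniformity of the queries: because $a$ must be non-zero, the corrector never probes the vertical line $X = x_0$, which costs the factor $2$ above (precisely $\tfrac{q}{q-1}$) and makes the constant hidden in $\epsilon = O_\gamma(\delta)$ degrade as $\gamma \to 1$. Everything else is bookkeeping --- aligning the Reed--Solomon decoding radius on $q-1$ points and degree $\lfloor \gamma q\rfloor$ with the Markov bound, and checking that the $q$ points of a weighted line are genuinely distinct so that restriction really lands in a Reed--Solomon code.
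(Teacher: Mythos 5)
Your proposal is correct and follows essentially the same route as the paper: restrict to a random curve of weighted degree $\eta$ through the target point, observe that the restriction of any codeword is a Reed--Solomon codeword of degree $\le d$, bound the expected number of corrupted queries via the (near-)uniform per-query marginals, and conclude by Markov's inequality. The only cosmetic differences are your specific parametrization $t \mapsto (x_0+at,\,y_0+bt^\eta)$ instead of a uniformly random $\phi \in \Phi_\eta$ through $\bfx$, and recovering the target value by interpolation at $t=0$ rather than by erasure decoding, which are equivalent up to the same $O_\gamma(\delta)$ error bound.
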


This result is obtained thanks to the following fact. Let $\phi(T) \in \FF_q[T]$ be a univariate polynomial of (non-weighted) degree bounded by $\eta$, and let $L = ( (t, \phi(t)), t \in \FF_q ) \subset \FF_q^2$. Then for every $\bfc = \ev_{\FF_q^2}(f(X,Y)) \in \WRM_q^\eta(d)$, the restriction $\bfc_{|L}$ of the vector $\bfc$ to the coordinates indexed by elements of $L$ is a codeword of a Reed-Solomon code of degree $d$. Hence, if the codeword $\bfc$ is corrupted with a constant fraction of errors, picking $\phi$ at random and correcting $\bfc_{|L}$ succeeds with constant probability. As a consequence, it allows us to retrieve some symbols of the corrupted codeword in sublinear query complexity.

However, results described above do not improve the related \enquote{local decoding on curves} technique, described for instance by Yekhanin in his survey~\cite{Yekhanin12}. Fortunately, local correctabilities of weighted Reed-Muller codes can be applied to private information retrieval protocols in order to resist collusion of servers.
In particular, we prove that any weighted Reed-Muller code $\WRM^\eta_q(d)$ induces a private information retrieval protocol for databases of $\simeq q^2/2\eta$ entries, requiring a minimal computation complexity for the $q$ servers, and remaining private against any collusion of $\eta$ servers. We refer the reader to Section~\ref{sec:PIR} for more details.

One should notice that the maximal number of entries in the database is directly given by the dimension of $\WRM^\eta_q(d)$. Unfortunately, the information rate of such codes remains bounded by $1/2\eta$ as long as $d \le q-1$, a constraint which is necessary in our context. Therefore, following the seminal paper of Guo, Kopparty and Sudan~\cite{GuoKS13} and subsequent works~\cite{Guo16, Lavauzelle18b}, we initiate the study of a \emph{weighted lifting} of Reed-Solomon codes in order to produce codes with the same local properties as weighted Reed-Muller codes, but with a much larger dimension.

Definitions and essential properties of \emph{weighted lifted codes} are given in Section~\ref{sec:lift}. Similarly to the constructions of lifted (affine~\cite{GuoKS13} and projective~\cite{Lavauzelle18b}) Reed-Solomon codes and lifted Hermitian codes~\cite{Guo16}, we also prove that for fixed $\eta$ and $q \to \infty$, weighted lifts of Reed-Solomon codes are locally correctable with (i) a non-zero asymptotic information rate in the context of errors with constant relative weight, or 
(ii) an information rate arbitrary close to $1$ when errors have smaller weight.



These two results are the main technical outcomes of the paper, and we present them in Section~\ref{sec:degree-set}. They are obtained after a precise analysis of so-called \emph{degree sets} of weighted Reed-Muller and lifted codes, which represent the sets of exponents of monomials spanning the codes.  We finally provide numerical computations of dimensions of weighted lifted codes, which illustrate the improvement of weighted lifted codes over weighted Reed-Muller codes, and their practical useability in private information retrieval.

\section{Local correction of weighted Reed-Muller codes}
\label{sec:local-correction}

\subsection{Restricting Reed-Muller codes to weighted lines}

The local decoding properties of Reed-Muller codes come from the restriction of their codewords on a line being Reed-Solomon codewords. Expecting similar properties on weighted Reed-Muller codes, we have to find what will play the part of the lines in $\PP(1,1,\eta)$.


\begin{definition}[$\eta$-line on $\PP(1,1,\eta)$]
    Let $\eta \ge 1$. We call a (non-vertical) \emph{$\eta$-line} on $\PP(1,1,\eta)$ the set of zeroes of the polynomial $P(X_0,X_1,X_2) = X_2 - \phi(X_0,X_1)$ where $\phi \in \FF_q[X_0,X_1]$ is homogeneous of degree $\eta$.
\end{definition}

Since we evaluate polynomials only at points outside the line $X_0 = 0$, we shall define an $\eta$-line on the affine plane $\AA^2$, viewed as the domain $X_0 \neq 0$, as the intersection of an $\eta$-line on $\PP(1,1,\eta)$ and $X_0 \neq 0$.

\begin{definition}[affine $\eta$-line]
    Let $\eta \ge 1$. We call a (non-vertical) \emph{$\eta$-line} on $\AA^2$ the set of zeroes of a bivariate polynomial $P(X,Y) = Y - \phi(X)$, where $\phi \in \FF_q[X]$ and $\deg \phi \le \eta$.
\end{definition}

Let us remark that if $P = Y - \phi(X)$ defines an $\eta$-line, then $\wdeg_\eta(P) \le \eta$. The converse is not true, since we removed from the definition collections of \enquote{vertical lines} defined by $\phi(X) = 0$, $\deg \phi \le \eta$.

An $\eta$-line can be parametrized by $t \mapsto (t, \phi(t))$. We thus define
\[
    \Phi_\eta = \{ L_\phi : t \mapsto (t, \phi(t)) \mid \phi \in \FF_q[T] \text{ and } \deg \phi \le \eta \}\,,
\]
the set of embeddings of $\eta$-lines into the affine plane $\AA^2 = \overline{\FF_q}^2$. These embeddings are very useful when trying to characterise restrictions of weighted Reed-Muller codes to $\eta$-lines.

\begin{proposition}
    \label{prop:restriction}
    Any polynomial $f \in \FF_q[X,Y]$ whose evaluation over $\FF_q^2$ lies in $\WRM^\eta_q(d)$ satisfies $\ev_{\FF_q}(f \circ L) \in \RS_q(d)$ for any $L \in \Phi_\eta$.
\end{proposition}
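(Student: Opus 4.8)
The plan is to track what happens to the weighted degree of a monomial under the substitution $Y \mapsto \phi(X)$ induced by composing with $L = L_\phi \in \Phi_\eta$. Write $f(X,Y) = \sum_{(i,j)} a_{ij} X^i Y^j$ with $\wdeg_\eta(f) = \max\{ i + \eta j : a_{ij} \ne 0 \} \le d$. Composing with $L$ gives the univariate polynomial $f(X, \phi(X)) = \sum_{(i,j)} a_{ij} X^i \phi(X)^j$, and the goal is to show this has (ordinary) degree at most $d$, so that its evaluation over $\FF_q$ lies in $\RS_q(d)$.

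The key observation is monomial-by-monomial: since $\deg \phi \le \eta$, each term $X^i \phi(X)^j$ has degree at most $i + \eta j = \wdeg_\eta(X^i Y^j) \le d$. Summing over the finitely many nonzero terms, $\deg\big(f(X,\phi(X))\big) \le \max\{ i + \eta j : a_{ij} \ne 0\} = \wdeg_\eta(f) \le d$, exactly the bound we want. There is no cancellation subtlety to worry about in the wrong direction — cancellation can only decrease the degree — so the inequality is all we need. Finally, $\ev_{\FF_q}(f \circ L)$ is the evaluation of a univariate polynomial of degree $\le d$ at the $q$ points $\FF_q$, which is by definition a codeword of $\RS_q(d)$.

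I do not anticipate a genuine obstacle here; the statement is essentially a restatement of the compatibility between the weighting $\omega = (1,\eta)$ and the degree bound $\deg\phi \le \eta$ defining $\eta$-lines. The one point worth stating carefully is that the claim is about the polynomial $f$ itself, not merely its evaluation vector: a priori $\ev_{\FF_q^2}(f) \in \WRM^\eta_q(d)$ only guarantees that \emph{some} representative has weighted degree $\le d$, but since the proposition quantifies over $f$ with $\wdeg_\eta(f) \le d$ this is not an issue, and in the regime $d \le q-1$ (where $\ev_{\FF_q^2}$ is injective, as noted in the text) the representative is in any case unique. So the proof is just the degree count above, phrased cleanly.
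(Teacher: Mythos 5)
Your proof is correct and follows essentially the same route as the paper's: the paper also reduces to a single monomial $X^iY^j$ with $i+\eta j\le d$ and observes that $(f\circ L_\phi)(T)=T^i\phi(T)^j$ has degree at most $d$ since $\deg\phi\le\eta$. Your extra remark about the quantification over $f$ versus its evaluation class is a reasonable clarification but does not change the argument.
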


\begin{proof}
It is sufficient to check the result on monomials. Let $f=X^iY^j$ where $i+\eta j \leq d$. For every $\phi \in \Phi_\eta$, the univariate polynomial $(f \circ L_\phi)(T) = T^i \phi(T)^j$ has degree less than $d$.
\end{proof}

\subsection{Local correction}

Local decoding was introduced by Katz and Trevisan~\cite{KatzT00} in order to characterise codes allowing to (probabistically) retrieve a message coordinate with a sublinear number of queries in the code length $n$. The difficulty comes from the fact that the retrieval must succeed with non-negligeable probability for \emph{every} codeword which is corrupted by \emph{any} possible error whose weight is bounded by a linear function in $n$. Local correction is very similar to local decoding, the only difference being that one requires that any coordinate of the \emph{codeword} can be retrieved.

Before giving a formal definition of this notion, let us introduce some notation. We denote the Hamming distance between two vectors $\bfx, \bfy$ by $d_H(\bfx, \bfy)$. The weight of $\bfx$ is ${\rm wt}(\bfx) \mydef d_H(\bfx, \bfzero)$. An \emph{erasure} is a symbol of a word that one knows to be erroneous. Finally, we denote\footnote{take care that this notation (with $\le d$ instead of $<k$) is not the most currently used, but remains very convenient for our work} the full-length Reed-Solomon code by
\[
    \RS_q(d) \mydef \{ \ev_{\FF_q}(f), f \in \FF_q[T], \deg(f) \le d \}\,,
\]
and we recall that $\RS_q(d)$ can correct efficiently $1$ erasure and up to $\lfloor \frac{n-d}{2} \rfloor$ errors.

\begin{definition}[locally correctable code]
    Let $1 \le \ell \le k \le n$, and $\delta, \epsilon > 0$. A code $\calC \subseteq \FF_q^n$ is said $(\ell, \delta, \epsilon)$-locally correctable if there exists a probabilitic algorithm ${\rm Dec} : [1,n] \to \FF_q$ such that the following holds. For every $1 \le i \le n$ and for every $\bfy \in \FF_q^n$ such that $d_H(\bfy, \bfc) \le \delta n$ for some $\bfc \in \calC$, we have:
    \begin{itemize}
    \item the probability\footnote{taken over the internal randomness of the decoder ${\rm Dec}$} that ${\rm Dec}(i)$ outputs $c_i$ is larger than $1 - \epsilon$;
    \item ${\rm Dec}(i)$ reads at most $\ell$ coordinates of $\bfy$.
    \end{itemize}
\end{definition}

Similarly to the case of classical Reed-Muller codes and codes derived from those, weighted Reed-Muller codes can be locally corrected using their restrictions to ``lines''. For simplicity, we see a vector $\bfy \in \FF_q^{q^2}$ as a map $\FF_q^2 \to \FF_q$, using the bijection between $[1,q^2]$ and $\FF_q^2$ given by the evaluation map. Similarly, $\bfa \in \FF_q^q$ is seen as a map $\FF_q \to \FF_q$. One obtains the local correction procedure described in Algorithm~\ref{algo:decodage-local}.

\begin{algorithm}
    \KwData{A coordinate $\bfx = (x_1, x_2) \in \FF_q^2$ where to decode, and a oracle access to a word $\bfy : \FF_q^2 \to \FF_q$, where $\bfy = \bfc + \bfe$, $\bfc \in \calC$, and ${\rm wt}(\bfe) \le \delta q^2$.}
    \KwResult{The symbol $c_\bfx$, with high probability.}
    Pick at random an $\eta$-line $L \in \Phi_\eta$ such that $L(t_0) = \bfx$ for some $t_0 \in \FF_q$.\\
    Define $S = L(\FF_q)$ and $\bfz = \bfy_{|S} : \FF_q \mapsto \FF_q$.\\
    Consider $z_{t_0}$ as an erasure, and decode $\bfz$ in the Reed-Solomon code $\RS_q(d+1)$, giving a corrected codeword $\tilde{\bfz}$.\\
    Output the corrected value $\tilde{z}_{t_0}$.\\
    \caption{\label{algo:decodage-local} A local correction algorithm ${\rm Dec}$ for the weighted Reed-Muller code $\WRM_q^\eta(d)$.}
\end{algorithm}

According to Katz and Trevisan's terminology~\cite{KatzT00}, Algorithm~\ref{algo:decodage-local} is not \emph{perfectly smooth}, in the sense that the coordinate $y_\bfx$ is never queried. nevertheless, it can be made smooth following techniques described in~\cite[Chapter $2$]{JLthese}.

\begin{theorem}
    Let $\eta \ge 1$, $q$ be a prime power, and $\gamma \in (0,1)$ such that $q - \lfloor \gamma q \rfloor$ is even. For every $\delta \le \frac{1 - \gamma}{4}$, the weighted Reed-Muller code $\WRM^\eta_q( \lfloor \gamma q \rfloor)$ is $(q-1, \delta, \epsilon)$-locally correctable where $\epsilon \le \frac{2}{1 - \gamma}\delta$.
\end{theorem}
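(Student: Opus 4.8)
The overall strategy is to show that Algorithm~\ref{algo:decodage-local} satisfies the two requirements in the definition of an $(\ell,\delta,\epsilon)$-locally correctable code, with $\ell = q-1$. The query bound is immediate: the algorithm reads only the $q$ coordinates indexed by $S = L(\FF_q)$, and among these the coordinate $z_{t_0}$ at $\bfx$ is treated as an erasure and never actually read from the oracle, so at most $q-1$ coordinates of $\bfy$ are queried. The heart of the proof is therefore the correctness probability. First I would fix $\bfc = \ev_{\FF_q^2}(f) \in \WRM_q^\eta(d)$ with $d = \lfloor\gamma q\rfloor$, an error $\bfe$ with ${\rm wt}(\bfe) \le \delta q^2$, and the target coordinate $\bfx$. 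By Proposition~\ref{prop:restriction}, for any $L \in \Phi_\eta$ through $\bfx$ the restriction $\bfc_{|S}$ lies in $\RS_q(d)$, hence also in $\RS_q(d+1)$. Since $\RS_q(d+1)$ corrects one erasure plus up to $\lfloor(q-(d+1))/2\rfloor$ errors, the decoding in step~3 returns the correct $\tilde z_{t_0} = c_\bfx$ provided the number of coordinates of $S \setminus \{\bfx\}$ on which $\bfy$ and $\bfc$ disagree is at most $\lfloor(q-d-1)/2\rfloor$. With $q - d = q - \lfloor\gamma q\rfloor$ assumed even, this threshold is $(q-\lfloor\gamma q\rfloor)/2 - 1 \ge (1-\gamma)q/2 - 1$; it suffices (for the clean bound $\epsilon \le \frac{2}{1-\gamma}\delta$) to ensure the line hits at most $\lceil (1-\gamma)q/2 \rceil - 1$ error positions, or more simply fewer than $(1-\gamma)q/2$ of them.

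The key probabilistic step is then to bound, over the random choice of $L$, the expected number of error positions on $L(\FF_q)\setminus\{\bfx\}$. The right way to do this is to observe that $\Phi_\eta$ is parametrized by the coefficients of $\phi \in \FF_q[T]$ with $\deg\phi \le \eta$, so $|\Phi_\eta| = q^{\eta+1}$, and the conditioning $L(t_0) = \bfx$ for some $t_0$ can be handled by first picking $t_0 \in \FF_q$ uniformly (this will be $x_1$, or we simply set $t_0$ so that the first coordinate of $\bfx$ is $t_0$) and then picking $\phi$ uniformly subject to $\phi(t_0) = x_2$; the latter is a uniform choice among $q^\eta$ polynomials. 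For any fixed point $\bfp = (p_1,p_2) \ne \bfx$, I claim $\PP[\bfp \in L(\FF_q)] \le 1/q$: indeed $\bfp$ lies on $L_\phi$ iff $\phi(p_1) = p_2$, and among the $q^\eta$ admissible $\phi$ (those with $\phi(t_0)=x_2$), if $p_1 = t_0$ the event is impossible (since $p_2 \ne x_2$), while if $p_1 \ne t_0$, interpolation at the two distinct points $t_0, p_1$ leaves $q^{\eta-1}$ choices, giving probability exactly $1/q$. Hence by linearity of expectation the expected number of error positions met by $L$ (excluding $\bfx$) is at most $\frac{1}{q}\,{\rm wt}(\bfe) \le \delta q$.

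Finally I would close with Markov's inequality: the decoding fails only if $L$ meets more than $\lfloor(q-d-1)/2\rfloor$ error positions, and since this threshold is at least $(1-\gamma)q/2 - 1$ — and under $\delta \le \frac{1-\gamma}{4}$ one checks $\delta q < \tfrac12\bigl((1-\gamma)q/2\bigr)$, so the threshold comfortably exceeds twice the mean for $q$ not too small, with the boundary terms absorbed — we get
\[
    \epsilon \;=\; \PP[\text{fail}] \;\le\; \frac{\delta q}{(q - \lfloor\gamma q\rfloor)/2} \;\le\; \frac{\delta q}{(1-\gamma)q/2} \;=\; \frac{2}{1-\gamma}\,\delta\,.
\]
The main obstacle I anticipate is purely bookkeeping around the $\pm 1$ from the erasure and the floor functions: one must verify that the even-ness hypothesis on $q - \lfloor\gamma q\rfloor$ makes $\lfloor (q-d-1)/2\rfloor = (q-d)/2 - 1$ exactly, and that the condition $\delta \le \frac{1-\gamma}{4}$ is exactly what is needed so that $\delta q^2$ errors cannot (in expectation, hence after Markov) overwhelm this threshold; there is no deep difficulty, but the constants have to line up. A secondary point worth a sentence is that the uniform-in-$\Phi_\eta$-through-$\bfx$ sampling genuinely realizes the $1/q$ hitting bound uniformly over all target points $\bfx$, so the argument works for every coordinate as required by the definition.
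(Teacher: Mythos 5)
Your proposal is correct and follows essentially the same route as the paper: restrict to a random $\eta$-line through $\bfx$, treat the target position as an erasure, note the restriction lies in a Reed--Solomon code correcting one erasure plus $w=\frac{q-\lfloor\gamma q\rfloor}{2}-1$ errors, bound the expected number of corrupted queried positions by $\delta q$ (your direct interpolation count giving $\PP[\bfp\in L(\FF_q)]\le 1/q$ is a cleaner justification than the paper's stated bound, but yields the same estimate), and conclude by Markov's inequality. The $\pm1$ bookkeeping you flag, including whether one decodes in $\RS_q(d)$ or $\RS_q(d+1)$, mirrors a small inconsistency already present between the paper's Algorithm~\ref{algo:decodage-local} and its proof, and does not affect the argument.
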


\begin{proof}
    Let $\bfy = \bfc + \bfe : \FF_q^2 \to \FF_q$ be a corrupted codeword, where $\bfc \in \WRM_q^\eta(d)$ and ${\rm wt}(\bfe) \le \delta q^2$. We define $E = \{ \bfx \in \FF_q^2 \mid e_\bfx \ne 0 \}$ the support of $\bfe$. The random variable representing the set of queries addressed by the local decoder is denoted by $A_\bfx$. It is clear that the algorithm succeeds if $|A_\bfx \cap E| \le w$, where $w = \frac{q -  \lfloor \gamma q \rfloor}{2} - 1$, since a Reed-Solomon of dimension $\lfloor \gamma q \rfloor+1$ can decode up to $1$ erasure and $w$ errors. Using Markov's inequality, the probability $p$ of success of Algorithm~\ref{algo:decodage-local} satisfies:
    \[
        p \ge 1 - \PP(|A_\bfx \cap E| \ge w + 1) \ge 1 - \frac{\mathbb{E}(|A_\bfx \cap E|)}{w+1}\,.
    \]
    Moreover, for every $\bfa \in \FF_q^2$, we have $\PP(\bfa \in A_\bfx) \le \frac{q-1}{q^2-1}$. Hence,
    \[
        \mathbb{E}(|A_\bfx \cap E|) = \sum_{\bfa \in E} \PP(\bfa \in A_\bfx) \le \delta q^2 \cdot \frac{q-1}{q^2-1} \le \delta q\,.
    \]
    Finally we get
    \[
        p \ge 1 - \frac{4 \delta q}{q - \lfloor \gamma q \rfloor} \ge 1 - \frac{2 \delta}{1 - \gamma}\,.
    \]
\end{proof}

\begin{remark}
    If $\eta \ge 2$, it is possible to get a sharper bound for the probability $p$ of success of Algorithm~\ref{algo:decodage-local}. Using Chebyshev's inequality (quite similarly to~\cite[Proposition 2.36]{JLthese}), one can indeed prove that $p \ge 1 - {\cal O}\big(\frac{\delta(1-\delta)}{q}\big)$. 
\end{remark}

\section{Application to private information retrieval}
\label{sec:PIR}

Private information retrieval (PIR) protocols are cryptographic protocols ensuring that a user can retrieve an entry $D_i$ of a remote database $D = (D_1, \dots, D_k)$, without revealing any information on the index $i \in [1, k]$ to the holder of the database. Additionally, it is also required that the communication cost (number of bits exchanged during the retrieval process) is sublinear in the size of the database.

Since its introduction by Chor, Goldreich, Kushilevitz and Sudan in 1995~\cite{ChorGKS95}, various kinds of PIR schemes have been designed according to the system constraints. In earliest PIR schemes, one assumes that the database is replicated over $\ell$ non-communicating honest-but-curious servers $S_1, \dots, S_\ell$. In this context the seminal result of Katz and Trevisan~\cite{KatzT00} --- which relates PIR protocols to the existence of so-called \emph{smooth locally decodable codes} --- induced many new constructions of PIR schemes, notably in~\cite{BeimelIKR02, Yekhanin08, Efremenko12, DvirG16}. These constructions eventually achieved $O(\exp(\sqrt{\log k \log \log k}))$ bits of communication for a $k$-entry database replicated on $\ell=2$ servers.

Motivated by the use of storage codes in distributed storage systems, a large amount of recent works focused on the case where the database is \emph{encoded} on the servers. In this context, entries of the database are usually very large (\emph{e.g.} movies), so that we can assume that the \emph{download} communication cost prevails over the upload one. Several works aimed at minimizing this cost depending on the storage system: Shah, Rashmi and Ramchandran~\cite{ShahRR14} considered the replication code as the storage code; Tajeddine, Gnilke and El Rouayheb~\cite{TajeddineGR18} MDS codes; Kumar, Rosnes and Graell i Amat~\cite{KumarRA17} arbitrary codes.

It is worth noticing that, following \emph{e.g.}~Beimel and Stahl~\cite{BeimelS02}, a few works also considered the more restrictive setting of colluding servers (\emph{i.e.} servers communicating with each other so as to collect information about the required item), byzantine servers (\emph{i.e.} servers able to produce wrong answers to user's queries) or unresponsive servers (servers unable to give ananswer to user's queries).

Finally, one should emphasise that families of PIR schemes referenced above mostly focus on decreasing the communication cost during the retrieval process. This is done at the expense of other crucial parameters, such as the computation complexity of the recovery, or the servers' storage overhead.

In this section, we show how the local properties of weighted Reed-Muller codes $\WRM^\eta_q(d)$ lead to very natural PIR protocols resisting to any set of $b$ byzantine, $u$ unresponsive and $t$ colluding servers --- provided that $2b + u + t \le q-d-1$ --- with moderate communication complexity but optimal computation complexity.

\subsection{Definitions}

\begin{definition}[private information retrieval]
    Let $D \in \FF_q^k$ be a remote database distributed on $\ell$ servers $S_1, \dots, S_\ell$, in such a way\footnote{Notice that we make no other assumption on the way (replication, encoding, etc.) the database is stored on the servers. We only require that the encoding map $D \mapsto (\bfc^{(1)}, \dots, \bfc^{(\ell)})$ is injective.} that we assume that each server $S_j$ stores a vector $\bfc^{(j)} \in \FF_q^m$.
    A \emph{private information retrieval (PIR)} protocol for $D$ is a tuple of algorithms $(\Query, \Answer, \Recover)$ such that:
    \begin{enumerate}
    \item $\Query$ is a probabilistic algorithm taking as input a coordinate $i \in [1,k]$, and providing a random tuple of \emph{queries} $\Query(i) = (q_1, \dots, q_\ell) \in \calQ^\ell$ for some finite set $\calQ$;
    \item $\Answer$ is a deterministic algorithm taking as input a server index $j \in [1, \ell]$, a query $q_j \in \calQ$ and the vector $\bfc^{(j)}$ stored by server $S_j$, and outputs an \emph{answer} $a_j \in \calA$, where $\calA$ is a finite set;
    \item $\Recover$ is a deterministic algorithm taking as input a coordinate $i \in [1, k]$, a tuple of queries $\bfq = (q_1, \dots, q_\ell) \in \calQ^\ell$ and a tuple of answers $\bfa = (a_1, \dots, a_\ell) \in \calA^\ell$, and which outputs a symbol $r \in \FF_q$ satisfying the following requirement. If $\bfq = \Query(i)$ and $\bfa = ( \Answer(j, q_j, \bfc^{(j)}) )_{1 \le j \le \ell}$, then:
    \begin{equation}
        \label{eq:def-PIR}
         D_i = \Recover (i, \bfq, \bfa)\,.
    \end{equation}
    \end{enumerate}
    We also say that a PIR protocol 
    \begin{itemize}
    \item[--] is \emph{$t$-private} (or resists to any \emph{collusion} of $t$ servers) if for every $T \subset [1,\ell]$, $|T| = t$, we have
    \[
        {\rm I}(\Query(i)_{|T}\;;\,i) = 0,
    \]
    where ${\rm I}(\cdot\;;\,\cdot)$ denotes the mutual information between random variables;
    \item[--] is \emph{robust against $b$ byzantine and $u$ unresponsive servers} if \eqref{eq:def-PIR} holds when up to $b$ symbols of $\bfa = (\Answer(j, q_j, \bfc^{(j)}))_{1 \le j \le \ell} \in \calA^\ell$ differ from the expected ones, and up to $u$ symbols of $\bfa$ are missing.
    \end{itemize}
\end{definition}

Let us now define some of the most studied parameters of PIR protocols.

\begin{definition}
    Let $(\Query, \Answer, \Recover)$ be a PIR protocol. We define:
    \begin{itemize}
    \item[--] its \emph{communication complexity} as $C_{\rm comm} \mydef \ell (\log(|\calQ|) + \log(|\calA|))$;
    \item[--] its  \emph{server computation complexity}, denoted $C^s_{\rm comp}$ as the maximal number of operations over $\FF_q$ necessary to compute $\Answer(j, q_j, \bfc^{(j)})$;
    \item[--] its \emph{storage rate} as the ratio $\frac{k}{\ell m}$.
    \end{itemize}
We finally say that a PIR protocol is \emph{computationally optimal for the servers} if $C^s_{\rm comp} \le 1$.
\end{definition}

\subsection{The PIR protocol}

We present in this section a PIR protocol based on weighted Reed-Muller codes. The protocol relies on a well-suited splitting of the encoded database over the servers, as it was originally done by Augot, Levy-dit-Vehel and Shikfa in~\cite{AugotLS14}

\begin{protocol}
    \label{protocol:PIR}
    Let $\calC = \WRM_q^\eta(d)$, and denote its dimension by $k$. Recall that a codeword $\bfc \in \calC$ can be seen as a map $\FF_q^2 \to \FF_q$. Let us also consider $q$ servers $(S_t)_{t \in \FF_q}$ indexed by elements of $\FF_q$.

    {\bf Initialisation.} The database $D \in \FF_q^k$ is encoded into a codeword $\bfc \in \calC$. For every $t \in \FF_q$, the server $S_t$ receives the part $\bfc_{|\{t\} \times \FF_q}$ of the codeword $\bfc$. Notice that $\bfc_{|\{t\} \times \FF_q}$ consists in $q$ symbols over $\FF_q$.
    
    {\bf Queries.} Assume one wants to retrieve $D_i$, for $1 \le i \le k$. One can always assume that the encoding map is systematic, hence $D_i = c_\bfx$ for some $\bfx = (x_1, x_2) \in \FF_q^2$. To define a vector of queries:
    \begin{itemize}
    \item Pick at random an $\eta$-line $L \in \Phi_\eta$ such that $L(t_0) = \bfx$ for some $t_0 \in \FF_q$.
    \item The server $S_{t_0}$ receives a random element $y_{t_0} \in \FF_q$.
    \item Server $S_t, t \ne t_0$ receives $y_t \in \FF_q$ such that $(t, y_t) = L(t)$.
    \end{itemize}
    
    {\bf Answers.} Upon receipt of $y_t \in \FF_q$, every server $S_t$ reads the entry $c_{(t, y_t)} \in \FF_q$ and sends it back to the user. 
    
    {\bf Recovery.} The user collects $\bfc' = ( c_{(t, y_t)} )_{t \in \FF_q}$ and runs an error-and-erasure correcting algorithm for $\RS_q(d)$ with input $\bfc'$. Then, the user returns the corrected symbol $c'_{(t_0, y_{t_0})}$.
\end{protocol}

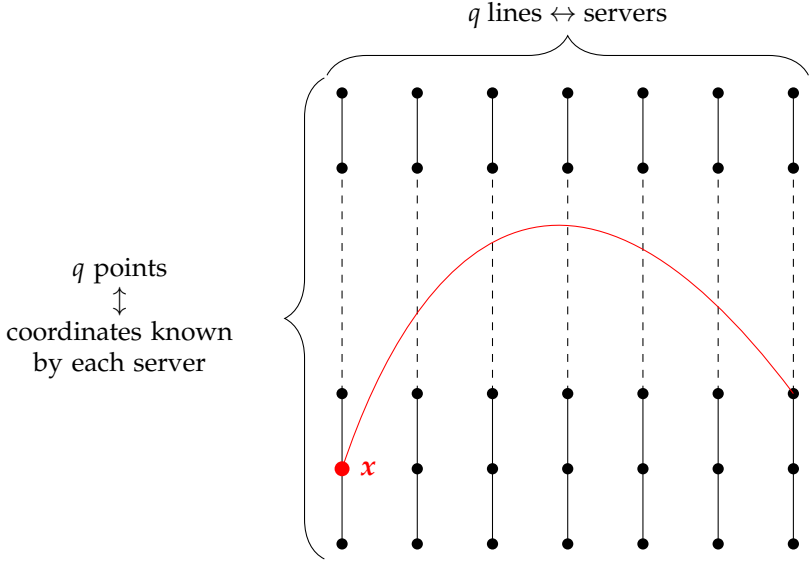
\begin{figure}[t]
\begin{tikzpicture}
\draw [-] (0,0) node[circle,inner sep=1.5pt,fill]{} -- (0,1)node[circle,inner sep=1.5pt,fill]{} -- (0,2)node[circle,inner sep=1.5pt,fill]{};
\draw [-,dashed] (0,2) -- (0,5)node[circle,inner sep=1.5pt,fill]{} ;
\draw [-] (0,5) -- (0,6)node[circle,inner sep=1.5pt,fill]{} ;

\draw [-] (1,0) node[circle,inner sep=1.5pt,fill]{} -- (1,1)node[circle,inner sep=1.5pt,fill]{} -- (1,2)node[circle,inner sep=1.5pt,fill]{};
\draw [-,dashed] (1,2) -- (1,5)node[circle,inner sep=1.5pt,fill]{} ;
\draw [-] (1,5) -- (1,6)node[circle,inner sep=1.5pt,fill]{} ;

\draw [-] (2,0) node[circle,inner sep=1.5pt,fill]{} -- (2,1)node[circle,inner sep=1.5pt,fill]{} -- (2,2)node[circle,inner sep=1.5pt,fill]{};
\draw [-,dashed] (2,2) -- (2,5)node[circle,inner sep=1.5pt,fill]{} ;
\draw [-] (2,5) -- (2,6)node[circle,inner sep=1.5pt,fill]{} ;

\draw [-] (3,0) node[circle,inner sep=1.5pt,fill]{} -- (3,1)node[circle,inner sep=1.5pt,fill]{} -- (3,2)node[circle,inner sep=1.5pt,fill]{};
\draw [-,dashed] (3,2) -- (3,5)node[circle,inner sep=1.5pt,fill]{} ;
\draw [-] (3,5) -- (3,6)node[circle,inner sep=1.5pt,fill]{} ;

\draw [-] (4,0) node[circle,inner sep=1.5pt,fill]{} -- (4,1)node[circle,inner sep=1.5pt,fill]{} -- (4,2)node[circle,inner sep=1.5pt,fill]{};
\draw [-,dashed] (4,2) -- (4,5)node[circle,inner sep=1.5pt,fill]{} ;
\draw [-] (4,5) -- (4,6)node[circle,inner sep=1.5pt,fill]{} ;

\draw [-] (5,0) node[circle,inner sep=1.5pt,fill]{} -- (5,1)node[circle,inner sep=1.5pt,fill]{} -- (5,2)node[circle,inner sep=1.5pt,fill]{};
\draw [-,dashed] (5,2) -- (5,5)node[circle,inner sep=1.5pt,fill]{} ;
\draw [-] (5,5) -- (5,6)node[circle,inner sep=1.5pt,fill]{} ;

\draw [-] (6,0) node[circle,inner sep=1.5pt,fill]{} -- (6,1)node[circle,inner sep=1.5pt,fill]{} -- (6,2)node[circle,inner sep=1.5pt,fill]{};
\draw [-,dashed] (6,2) -- (6,5)node[circle,inner sep=1.5pt,fill]{} ;
\draw [-] (6,5) -- (6,6)node[circle,inner sep=1.5pt,fill]{} ;

\draw [-,red] (0,1)node[circle,inner sep=2pt,red,fill, label=right:$\bfx$]{} .. controls (1,4) and (3,6) .. (6,2);

\draw [decorate, decoration={brace,amplitude=15pt,raise=1pt}] (-0.2,-0.2) -- (-0.2,6.2) 
            node [midway, xshift=-1mm, font=\small, outer sep=15pt, align=center, left, text width=4cm]{$q$ points \\
            $\updownarrow$\\
            coordinates known by each server};
            
\draw [decorate, decoration={brace,amplitude=15pt,raise=1pt}] (-0.2,6.2) -- (6.2,6.2) 
            node [midway, font=\small, outer sep=15pt, above]{$q$ lines $\leftrightarrow$ servers};            

\end{tikzpicture}
\caption{Illustration of the retrieval process. For a desired coordinate $c_\bfx$, an $\eta$-line $L$ (in red) containing $\bfx$ is picked at random.}\label{pir2}
\end{figure}

\begin{theorem}
    Let $q$ be a prime power, $\eta \ge 1$ , and $b, u \ge 0$. Set $d = q - u - 2b - 2$. Then, Protocol~\ref{protocol:PIR} equipped with $\WRM^\eta_q(d)$ is $\eta$-private and robust against $b$ byzantine and $u$ unresponsive servers. Moreover, it is computationally optimal for the servers, its storage rate approaches $1/{2\eta}$ when $q \to \infty$, and its communication complexity is $2 q \log q$.
\end{theorem}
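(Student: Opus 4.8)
The plan is to verify each of the five asserted properties of Protocol~\ref{protocol:PIR} in turn: correctness, $\eta$-privacy, robustness against byzantine and unresponsive servers, server-side computational optimality, and the storage rate and communication complexity figures.

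First I would handle \textbf{correctness and robustness together}, since they share the same Reed-Solomon decoding argument. By Proposition~\ref{prop:restriction}, the true codeword $\bfc \in \WRM^\eta_q(d)$ restricts along the chosen $\eta$-line $L \in \Phi_\eta$ to $\ev_{\FF_q}(f \circ L) \in \RS_q(d)$, so the vector $\bfc' = (c_{(t,y_t)})_{t \in \FF_q}$ is a genuine $\RS_q(d)$ codeword except at the single coordinate $t_0$, where the user deliberately introduced a random value $y_{t_0}$; the user treats position $t_0$ as an erasure. On top of this, up to $b$ coordinates are corrupted (byzantine servers) and up to $u$ coordinates are missing (unresponsive servers), which the user also treats as erasures. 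So the received word differs from a codeword of $\RS_q(d)$ (length $n = q$) in at most $b$ errors and $1 + u$ erasures. A full-length Reed-Solomon code $\RS_q(d)$ corrects $e$ erasures and $t$ errors whenever $e + 2t \le n - d - 1 = q - d - 1$; with $d = q - u - 2b - 2$ this budget is exactly $q - d - 1 = u + 2b + 1 \ge (u+1) + 2b$, so the error-and-erasure decoder recovers the unique codeword, in particular its value at $t_0$, which by the systematic encoding equals $D_i$. Hence $\Recover$ returns $D_i$, giving both \eqref{eq:def-PIR} and robustness.

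Next, \textbf{$\eta$-privacy}. Fix any set $T \subset \FF_q$ of $t = \eta$ servers. I must show the joint distribution of the queries $(y_t)_{t\in T}$ is independent of the target coordinate $\bfx = (x_1,x_2)$, equivalently of $t_0$ and the value $x_2$. There are two cases. If $t_0 \notin T$, then for $t \in T$ we have $y_t = \phi(t)$ where $\phi$ is a uniformly random polynomial of degree $\le \eta$; evaluating a uniformly random degree-$\le\eta$ polynomial at any fixed set of $\eta$ distinct points gives a uniform element of $\FF_q^{\eta}$, independent of the remaining interpolation data, so $(y_t)_{t\in T}$ is uniform on $\FF_q^\eta$ regardless of $\bfx$. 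If $t_0 \in T$, then $y_{t_0}$ is uniform and independent of everything else by construction, while the other $\eta - 1$ coordinates are $\phi$ evaluated at $\eta - 1$ fixed points of a random degree-$\le\eta$ polynomial constrained to pass through $(x_1, x_2)$; even with that one constraint, the distribution of a degree-$\le\eta$ polynomial at $\eta-1$ further points is uniform on $\FF_q^{\eta-1}$. Combining, $(y_t)_{t\in T}$ is uniform on $\FF_q^\eta$ in all cases, so its mutual information with $i$ is zero. (One should be mildly careful that $\Phi_\eta$ ranges over $\phi \in \FF_q[T]$ with $\deg\phi \le \eta$, i.e.\ the $(\eta+1)$-dimensional space, which is exactly what makes evaluation at $\eta$ points a uniform surjection; this is the step I expect to need the most care.)

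Finally, the \textbf{efficiency claims} are essentially bookkeeping. Each server $S_t$, on query $y_t \in \FF_q$, simply reads the single stored symbol $c_{(t,y_t)}$ from its vector $\bfc_{|\{t\}\times\FF_q} \in \FF_q^q$ and returns it; this is at most one $\FF_q$-operation (indeed a lookup), so $C^s_{\rm comp} \le 1$ and the protocol is computationally optimal for the servers. For the storage rate: $\ell = q$ servers each store $m = q$ symbols, and $k = \dim \WRM^\eta_q(d)$, so the rate is $k/(\ell m) = k/q^2$; with $d = q - u - 2b - 2 \le q-1$ the evaluation map is injective, and by the dimension count (cf.\ S{\o}rensen~\cite{Sor}, or the degree-set analysis of Section~\ref{sec:degree-set}) one has $k = \#\{(i,j) : i + \eta j \le d,\ 0 \le i \le q-1\} \sim q^2/(2\eta)$ as $q \to \infty$ with $\eta$ fixed and $d = q - O(1)$, giving storage rate $\to 1/(2\eta)$. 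The communication complexity is $C_{\rm comm} = \ell(\log|\calQ| + \log|\calA|) = q(\log q + \log q) = 2q\log q$, since each query and each answer is a single element of $\FF_q$. Assembling these four verifications completes the proof.
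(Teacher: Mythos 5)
Your proposal is correct and follows essentially the same route as the paper's (much terser) proof: correctness and robustness via Proposition~\ref{prop:restriction} plus error-and-erasure decoding of $\RS_q(d)$ with $b$ errors and $u+1$ erasures, privacy from the uniformity of $\eta$ evaluations of a random degree-$\le\eta$ polynomial through a fixed point, and the parameters by direct counting. Your write-up simply supplies the details the paper leaves as "easily checked."
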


\begin{proof}
    The correctness of the PIR scheme, under $b$ byzantine and $u$ unresponsive servers, comes from  Proposition~\ref{prop:restriction} and from the fact that $\RS_q(d)$ corrects $b$ errors and $u+1$ erasures if $d \ge q - u - 2b - 2$. Moreover, the scheme is $\eta$-private since any subset of $\eta$ points of an $\eta$-line gives no information about the other points. Finally, the parameters of the scheme can be easily checked.
\end{proof}

\section{Towards higher information rate: the lifting process}
\label{sec:lift}

\subsection{Definitions}

In previous sections, we have proved that weighted Reed-Muller codes admit local properties that can be used in practical applications such as private information retrieval. However, such constuctions are moderately efficient in terms of storage, since the information rate of $\WRM^\eta_q(d)$ is bounded by $1/2\eta$ if $d \le q-2$.

In this section, we show how to construct codes with the same local properties as weighted Reed-Muller codes, but admitting a much larger dimension. As a practical consequence, these new codes can replace weighted Reed-Muller codes in Protocol~\ref{protocol:PIR}, leading to storage-efficient PIR schemes.

Techniques involved in the construction of these codes directly follow the lifting process initiated by Guo, Kopparty and Sudan~\cite{GuoKS13}. More precisely, the authors introduce so-called \emph{lifted Reed-Solomon codes} as codes containing (classical) Reed-Muller codes, and satisfying that the restriction of any codeword to any affine line lies in a Reed-Solomon. The purpose of this section is to extend this notion to $\eta$-lines.

We thus naturally introduce the $\eta$-lifting of a Reed-Solomon code as follows.

\begin{definition}[$\eta$-lifting of a Reed-Solomon code]
    Let $q$ be a prime power and $0 \le d \le q-1$. The \emph{$\eta$-lifting of the Reed-Solomon code $\RS_q(d)$} is the code of length $n = q^2$ defined as follows:
    \[
        \Lift^\eta (\RS_q(d)) \mydef \{ \ev_{\FF_q^2}(f) \mid f \in \FF_q[X,Y], \forall L \in \Phi_\eta, \ev_{\FF_q}(f \circ L) \in \RS_q(d) \}\,. 
    \]
\end{definition}

Notice that if $d=q-1$, the $\eta$-lifted code $\Lift^\eta (\RS_q(q-1))$ is the trivial full space $\FF_q^{q^2}$. Hence, from now on we assume $d \le q-2$.

It is clear that $\WRM^\eta_q(d) \subseteq \Lift^\eta (\RS_q(d))$ since the constraints that define $\eta$-lifted codes are satisfied by each codeword of a comparable weighted Reed-Muller code. But quite surprisingly, the code $\Lift^\eta (\RS_q(d))$ is sometimes much larger than $\WRM^\eta_q(d)$. Let us highlight this claim with an example.

\begin{example}
    Let $q=4$, $\eta=2$ and $d=2$. The associated weighted Reed-Muller code is generated by the evaluation vectors of monomials $X^iY^j$, where $(i,j)$ lies in
    \[
    \{ (0,0), (0,1), (1,0), (2,0) \}\,.
    \]
    Let us now consider the monomial $f(X,Y) = Y^2 \in \FF_4[X,Y]$ and an $\eta$-line $L(T) = (T, a T^2 + b T + c) \in \Phi_2$, where $a, b, c \in \FF_4$.  We see that for every $t \in \FF_4$, we have:
    \[
    (f \circ L)(t) = (a t^2 + b t + c)^2 = a^2 t^4 + b^2 t^2 + c^2 = b^2 t^2 + a^2 t + c\,.
    \]
    Hence, $\ev_{\FF_4}(f \circ L) \in \RS_4(2)$ for every $L \in \Phi_2$. Since $\wdeg_\eta(f) = 4 > 2$, we get
    \[
        \ev_{\FF_4^2}(f) \in \Lift^2(\RS_4(2)) \setminus \WRM^2_4(2)\,.
    \]
\end{example}

Given a polynomial $f(X,Y) = \sum_{i,j} f_{i,j} X^iY^j \in \FF_q[X,Y]$, we define its \emph{degree set} as 
\[
\Deg(f) \mydef \{ (i,j) \in \NN^2, f_{i,j} \ne 0 \}\,.
\]
By extension, the degree set $\Deg(S)$ of a subset $S \subseteq \FF_q[X,Y]$ is the union of degree sets of polynomials lying in $S$. Similarly, if $\calC = \{ \ev_{\FF_q^2}(f), f \in S \}$, then we set $\Deg(\calC) \mydef \Deg(S)$. 

\begin{remark}
    Since $a^q = a$ for every $a \in \FF_q$, one can consider degree sets as subsets of $[0, q-1]^2$. This precisely corresponds to considering polynomials modulo the ideal $I = \langle X^q-X, Y^q-Y \rangle = \ker \ev_{\FF_q^2}$.
\end{remark}


\begin{lemma}
    \label{lem:aux1}
    Let $f \in \FF_q[X,Y]$ such that $\Deg(f) \subseteq [0,q-1]^2$, and let $(i, j) \in \Deg(f)$. Assume that for every $(a,b) \in \Deg(f)$, we have $i \ge a$ (respectively, $j \ge b$). Then, there exists an $\eta$-line $L \in \Phi_\eta$ such that $\deg(f \circ L) = i$ (respectively, $\deg(f \circ L) = j$). 
\end{lemma}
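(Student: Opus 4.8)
The plan is to treat the two parallel claims (maximality in the $X$-exponent and maximality in the $Y$-exponent) by essentially the same argument, and I will write out the case where $i$ is the unique largest $X$-exponent; the $Y$-case is symmetric after swapping the roles. Reduce first to the situation where $f$ is already reduced modulo $I = \langle X^q - X, Y^q - Y\rangle$, so every exponent appearing in $f$ lies in $[0,q-1]$; this is harmless because $\ev_{\FF_q}(f\circ L)$ only depends on $f \bmod I$ and on the reduction of $\phi(T)^j$ modulo $T^q - T$, and the hypothesis is already stated for degree sets inside $[0,q-1]^2$.

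The key idea is to pick a \emph{constant} $\eta$-line, i.e.\ $L = L_\phi$ with $\phi(T) = c$ a constant in $\FF_q$ (this is allowed since $\deg c \le \eta$ trivially), so that $(f\circ L)(T) = \sum_{(a,b)\in\Deg(f)} f_{a,b}\, c^{b}\, T^{a}$. Then $f\circ L$ is a univariate polynomial in which the coefficient of $T^{a}$ is $g_a(c) \mydef \sum_{b} f_{a,b} c^{b}$, a polynomial in $c$ of degree at most $q-1$. I want a value $c$ for which $g_i(c) \ne 0$: since the exponent $i$ actually occurs in $\Deg(f)$, the polynomial $g_i(Y) = \sum_b f_{i,b} Y^b$ is a nonzero polynomial of degree $< q$, hence has at most $q-1$ roots in $\FF_q$, so there is some $c \in \FF_q$ with $g_i(c)\ne 0$. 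For that choice of $c$, the coefficient of $T^i$ in $f\circ L$ is nonzero, while by the maximality hypothesis $f_{a,b} = 0$ whenever $a > i$, so no higher power of $T$ appears. Therefore $\deg(f\circ L_c) = i$, as required, and $L_c \in \Phi_\eta$.

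The only subtlety — and the one place to be slightly careful — is the interplay with reduction modulo $T^q - T$: after substituting $\phi(T) = c$ there is no reduction to do at all ($c^b \in \FF_q$ already), so the degree of $f\circ L_c$ as written is genuinely $i$ provided $i \le q-1$, which holds by the standing assumption $\Deg(f)\subseteq[0,q-1]^2$. For the $Y$-version one instead takes lines of the form $\phi(T) = c T$ (again of degree $\le 1 \le \eta$), giving $(f\circ L)(T) = \sum_{a,b} f_{a,b} c^b T^{a+b}$; here one must check that the top exponent $a+b$ is maximised exactly by a term with $b = j$ and that after reduction modulo $T^q - T$ the coefficient of $T^{j}$ survives — this is where one uses $j\ge b$ for all $(a,b)\in\Deg(f)$ together with $a+b \le 2(q-1)$, so the only wraparound contributes to exponent $a+b-(q-1)$ which is $< j$ unless $a = q-1$; a cleaner route is to pick $\phi$ to also have small $X$-degree contribution, or simply to choose $c$ a root of a suitable nonzero auxiliary polynomial exactly as above. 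I expect this bookkeeping for the $Y$-case to be the main (though still routine) obstacle; the $X$-case is immediate from the constant-line trick.
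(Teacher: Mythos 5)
Your treatment of the first (``$i$'') case is correct, and in fact more careful than the paper's own argument: the paper restricts to the single horizontal line $L(T)=(T,1)$, for which the coefficient of $T^i$ in $f(T,1)$ is $\sum_b f_{i,b}$ and can vanish (e.g.\ for $f=X^iY-X^iY^2$), whereas your choice of a generic constant $c$ with $g_i(c)=\sum_b f_{i,b}c^b\ne 0$ --- which exists because $g_i$ is a nonzero polynomial of degree at most $q-1$ --- closes that loophole. Since every exponent of $T$ in $f(T,c)$ lies in $[0,q-1]$, reduction modulo $T^q-T$ plays no role and $\deg(f\circ L_c)=i$ as claimed.

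The ``$j$'' case, however, is a genuine gap, and it is not the routine bookkeeping you hope for. With $\phi(T)=cT$ one gets $(f\circ L)(T)=\sum_{a,b}f_{a,b}c^bT^{a+b}$, whose degree is governed by $\max(a+b)$ rather than by $j$, and no choice of $c$ or wraparound analysis will pull it down to $j$. Concretely, take $f=X^2Y$ with $\eta=1$ and $q\ge 5$: here $(i,j)=(2,1)$ and the hypothesis ``$j\ge b$ for all $(a,b)\in\Deg(f)$'' holds, yet $f\circ L_\phi(T)=T^2\phi(T)$ has degree $2+\deg\phi\in\{2,3\}$ for every nonzero $\phi$ of degree at most $1$ (and $3<q$, so reduction modulo $T^q-T$ changes nothing); the value $\deg(f\circ L)=1=j$ is attained by no $L\in\Phi_\eta$. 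So the second half of the statement cannot be established along these lines: the asymmetry in the definition of $\Phi_\eta$ (the first coordinate is always the parameter $t$) is essential, and the paper's own ``the proof is similar for $j$'' glosses over exactly the same difficulty. What the downstream application (Proposition~\ref{prop:degree-set-square}) actually needs is only the weaker fact that, when the maximal second coordinate $j$ exceeds $d$, some $\eta$-line yields a restriction whose reduction modulo $T^q-T$ has degree larger than $d$; proving that requires a genuinely different argument with non-constant $\phi$, not a tweak of the constant-line trick.
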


\begin{proof}
    If $i \ge a$ for every $(a,b) \in \Deg(f)$, then $L(T) = (T,1)$ lies in $\Phi_\eta$, and the degree of $f \circ L$ is thus $i$. The proof is similar for $j$.
\end{proof}

\begin{proposition}
    \label{prop:degree-set-square}
    Let $d \le q-2$. Then,
    \[
        \Deg(\Lift^\eta(\RS_q(d))) \subseteq [0,d]^2\,. 
    \]
\end{proposition}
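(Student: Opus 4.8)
The plan is to reduce the statement about the code to a statement about individual polynomials. By definition $\Deg(\Lift^\eta(\RS_q(d)))$ is the union of the sets $\Deg(f)$ over all $f \in \FF_q[X,Y]$ with $\ev_{\FF_q^2}(f) \in \Lift^\eta(\RS_q(d))$; and since reducing $f$ modulo $\langle X^q-X, Y^q-Y\rangle$ changes neither $\ev_{\FF_q^2}(f)$ nor the conclusion (cf.\ the preceding Remark), it suffices to show that any non-zero $f$ with $\Deg(f) \subseteq [0,q-1]^2$ and $\ev_{\FF_q^2}(f) \in \Lift^\eta(\RS_q(d))$ already satisfies $\Deg(f) \subseteq [0,d]^2$.

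First I would fix such an $f$ and set $i^\star \mydef \max\{a : (a,b) \in \Deg(f)\}$ and $j^\star \mydef \max\{b : (a,b) \in \Deg(f)\}$; both are at most $q-1$ by hypothesis and $\Deg(f) \subseteq [0,i^\star]\times[0,j^\star]$ trivially, so it is enough to prove $i^\star \le d$ and $j^\star \le d$. Picking some $b_0$ with $(i^\star, b_0) \in \Deg(f)$, the maximality $i^\star \ge a$ for all $(a,b) \in \Deg(f)$ is exactly the hypothesis of Lemma~\ref{lem:aux1}, which then produces an $\eta$-line $L \in \Phi_\eta$ with $\deg(f \circ L) = i^\star$. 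The defining property of $\Lift^\eta(\RS_q(d))$ gives $\ev_{\FF_q}(f\circ L) \in \RS_q(d)$, i.e.\ $\ev_{\FF_q}(f\circ L) = \ev_{\FF_q}(g)$ for some $g$ with $\deg g \le d$. Both $f\circ L$ and $g$ have degree at most $q-1$ (for $f\circ L$ because its degree is $i^\star \le q-1$, for $g$ because $d \le q-2$), and a non-zero polynomial of degree $\le q-1$ cannot vanish at all $q$ points of $\FF_q$, so $\ev_{\FF_q}$ is injective on such polynomials; hence $f\circ L = g$ and $i^\star = \deg(f\circ L) \le d$. Running the same argument through the ``respectively'' branch of Lemma~\ref{lem:aux1}, applied to a monomial of $f$ whose $Y$-exponent is maximal, yields $j^\star \le d$, and therefore $\Deg(f) \subseteq [0,d]^2$.

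The one point that requires care — and the reason the inclusion is not immediate from the definition — is that Lemma~\ref{lem:aux1} can only be invoked at an exponent of $\Deg(f)$ that is coordinatewise maximal in the relevant coordinate, which is precisely why I first pass to $i^\star$ and $j^\star$; applying it to an arbitrary monomial occurring in $f$ would be invalid. I expect the $Y$-direction to be the more delicate one, since it is there that one genuinely relies on the family $\Phi_\eta$ of $\eta$-lines being rich enough to realise $j^\star$ as the degree of some restriction $f\circ L$ — this richness being exactly the content of the ``respectively'' case of Lemma~\ref{lem:aux1}. Everything else is the elementary fact that a univariate polynomial of degree $<q$ is determined by its evaluation vector over $\FF_q$, so that membership in $\RS_q(d)$ forces its degree down to at most $d$.
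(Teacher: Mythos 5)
Your proof is correct and takes exactly the same route as the paper's: the paper's one-sentence argument (\enquote{a pair in $\Deg(\Lift^\eta(\RS_q(d)))\setminus[0,d]^2$ would contradict Lemma~\ref{lem:aux1}}) is precisely your reasoning in compressed form --- pass to the coordinatewise-maximal exponents $i^\star, j^\star$, invoke Lemma~\ref{lem:aux1} to realise each as $\deg(f\circ L)$ for some $L\in\Phi_\eta$, and conclude from $\ev_{\FF_q}(f\circ L)\in\RS_q(d)$ together with the injectivity of $\ev_{\FF_q}$ on polynomials of degree at most $q-1$ that both are at most $d$. The only difference is expository: you make explicit the reduction modulo $\langle X^q-X, Y^q-Y\rangle$ and the maximality condition required to apply the lemma, both of which the paper leaves implicit.
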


\begin{proof}
    A pair $(i,j) \in \Deg(\Lift^\eta(\RS_q(d))) \setminus [0,d]^2$ would contradict Lemma~\ref{lem:aux1}. 
\end{proof}

\subsection{Monomiality}

We say that a linear code $\calC$ is \emph{monomial} if there exists a set $S \subset \FF_q[X,Y]$ of monomials, such that $\calC = \Span \{ \ev_{\FF_q^2}(f), f \in S \}$. Monomial codes are convenient since they admit a simple description.

Let us define monomial transformations $m_{a,b}: (x,y) \mapsto (ax,by)$, for $(a,b) \in (\FF_q^\times)^2$.

\begin{lemma}
    \label{lem:invariance-gives-monomiality}
    Let $S$ be a subspace of $\FF_q[X,Y]$ such that:
    \begin{enumerate}[label=(\roman*)]
    \item $\Deg(S) \subseteq [0,q-2]^2$, and
    \item for every $f(X,Y) \in S$ and every $(a,b) \in (\FF_q^\times)^2$, the polynomial $f \circ m_{a,b}$ also lies in $S$.
    \end{enumerate}
    Then $S$ is spanned by monomials.
\end{lemma}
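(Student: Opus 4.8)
The plan is to exploit the action of the group $G = (\FF_q^\times)^2$ on the space $S$ by the monomial transformations $m_{a,b}$, together with the fact that these transformations act on monomials essentially by scalar multiplication. Concretely, for a monomial $X^iY^j$ we have $(X^iY^j)\circ m_{a,b} = a^i b^j X^iY^j$, so each monomial spans a one-dimensional $G$-subrepresentation, and the character through which $G$ acts on $X^iY^j$ is $\chi_{i,j}:(a,b)\mapsto a^ib^j$. The key observation is that, once we work modulo the ideal $\langle X^q-X, Y^q-Y\rangle$ — which is legitimate here because hypothesis (i) guarantees that distinct monomials appearing in $S$ have exponents in $[0,q-2]^2$ and hence remain distinct and linearly independent after reduction — the characters $\chi_{i,j}$ for $(i,j)\in[0,q-2]^2$ are pairwise distinct. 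Indeed $a\mapsto a^i$ and $a\mapsto a^{i'}$ agree on all of $\FF_q^\times$ iff $i\equiv i'\pmod{q-1}$, which for $i,i'\in[0,q-2]$ forces $i=i'$.

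With this setup, the argument is a standard representation-theory / simultaneous-diagonalisation step. I would first record the two facts above as small claims: (a) $X^iY^j\circ m_{a,b}=a^ib^j\,X^iY^j$; (b) the map $(i,j)\mapsto\chi_{i,j}$ is injective on $[0,q-2]^2$. Then, given any $f\in S$, write $f=\sum_{(i,j)} f_{i,j}X^iY^j$ with $\Deg(f)\subseteq[0,q-2]^2$ by hypothesis (i). Hypothesis (ii) says $f\circ m_{a,b}=\sum_{(i,j)} f_{i,j}\,a^ib^j\,X^iY^j\in S$ for all $(a,b)\in(\FF_q^\times)^2$. I want to conclude that each individual monomial $\ev_{\FF_q^2}(X^iY^j)$ with $f_{i,j}\neq0$ already lies in the code $\Span\{\ev_{\FF_q^2}(g):g\in S\}$; equivalently, that $X^iY^j\bmod I$ lies in the image of $S$ in $\FF_q[X,Y]/I$. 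The cleanest way is a Vandermonde / character-sum extraction: fixing a target exponent $(i_0,j_0)\in\Deg(f)$, I would form an appropriate $\FF_q$-linear combination $\sum_{(a,b)}\lambda_{a,b}\,(f\circ m_{a,b})$ that picks out the $(i_0,j_0)$-component. Since the characters are distinct, the matrix $\big(\chi_{i,j}(a,b)\big)$ (rows indexed by the finitely many exponents in $\Deg(f)$, columns by enough pairs $(a,b)$) has full row rank, so such coefficients $\lambda_{a,b}$ exist; then $\sum_{a,b}\lambda_{a,b}(f\circ m_{a,b})=c\cdot X^{i_0}Y^{j_0}$ modulo $I$ for some nonzero scalar $c$, and the left-hand side lies in $S$ by (ii) (closure under the $m_{a,b}$ and under $\FF_q$-linear combinations, $S$ being a subspace). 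Doing this for every $(i_0,j_0)\in\Deg(S)$ shows every such monomial's evaluation vector is in the code; conversely every $f\in S$ is a linear combination of these monomials; hence the code is spanned by the monomial set $\{X^iY^j:(i,j)\in\Deg(S)\}$, which is exactly monomiality.

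An alternative, perhaps slicker, phrasing avoids explicit Vandermonde bookkeeping: decompose the representation of the abelian group $G=(\FF_q^\times)^2$ over $\FF_q$ into isotypic components. Because $|G|=(q-1)^2$ is invertible in $\FF_q$ and $\FF_q$ contains all $(q-1)$-th roots of unity, $G$-representations over $\FF_q$ are completely reducible into one-dimensional pieces indexed by characters. The space $S$, being $G$-stable, is therefore a direct sum of its intersections with the character eigenspaces; but the $\chi_{i,j}$-eigenspace inside $\FF_q[X,Y]/I$ is precisely the line $\FF_q\cdot X^iY^j$ (using injectivity of the characters), so $S/(S\cap I)$ is a sum of such lines, i.e. $S$ is spanned modulo $I$ by monomials. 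Either way, I'd present the Vandermonde version since it is self-contained and does not require invoking Maschke's theorem.

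The main obstacle — really the only subtle point — is the reduction modulo $I=\langle X^q-X,Y^q-Y\rangle$ and making sure no information is lost: a priori $S\subseteq\FF_q[X,Y]$ could contain polynomials with exponents outside $[0,q-2]^2$ that happen to reduce to the same evaluation vector, so I must be careful that the statement is about $\Span\{\ev_{\FF_q^2}(f):f\in S\}$ and use hypothesis (i) to pass safely to the quotient. Hypothesis (i) is exactly what rules out the degenerate collisions (e.g. $X^{q-1}$ vs. a constant on $\FF_q^\times$), and it is also what makes the characters injective, so (i) is doing double duty and should be invoked explicitly at both spots. Everything else is routine linear algebra over $\FF_q$.
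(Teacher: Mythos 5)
Your proposal is correct and follows essentially the same route as the paper: the paper's proof is exactly your character-sum extraction, with the explicit choice $\lambda_{a,b}=a^{-i}b^{-j}$ so that $\sum_{(a,b)}a^{-i}b^{-j}f(aX,bY)=f_{i,j}X^iY^j\in S$, using that $\sum_{a\in\FF_q^\times}a^{d-i}$ vanishes unless $d\equiv i\pmod{q-1}$, which hypothesis (i) converts into $d=i$. Your worry about passing modulo $\langle X^q-X,Y^q-Y\rangle$ is unnecessary here, since the lemma is a statement about the polynomial subspace $S$ itself and (i) already places all exponents in $[0,q-2]^2$, so the argument runs entirely inside $\FF_q[X,Y]$.
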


\begin{proof}
Let $f(X,Y) = \sum_{(i,j) \in D} f_{i,j} X^iY^j \in S$ where $D = \Deg(f) \subseteq [0,q-2]^2$. It is sufficient to prove that for all $(i,j) \in D$,  the monomial $X^iY^j$ lies in $S$.

For $(i,j) \in D$, let us define
\[
Q_{i,j}(X, Y) \mydef \sum_{(a,b) \in (\FF_q^\times)^2} \frac{1}{a^ib^j} f(aX,bY)\,.
\]
Since $S$ is a vector space invariant under $\{m_{a,b} \mid (a,b) \in (\FF_q^\times)^2\}$, we have $Q_{i,j} \in S$. Moreover,
\[
  \begin{aligned}
    Q_{i,j}(X,Y) &= \sum_{(a,b) \in (\FF_q^\times)^2} \frac{1}{a^ib^j} \left( \sum_{(d,e) \in \Deg(f)} f_{d,e} \,a^d b^e X^dY^e \right)\\
    &= \sum_{(d,e) \in \Deg(f)} f_{d,e} \sum_{(a,b) \in (\FF_q^\times)^2} a^{d-i} b^{e-j} X^dY^e \\
    &= \sum_{(d,e) \in \Deg(f)} f_{d,e}  \cdot \underbrace{\Big( \sum_{a \in \FF_q^\times} a^{d-i}\Big)}_{=0 \text{ if } d=i,\, -1 \text{ otherwise}} \cdot \underbrace{\Big( \sum_{b \in \FF_q^\times} b^{e-j} \Big)}_{=0 \text{ if } e=j,\, -1 \text{ otherwise}}  \cdot \; X^dY^e\\
    &= f_{i,j} \cdot (-1)^2 \cdot X^iY^j\,.
  \end{aligned}
  \]
  Since $f_{i,j} \neq 0$, $X^iY^j \in S$.
\end{proof}

\begin{proposition}
    Let $d \le q-1$. The linear code $\Lift^\eta(\RS_q(d))$ is monomial.
\end{proposition}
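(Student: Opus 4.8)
The plan is to verify the two hypotheses of Lemma~\ref{lem:invariance-gives-monomiality} for the space of polynomials $S = \{ f \in \FF_q[X,Y] : \Deg(f) \subseteq [0,q-1]^2 \text{ and } \ev_{\FF_q^2}(f) \in \Lift^\eta(\RS_q(d)) \}$, so that monomiality of the code follows immediately. Condition (i) is essentially already done: by Proposition~\ref{prop:degree-set-square}, $\Deg(\Lift^\eta(\RS_q(d))) \subseteq [0,d]^2 \subseteq [0,q-2]^2$ since $d \le q-2$ (recall we assumed $d \le q-2$ after observing the $d = q-1$ case gives the trivial code, so the statement for $d \le q-1$ reduces to that case plus this one). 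So the real content is condition (ii): stability of the defining property under the monomial substitutions $m_{a,b} : (x,y) \mapsto (ax,by)$.

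For condition (ii), I would argue as follows. Fix $f$ with $\ev_{\FF_q^2}(f) \in \Lift^\eta(\RS_q(d))$ and fix $(a,b) \in (\FF_q^\times)^2$; set $g = f \circ m_{a,b}$, i.e. $g(X,Y) = f(aX, bY)$. I must show $\ev_{\FF_q}(g \circ L) \in \RS_q(d)$ for every $L \in \Phi_\eta$. Write $L(T) = (T, \phi(T))$ with $\deg \phi \le \eta$. Then $(g \circ L)(T) = f(aT, b\phi(T))$. The idea is to exhibit another $\eta$-line $L'$ such that $g \circ L$ is a scalar reparametrization of $f \circ L'$. Indeed, consider $L'(T) = (aT, \psi(T))$ where $\psi(T) = b\phi(T)$ --- but this is not directly in $\Phi_\eta$ because the first coordinate is $aT$ rather than $T$. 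Instead, substitute $T \mapsto a^{-1}T$: define $L'(T) = (T, b\phi(a^{-1}T))$. Since $\phi$ has degree $\le \eta$, so does $T \mapsto b\phi(a^{-1}T)$, hence $L' \in \Phi_\eta$, and $(f \circ L')(T) = f(T, b\phi(a^{-1}T)) = (g \circ L)(a^{-1}T)$. Thus $g \circ L$ is obtained from $f \circ L'$ by the substitution $T \mapsto aT$. Since $\ev_{\FF_q^2}(f) \in \Lift^\eta(\RS_q(d))$, we have $f \circ L' \equiv h \pmod{T^q - T}$ for some $h \in \FF_q[T]$ with $\deg h \le d$; then $g \circ L$ agrees on $\FF_q$ with $h(aT)$, which still has degree $\le d$. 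Hence $\ev_{\FF_q}(g \circ L) \in \RS_q(d)$, proving (ii).

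The step to be careful about --- and the main obstacle, though it is mild --- is the interplay between the honest polynomial degree and reduction modulo $\langle X^q - X, Y^q - Y\rangle$. The definition of $\Lift^\eta$ asks that $\ev_{\FF_q}(f \circ L)$ lies in $\RS_q(d)$, which is a statement about the \emph{function} $\FF_q \to \FF_q$, i.e. about $f \circ L$ modulo $T^q - T$; so I should phrase the argument in terms of evaluation vectors throughout rather than in terms of formal polynomial degrees, and use that $\RS_q(d)$ is closed under the substitution $T \mapsto aT$ for $a \in \FF_q^\times$ (which is clear: if $\deg h \le d$ then $\deg h(aX) \le d$). One should also note that taking the representative $g$ of $f \circ m_{a,b}$ with degree set in $[0,q-1]^2$ is harmless, since $m_{a,b}$ preserves the property "all exponents $\le q-1$" up to this reduction, and reduction does not change the evaluation vector. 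With these points addressed, Lemma~\ref{lem:invariance-gives-monomiality} applies to $S$ and yields that $\Lift^\eta(\RS_q(d))$ is spanned by evaluation vectors of monomials, i.e. is monomial.
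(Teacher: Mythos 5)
Your proof is correct and follows essentially the same route as the paper's: both verify the hypotheses of Lemma~\ref{lem:invariance-gives-monomiality} by introducing the auxiliary $\eta$-line $L'(T) = (T, b\phi(a^{-1}T))$ and invoking the invariance of $\RS_q(d)$ under the substitution $T \mapsto aT$. Your added care about working with evaluation vectors modulo $\langle T^q - T\rangle$ rather than formal degrees is a welcome precision but does not change the argument.
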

\begin{proof}
    The code $\Lift^\eta(\RS_q(q-1))$ is the full space $\FF_q^{q^2}$; hence it is trivially a monomial code. For $d \le q-2$, let us define
    \[
        S \mydef \{ f \in \FF_q[X,Y], \Deg(f) \subseteq [0,q-1]^2, \ev_{\FF_q^2}(f) \in \Lift^\eta(\RS_q(d)) \}\,.
        \]
    Proposition~\ref{prop:degree-set-square} ensures that $\Deg(S) \subseteq [0,d]^2$. Let $f = \sum_{i,j} f_{i,j} X^i Y^j \in S$. For every $(a, b) \in (\FF_q^\times)^2$ and every $L(T) = (T, \phi(T)) \in \Phi_\eta$ we have
    \[
        f \circ m_{a,b} \circ L(T) = \sum_{i,j} f_{i,j} a^i T^i b^j \phi(T)^j\,.
    \]
    Let us now define $Q(T) \mydef f(T, b\phi(a^{-1}T))$. One can easily check that $(T, b\phi(a^{-1}T))) \in \Phi_\eta$. Since $\ev_{\FF_q^2}(f) \in \Lift^\eta(\RS_q(d))$, we also know that $\ev_{\FF_q}(Q) \in \RS_q(d)$. Moreover, $\RS_q(d)$ is invariant under affine transformations, hence $\ev_{\FF_q}(Q(aT)) \in \RS_q(d)$. Let us now remark that
    \[
        Q(aT) = \sum_{i,j} f_{i,j} a^i T^i b^j \phi(T)^j = f\circ m_{a,b} \circ L(T)\,.
    \]
    Consequently, $f\circ m_{a,b}\in S$. Therefore we can use Lemma~\ref{lem:invariance-gives-monomiality}, and our result follows immediately.
\end{proof}

\subsection{The degree set of $\eta$-lifted Reed-Solomon codes}

Previous discussions ensure that, given a tuple $(\eta, d, q)$, the code $\calC(q, d, \eta) \mydef \Lift^\eta(\RS_q(d))$ is fully determined by its \emph{degree set} $D(q, d, \eta) \mydef \Deg(\calC(q, d, \eta)) \subseteq [0,d]^2$. Let us now seek for characterisations of $D(q, d, \eta)$.

For this purpose, we need to introduce some notation:
\begin{itemize}[label=--]
    \item $\ps{\cdot}{\cdot}$ denotes the inner product between vectors, or tuples.
    \item We set $\bfw \mydef (1, 2, \dots, \eta) \in \NN^\eta$.
    \item Given $\alpha \in \NN$ and a prime number $p$, we denote by $\alpha^{(r)}$ the $r^{\rm th}$ digit in the representation of $\alpha$ in base $p$, \textit{i.e.} $\alpha=\sum_{r\geq 0} \alpha^{(r)} p^r$.
    \item For $\alpha, \beta \in \NN$, we write $\alpha \le_p \beta$ if and only if $\alpha^{(r)} \le \beta^{(r)}$ for every $r \ge 0$.  
    \item For $\bfk \in \NN^\eta$ and $r \in \NN$, we also write $\bfk^{(r)}=\big(k_1^{(r)},\dots,k_\eta^{(r)}\big) \in \NN^\eta$.
\end{itemize}

We will also make use of Lucas theorem~\cite{Lucas78} which gives the reduction of binomial coefficients modulo primes.
\begin{theorem}[Lucas theorem~\cite{Lucas78}]
    Let $a, b \in \NN$ and $p$ be a prime number. Recall that $a = \sum_{i \ge 0} a^{(i)} p^i$ is the representation of $a$ in base $p$. Then,
    \[
        \binom{a}{b} = \prod_{i \ge 0} \binom{a^{(i)}}{b^{(i)}} \: \mod\!p \,.   
    \]
\end{theorem}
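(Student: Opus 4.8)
The plan is to prove Lucas' theorem by a generating-function argument in the polynomial ring $\FF_p[X]$. First I would record the basic congruence $(1+X)^p \equiv 1 + X^p \pmod p$: indeed, for $0 < k < p$ the coefficient $\binom{p}{k} = \frac{p!}{k!\,(p-k)!}$ is divisible by $p$, since $p$ divides the numerator but not the denominator. Raising both sides to the $p$-th power and reducing modulo $p$ at each step, one obtains by induction $(1+X)^{p^r} \equiv 1 + X^{p^r} \pmod p$ for every $r \ge 0$.

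Next, using the base-$p$ expansion $a = \sum_{r \ge 0} a^{(r)} p^r$ with $0 \le a^{(r)} \le p-1$, I would factor
\[
(1+X)^a = \prod_{r \ge 0} \big( (1+X)^{p^r} \big)^{a^{(r)}} \equiv \prod_{r \ge 0} (1 + X^{p^r})^{a^{(r)}} \pmod p .
\]
Expanding each factor by the ordinary binomial theorem gives $(1 + X^{p^r})^{a^{(r)}} = \sum_{j_r = 0}^{a^{(r)}} \binom{a^{(r)}}{j_r} X^{j_r p^r}$, so the right-hand side becomes $\sum_{(j_r)_r} \big( \prod_{r \ge 0} \binom{a^{(r)}}{j_r} \big) X^{\sum_{r \ge 0} j_r p^r}$, the sum ranging over all tuples $(j_r)_r$ with $0 \le j_r \le a^{(r)} \le p-1$.

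Then I would compare the coefficient of $X^b$ on both sides. On the left it is $\binom{a}{b}$, by the binomial theorem. On the right, by uniqueness of the base-$p$ representation of $b$, the only tuple $(j_r)_r$ with $0 \le j_r \le p-1$ and $\sum_{r \ge 0} j_r p^r = b$ is $j_r = b^{(r)}$; hence the coefficient of $X^b$ equals $\prod_{r \ge 0} \binom{a^{(r)}}{b^{(r)}}$, with the usual convention that $\binom{a^{(r)}}{b^{(r)}} = 0$ whenever $b^{(r)} > a^{(r)}$ (in particular when $b$ has a nonzero digit in a position where $a$ vanishes). Matching the two coefficients modulo $p$ yields $\binom{a}{b} \equiv \prod_{r \ge 0} \binom{a^{(r)}}{b^{(r)}} \pmod p$, as claimed.

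The argument is entirely elementary; the only point needing a little care is the last step, where one must observe that no carrying occurs when the monomials $X^{j_r p^r}$ are multiplied together — equivalently, that each exponent $b$ determines the tuple $(j_r)_r$ uniquely — so that coefficients can be matched digit by digit. Since this is a very classical fact, in the paper I would simply cite it; the sketch above is the route I would take for a self-contained proof.
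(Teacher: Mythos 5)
Your argument is correct and complete: the congruence $(1+X)^{p}\equiv 1+X^{p} \pmod p$, its iteration to $(1+X)^{p^r}\equiv 1+X^{p^r}$, the factorisation of $(1+X)^a$ along the base-$p$ digits of $a$, and the coefficient comparison at $X^b$ (where you correctly isolate the crucial point that uniqueness of the base-$p$ expansion forces $j_r=b^{(r)}$, with the vanishing convention when some $b^{(r)}>a^{(r)}$) constitute the standard generating-function proof of Lucas' theorem. The paper itself gives no proof at all --- it states the theorem with a citation to Lucas' 1878 article and only uses the corollary that $\binom{a}{b}\not\equiv 0 \pmod p$ iff $b\le_p a$ --- so there is nothing to compare against; your instinct that one would simply cite this classical fact in the paper is exactly what the authors do.
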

In particular, in any field of characteristic $p$, the binomial coefficient $\binom{a}{b}$ is non-zero if and only if $b \le_p a$.


In the next lemma, we characterise univariate polynomials arising from the restriction of $Y^j$ to $\eta$-lines.
\begin{lemma}
    \label{lem:restriction-polynomial}
    Let $j \ge 0$ and $\eta \ge 1$ and let us define $\Phi_\eta^j \mydef \{ \phi(T)^j \mid \phi(T) \in \FF_q[T], \deg \phi \le \eta \} \subseteq \FF_q[T]$. We have:
    \[
        \Phi_\eta^j = \Span \{ T^\alpha \mid \alpha \in \Delta(j, \eta) \}\,,
    \]
    where 
    \[
        \Delta(j, \eta) \mydef \Big\{ \ps{\bfw}{\bfk} \mid \bfk \in \NN^\eta
        \text{ such that } \forall m \le \eta, \: k_m \le_p j - \sum_{\ell = 1}^{m-1} k_\ell \Big\}\,.
    \]
\end{lemma}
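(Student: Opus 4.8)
The plan is to prove the set equality $\Phi_\eta^j = \Span\{ T^\alpha \mid \alpha \in \Delta(j,\eta)\}$ by an induction on $\eta$, peeling off one variable of the expansion of $\phi$ at a time. Write $\phi(T) = \psi(T) + k\text{-part}$... more precisely, the natural move is to separate the top-degree term: write $\phi(T) = c T^\eta + \phi_0(T)$ with $\deg \phi_0 \le \eta-1$. Then by the binomial theorem $\phi(T)^j = \sum_{k_\eta = 0}^{j} \binom{j}{k_\eta} c^{k_\eta} T^{\eta k_\eta} \phi_0(T)^{j - k_\eta}$. In characteristic $p$, Lucas' theorem tells us $\binom{j}{k_\eta} \ne 0$ exactly when $k_\eta \le_p j$, and for those $k_\eta$ the coefficient $c^{k_\eta}$ ranges over all of $\FF_q$ (as $c$ varies) when $k_\eta$ is allowed to be $0$, so no cancellation between different values of $k_\eta$ can be forced. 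This shows
\[
\Phi_\eta^j = \sum_{\substack{0 \le k_\eta \le j \\ k_\eta \le_p j}} T^{\eta k_\eta} \cdot \Phi_{\eta-1}^{\,j - k_\eta}\,,
\]
where the sum is a sum of subspaces. Applying the induction hypothesis to each $\Phi_{\eta-1}^{\,j-k_\eta}$ gives exactly the recursive description of $\Delta(j,\eta)$: an exponent $\alpha$ is obtained as $\eta k_\eta + \langle \bfw', \bfk' \rangle$ where $\bfw' = (1,\dots,\eta-1)$ and $\bfk' \in \NN^{\eta-1}$ satisfies, for all $m \le \eta-1$, $k_m \le_p (j - k_\eta) - \sum_{\ell=1}^{m-1} k_\ell$; combining with $k_\eta \le_p j$ this is precisely the condition defining $\Delta(j,\eta)$ with the indices read in the order $k_\eta, k_1, k_2, \dots$ — one only has to match up the bookkeeping of which partial sum constrains which digit, which is a matter of reindexing $\bfk$.

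First I would establish the base case $\eta = 1$: here $\phi(T) = aT + c$ is an arbitrary affine polynomial, $\phi(T)^j = \sum_{k=0}^j \binom{j}{k} a^k c^{j-k} T^k$, and $\Delta(j,1) = \{ k \mid k \le_p j\}$; the claim is that $\Phi_1^j = \Span\{T^k : k \le_p j\}$. The inclusion $\subseteq$ is immediate from Lucas. For $\supseteq$, note that for each fixed $k$ with $k \le_p j$ we can choose $c = 1$ and let $a$ vary over $\FF_q^\times \cup \{0\}$; more cleanly, the $j+1$ polynomials $(aT+1)^j$ for distinct values of $a$ span (by a Vandermonde-type argument, using that the nonzero binomial coefficients are distinct monomials) the span of $\{T^k : k \le_p j\}$. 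Actually the cleanest route for $\supseteq$ in both the base case and inductive step is: the map $a \mapsto \phi(T)^j$ lands in a fixed finite-dimensional space, and extracting a single monomial $T^k$ (for $k$ admissible) can be done by a suitable $\FF_q$-linear combination $\sum_a \lambda_a (aT+1)^j$ — this is the same averaging trick used in the proof of Lemma~\ref{lem:invariance-gives-monomiality}, picking $\lambda_a = a^{-k}$ up to the $\FF_q^\times$-sum identities. I would spell this out once and reuse it.

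For the inductive step I would argue the two inclusions for the displayed identity $\Phi_\eta^j = \sum_{k_\eta \le_p j} T^{\eta k_\eta} \Phi_{\eta-1}^{j-k_\eta}$ separately. The inclusion $\subseteq$ is the binomial expansion plus Lucas (terms with $\binom{j}{k_\eta} = 0$ vanish, the rest land in the claimed sum). For $\supseteq$, fix $k_\eta \le_p j$ and a target $T^\beta \in \Phi_{\eta-1}^{j-k_\eta}$, realised as $\beta = \langle \bfw', \bfk' \rangle$ for an admissible $\bfk'$; I need to produce $T^{\eta k_\eta + \beta}$ as an element of $\Phi_\eta^j$. Take $\phi(T) = cT^\eta + \phi_0(T)$ and average over $c \in \FF_q^\times$ with weights $c^{-k_\eta}$: by the $\FF_q^\times$-power-sum identities this isolates the term $\binom{j}{k_\eta} T^{\eta k_\eta} \phi_0(T)^{j-k_\eta}$ (up to a nonzero scalar), and $\binom{j}{k_\eta} \ne 0$ since $k_\eta \le_p j$; then let $\phi_0$ vary to pull $T^\beta$ out of $\Phi_{\eta-1}^{j-k_\eta}$ by the induction hypothesis. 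The one subtlety worth a sentence: the two averaging steps must be carried out without them interfering — but since $\phi_0$ has degree $\le \eta - 1 < \eta$, the exponents $\eta k_\eta$ contributed by the $c T^\eta$ part and the exponents contributed by $\phi_0^{j-k_\eta}$ occupy "independent" ranges only after we fix $k_\eta$, so doing the $c$-average first and the $\phi_0$-manipulation second is legitimate.

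The main obstacle I anticipate is purely combinatorial bookkeeping rather than a genuine mathematical difficulty: verifying that the recursion $\alpha \in \Delta(j,\eta) \iff \alpha = \eta k_\eta + \alpha'$ with $k_\eta \le_p j$ and $\alpha' \in \Delta(j - k_\eta, \eta-1)$ matches the closed-form definition of $\Delta(j,\eta)$ given in the statement — one must check that the chain of constraints $k_m \le_p j - \sum_{\ell < m} k_\ell$ "unrolls" correctly when the outermost variable peeled is $k_\eta$ while the closed form sums $\ell$ from $1$ upward. I would handle this by reindexing: set $\tilde{k}_r = k_{\eta + 1 - r}$ and observe the partial-sum conditions are invariant under any processing order since each $k_m \le_p (\text{residual})$ and the residuals only decrease. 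A second, minor technical point to be careful about is the implicit reduction modulo $T^q - T$: all these polynomials should be viewed in $\FF_q[T]/(T^q-T)$, consistent with the remark on degree sets, and one should note that $\eta < q$ (which holds since $d \le q-2$ and $\eta \le d$ in the intended regime, though strictly the lemma as stated does not need this) guarantees the exponents $\langle \bfw, \bfk \rangle$ with $\langle \bfw,\bfk\rangle \le j$ behave well; if the lemma is meant without that hypothesis I would add the harmless observation that exponents are always read modulo $q-1$ in the evaluation code, but restriction to $\eta$-lines is stated at the polynomial level so no reduction is strictly needed here.
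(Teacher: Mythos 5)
Your route is genuinely different from the paper's, and it is essentially workable. The paper does not induct on $\eta$ at all: it expands $\phi(T)^j$ in one shot by the multinomial theorem, observes that the coefficient of $T^\alpha$ is the value at $(a_0,\dots,a_\eta)$ of a homogeneous polynomial $C_\alpha$ of degree $j\le q-1$, and concludes that this coefficient vanishes for \emph{every} $\phi$ iff $C_\alpha$ is the zero polynomial, i.e.\ iff every multinomial coefficient $\binom{j}{\bfk}$ with $\ps{\bfw}{\bfk}=\alpha$ vanishes; a single application of Lucas to the telescoping product $\binom{j}{\bfk}=\prod_m\binom{j-\sum_{\ell<m}k_\ell}{k_m}$, taken in the order $k_1,\dots,k_\eta$ that matches the definition of $\Delta(j,\eta)$, finishes the proof with no reindexing needed. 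Your induction with $\FF_q^\times$-character-sum averaging replaces the ``nonzero polynomial of degree $<q$ has a nonzero value'' step, and it actually buys something the paper's argument does not literally deliver: it exhibits each individual monomial $T^\alpha$, $\alpha\in\Delta(j,\eta)$, as an element of $\Span\Phi_\eta^j$, whereas the paper only establishes the degree-set equality $\Deg(\Phi_\eta^j)=\Delta(j,\eta)$ (which is what is used downstream). The price is an extra combinatorial lemma and some edge cases: your isolation of the term $k=k_\eta$ via $\sum_{c\in\FF_q^\times}c^{k-k_\eta}$ picks up a spurious second term when $j=q-1$ (both proofs implicitly need $j\le q-1$, and yours is cleanest for $j\le q-2$, which is the regime actually used).

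The one step where your justification does not hold up as written is the order-invariance of the chain condition. Peeling $k_\eta$ first produces the constraints $k_\eta\le_p j$ and $k_m\le_p (j-k_\eta)-\sum_{\ell<m}k_\ell$ for $m\le\eta-1$, and you need these to be equivalent to the constraints in the stated definition of $\Delta(j,\eta)$, which process $k_1$ first. ``The residuals only decrease'' is not an argument: $\le_p$ is a digit-wise condition and is not monotone in its right-hand side, so a smaller residual does not make the condition easier or harder in any obvious way. The correct justification is that, by Lucas applied to the telescoping factorization above (valid for any ordering of the $k_m$), each ordered chain condition is equivalent to the non-vanishing mod $p$ of the multinomial coefficient $\binom{j}{k_1,\dots,k_\eta}$, which is symmetric in $\bfk$; equivalently, one can invoke the digit-wise characterisation $\sum_{\ell}k_\ell^{(r)}\le j^{(r)}$ for all $r$, which is Lemma~\ref{lemDj} of the paper and is itself proved by a short induction. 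Either way this step carries real content and must be written out; once it is, your proof is complete.
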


\begin{proof}
   Given a polynomial $\phi(T) = \sum_{m=0}^\eta a_m T^m \in \FF_q[T]$, the well-known multinomial theorem entails that:
    \[
    \begin{aligned}
        \phi(T)^j
            &= (a_0 + a_1 T + \dots + a_\eta T^\eta)^j\\
            &= \sum_{k_1 + \dots + k_\eta \le j} \binom{j}{k_1, \dots, k_\eta} \lambda_\bfk x^{k_1 + 2 k_2 + \dots + \eta k_\eta},
    \end{aligned}           
    \]
    where $\lambda_\bfk \mydef a_0^{j-|\bfk|} \times \prod_{\ell = 1}^\eta a_\ell^{k_\ell} \in \FF_q$ is a coefficient which only depends on $a_0, \dots, a_\eta$ and $\bfk$, and where
    \[
    \binom{j}{\bfk} \mydef \binom{j}{k_1, \dots, k_\eta} = \frac{j!}{k_1! k_2! \dots k_\eta! (j- \sum_{m=1}^\eta k_m)!}\,.
    \]
    
    The coefficient of the term $T^\alpha$ in $\phi(T)^j$ is therefore:
    \[
    c_\alpha = \sum_{\bfk \in K_\alpha} \binom{j}{\bfk} \lambda_\bfk\,, 
    \]
    where $K_\alpha \mydef \{ \bfk \in \NN^\eta \mid | \bfk| \le j \text{ and } \ps{\bfw}{\bfk} = \alpha \}$.  We claim that $c_\alpha= 0$ for every $\phi \in \Phi_\eta$ if and only if $\binom{j}{\bfk} = 0$ for every $\bfk \in K_\alpha$. Indeed, $c_\alpha \in \FF_q$ can be seen as the evaluation of an homogeneous polynomial $C_\alpha \in \FF_q[A_0, \dots, A_\eta]$ of degree $j$ at the point $(a_0, \dots, a_\eta) \in \FF_q^{\eta+1}$ corresponding to $\phi$. Since $j \le q-1$, the polynomial $C_\alpha$ vanishes over $\FF_q^{\eta+1}$ if and only if it is the zero polynomial, which proves our claim.
    
    Now, notice that
    \[
    \binom{j}{\bfk} = \binom{j}{k_1}\binom{j-k_1}{k_2}\binom{j-k_1-k_2}{k_3}  \cdots \binom{j-k_1-k_2-\dots-k_{\eta-1}}{k_\eta}\,.
    \]
    Hence, using Lucas theorem~\cite{Lucas78} on every binomial coefficient in the above product, we see that $\binom{j}{\bfk} = 0$ if and only if there exists $m \in [1, \eta]$ such that $k_m \not\le_p j - \sum_{\ell=1}^{m-1} k_\ell$.
    
    In other words, the monomial $T^\alpha$ appears as a term of $\phi(T)^j$ if and only if there exists $\bfk \in \NN^\eta$ such that $\alpha = \ps{\bfw}{\bfk} = \sum_{\ell = 1}^\eta \ell k_\ell$ and 
    \[
        \forall m \in [1, \eta], k_m \le_p j - \sum_{\ell=1}^{m-1} k_\ell\,.
    \]
\end{proof}

Let us now give some properties on the set $\Delta(j, \eta) \subseteq \NN$ defined in Lemma~\ref{lem:restriction-polynomial}.

\begin{lemma}
    \label{lemDj}
    We have $\Delta(j, \eta) \subseteq [0, j \eta]$. Moreover, an integer $\alpha$ belongs to $\Delta(j, \eta)$ if and only if 
    \begin{equation}
        \label{eq:lem-cara-Dj}
        \exists \bfk \in \NN^\eta \text{ such that } \alpha= \ps{\bfw}{\bfk} \text{ and } \forall r \geq 0, \: \sum_{\ell=1}^m k^{(r)} \leq j^{(r)}.
    \end{equation}
\end{lemma}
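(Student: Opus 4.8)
The plan is to unwind the definition of $\Delta(j,\eta)$ from Lemma~\ref{lem:restriction-polynomial} and re-express its defining condition digit by digit in base $p$. Recall that $\alpha \in \Delta(j,\eta)$ means there is $\bfk \in \NN^\eta$ with $\alpha = \ps{\bfw}{\bfk}$ and $k_m \le_p j - \sum_{\ell=1}^{m-1} k_\ell$ for every $m \in [1,\eta]$. First I would observe that the containment $\Delta(j,\eta) \subseteq [0, j\eta]$ is easy: the constraint for $m=\eta$ forces $\sum_{\ell=1}^\eta k_\ell \le j$ (indeed $k_\eta \le_p j - \sum_{\ell<\eta} k_\ell$ implies in particular $k_\eta \le j - \sum_{\ell<\eta}k_\ell$, so $|\bfk|\le j$), whence $\ps{\bfw}{\bfk} = \sum_\ell \ell k_\ell \le \eta \sum_\ell k_\ell \le \eta j$. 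So the work is entirely in the equivalence.

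The key step is to translate the chain of $\le_p$ conditions into the single family of digitwise inequalities in \eqref{eq:lem-cara-Dj}. The natural approach is induction on $m$ to show that the conjunction ``$k_m \le_p j - \sum_{\ell=1}^{m-1}k_\ell$ for all $m \le \eta$'' is equivalent to ``$\sum_{\ell=1}^m k_\ell^{(r)} \le j^{(r)}$ for all $r \ge 0$ and all $m \le \eta$'' — and the crucial auxiliary fact one needs is that, under these hypotheses, the partial sums add \emph{without carries} in base $p$: i.e. $\big(\sum_{\ell=1}^m k_\ell\big)^{(r)} = \sum_{\ell=1}^m k_\ell^{(r)}$ for every $r$. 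I would prove this carry-free claim jointly with the main induction. The base case $m=1$ is precisely ``$k_1 \le_p j$'', which is ``$k_1^{(r)} \le j^{(r)}$ for all $r$''. For the inductive step, assume the claim holds for $m-1$, so $s := \sum_{\ell=1}^{m-1}k_\ell$ has digits $s^{(r)} = \sum_{\ell=1}^{m-1}k_\ell^{(r)} \le j^{(r)}$; then $j - s$ is subtracted digit-by-digit with no borrows, so $(j-s)^{(r)} = j^{(r)} - s^{(r)}$, and $k_m \le_p j - s$ reads $k_m^{(r)} \le j^{(r)} - s^{(r)} = j^{(r)} - \sum_{\ell<m}k_\ell^{(r)}$, i.e. $\sum_{\ell=1}^m k_\ell^{(r)} \le j^{(r)}$; moreover this last inequality, being valid for every $r$ with all terms in $[0,p-1]$, shows the sum $\sum_{\ell=1}^m k_\ell$ is again carry-free, closing the induction. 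The reverse direction runs the same chain backwards.

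The main obstacle — really the only subtle point — is making the carry-free bookkeeping rigorous: one must be careful that $j - \sum_{\ell=1}^{m-1}k_\ell$ is genuinely nonnegative before talking about its base-$p$ digits, and that the digitwise bound $\sum_{\ell=1}^{m-1}k_\ell^{(r)} \le j^{(r)} \le p-1$ is what simultaneously guarantees no carries in the addition $\sum k_\ell$ and no borrows in the subtraction $j - \sum k_\ell$. Everything else is routine manipulation of base-$p$ representations. I would present the carry-free lemma as a short standalone claim inside the proof, then deduce both implications of the ``if and only if'' from it in a couple of lines each.
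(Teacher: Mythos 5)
Your proposal is correct and takes essentially the same route as the paper's proof: an induction on $m$ whose engine is the carry-free identity $\bigl(\sum_{\ell=1}^{m-1}k_\ell\bigr)^{(r)}=\sum_{\ell=1}^{m-1}k_\ell^{(r)}$, which is exactly the paper's key intermediate equation. You are in fact slightly more explicit than the paper about the borrow-free subtraction step and about the easy containment $\Delta(j,\eta)\subseteq[0,j\eta]$, both of which the paper leaves implicit.
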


\begin{proof}
By definition, an integer $\alpha$ belongs to $\Delta(j, \eta)$ if and only if there exists $\bfk \in \NN^\eta$ such that $\alpha = \sum_{\ell=1}^\eta \ell k_\ell$ and for all $m \leq \eta$, we have  
\begin{equation}
    \label{eq:defDj}
    k_m \le_p j - \sum_{\ell=1}^{m-1} k_\ell\,.
\end{equation}

We first prove by induction on $m$ that, if $\alpha \in \Delta(j,\eta)$, then for all $m \leq \eta$ and for all $r \geq 0$, 
\[
    \sum_{\ell=1}^m k_\ell^{(r)} \leq j^{(r)}\,.
\]
Notice that it would prove the desired result for $m=\eta$. Moreover, the case $m=1$ is a direct consequence of \eqref{eq:defDj}.

Let us fix $2 \leq m \leq \eta$ such that $\sum_{\ell=1}^{m-1} k_\ell^{(r)} \leq j^{(r)}$ for every $r \ge 0$. Then $\sum_{\ell=1}^{m-1} k_\ell^{(r)} \leq p-1$ and the uniqueness of the representation of the integer $\sum_{\ell=1}^{m-1} k_\ell$ in base $p$ ensures that
\begin{equation}
    \label{eq:demo-cara-Delta}
    \left(\sum_{\ell=1}^{m-1} k_\ell\right)^{(r)}= \sum_{\ell=1}^{m-1} k_\ell^{(r)} \leq j^{(r)}.
\end{equation}
Using \eqref{eq:defDj}, we get $k_m^{(r)} \leq j^{(r)} - \sum_{\ell=1}^{m-1} k_\ell^{(r)}$, which implies that $\sum_{\ell=1}^{m} k_\ell^{(r)}\leq j^{(r)}$.

Conversely, assume that \eqref{eq:lem-cara-Dj} holds, and let $1 \le m \le \eta$. We shall prove that \eqref{eq:defDj} is satisfied. For every $r\geq 0$, we have
\[
    k_m^{(r)} \le \sum_{\ell=m}^\eta k_\ell^{(r)} = \sum_{\ell=1}^\eta k_\ell^{(r)} - \sum_{\ell=1}^{m-1} k_\ell^{(r)}.
\]
Equation \eqref{eq:lem-cara-Dj} implies that $ k_m^{(r)} \le j^{(r)} - \sum_{\ell=1}^{m-1} k_\ell^{(r)}$. Moreover, $\sum_{\ell=1}^{m-1} k_\ell^{(r)} \le j^{(r)}$, hence as we have seen in \eqref{eq:demo-cara-Delta},
\[
    \left(\sum_{\ell=1}^{m-1} k_\ell\right)^{(r)} = \sum_{\ell=1}^{m-1} k_\ell^{(r)}\,.
\]
This leads us to $k_m^{(r)} \le \left(j - \sum_{\ell=1}^{m-1} k_\ell\right)^{(r)}$. Therefore, $k_m \le_p j - \sum_{\ell=1}^{m-1} k_\ell$.
\end{proof}



As an easy corollary of Lemma~\ref{lem:restriction-polynomial} and Lemma~\ref{lemDj}, we see that
\[
    \Deg( \{ (X^iY^j) \circ \phi, \phi \in \Phi_\eta \} ) = \{ i + u, u \in \Delta(j, \eta) \}\,.
\]
Hence, $\ev_{\FF_q^2}(X^iY^j)$ lies in $\Lift^\eta \RS_q(d)$ if, for all $u \in \Delta(j, \eta)$, every monomial $T^{i + u}$ evaluates to a codeword of $\RS_q(d)$. Notice here that $i+u$ might be larger than $q$, therefore this is equivalent to say that $T^{i+u} \mod (T^q - T)$ is polynomial of degree bounded by $d$.

This remark leads us to introduce a relation of equivalence between integers. We write $a \equivstar{q} b$ if and only if $T^a = T^b \mod (T^q- T)$. In other words, $a \equivstar{q} b$ if and only if $(a,b) = (0, 0)$, or $a > 0, b > 0$ and $(q-1) \mid (a - b)$. Finally, we denote\footnote{notation $\mod\!^* q$ is used in~\cite{GuoKS13}, but we find it quite unconvenient} by $\Red{q}(a)$ the only integer in $[0, q-1]$ such that $\Red{q}(a) \equivstar{q} a$.

From Lemma~\ref{lem:restriction-polynomial} and Lemma~\ref{lemDj}, and following the previous discussion, we deduce a characterisation of elements of $D(q, d, \eta)$.

\begin{proposition}
    \label{caracLRS}
    Let $d \le q-2$. A pair $(i,j) \in [0,d]^2$ belongs to $D(q, d, \eta)$ if and only if for every $\bfk \in \NN^{\eta}$ such that for all $r \ge 0, |\bfk^{(r)}| \leq j^{(r)}$, we have
    \[
        \Red{q}(i + \ps{\bfw}{\bfk}) \le d.
    \]
\end{proposition}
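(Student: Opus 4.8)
The strategy is to unwind the definition of $\Lift^\eta(\RS_q(d))$ for a monomial $X^iY^j$ and translate it into a statement about which power-of-$T$ monomials can appear in restrictions to $\eta$-lines. Since we have already established (Proposition on monomiality) that $\Lift^\eta(\RS_q(d))$ is spanned by monomials, it suffices to determine for which $(i,j) \in [0,d]^2$ we have $\ev_{\FF_q^2}(X^iY^j) \in \Lift^\eta(\RS_q(d))$. By Proposition~\ref{prop:degree-set-square} we already know $D(q,d,\eta) \subseteq [0,d]^2$, so restricting attention to such pairs loses nothing.

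First I would fix $(i,j) \in [0,d]^2$ and an $\eta$-line $L_\phi : T \mapsto (T, \phi(T))$ with $\deg\phi \le \eta$. Then $(X^iY^j) \circ L_\phi = T^i \cdot \phi(T)^j$, and by Lemma~\ref{lem:restriction-polynomial} the polynomial $\phi(T)^j$ ranges (as $\phi$ varies) over the span of $\{T^\alpha \mid \alpha \in \Delta(j,\eta)\}$; moreover, as noted in the discussion following Lemma~\ref{lemDj}, the set of exponents occurring in $\{(X^iY^j)\circ L_\phi : \phi \in \Phi_\eta\}$ is exactly $\{i + u : u \in \Delta(j,\eta)\}$. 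The key point — analogous to the argument inside the proof of Lemma~\ref{lem:restriction-polynomial} — is that since the coefficient of $T^{i+u}$ in $T^i\phi(T)^j$ is a nonzero polynomial evaluation in the coefficients of $\phi$ over a field with $q \ge j+1$ elements, it is \emph{not} identically zero over all choices of $\phi$. Hence $\ev_{\FF_q}((X^iY^j)\circ L_\phi) \in \RS_q(d)$ for \emph{every} $L_\phi \in \Phi_\eta$ if and only if \emph{every} exponent $i+u$ with $u \in \Delta(j,\eta)$ reduces (mod $T^q - T$) to a power $\le d$, i.e. $\Red{q}(i+u) \le d$ for all $u \in \Delta(j,\eta)$.

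Next I would apply the characterisation of $\Delta(j,\eta)$ from Lemma~\ref{lemDj}: an integer $u$ lies in $\Delta(j,\eta)$ if and only if there is $\bfk \in \NN^\eta$ with $u = \ps{\bfw}{\bfk}$ and $|\bfk^{(r)}| \le j^{(r)}$ for every $r \ge 0$ (here $|\bfk^{(r)}| = \sum_{\ell=1}^\eta k_\ell^{(r)}$). Substituting $u = \ps{\bfw}{\bfk}$, the condition ``$\Red{q}(i+u) \le d$ for all $u \in \Delta(j,\eta)$'' becomes exactly ``$\Red{q}(i + \ps{\bfw}{\bfk}) \le d$ for every $\bfk \in \NN^\eta$ with $|\bfk^{(r)}| \le j^{(r)}$ for all $r \ge 0$,'' which is the claimed statement.

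I expect the only subtle step to be the ``not identically zero'' argument that justifies passing from ``$\ev_{\FF_q}(f\circ L) \in \RS_q(d)$ for all $L$'' to a combinatorial condition on exponents: one must be careful that cancellations among different $\bfk$ contributing to the same exponent $\alpha = i + \ps{\bfw}{\bfk}$ do not occur for \emph{all} $\phi$ simultaneously. This is handled exactly as in the proof of Lemma~\ref{lem:restriction-polynomial} — the coefficient of $T^\alpha$ is a homogeneous polynomial of degree $j \le q-1$ in the line parameters $a_0,\dots,a_\eta$, hence vanishes on all of $\FF_q^{\eta+1}$ iff it is the zero polynomial iff all its monomial coefficients $\binom{j}{\bfk}$ vanish in $\FF_q$ — so I would simply invoke that computation rather than redo it. Everything else is bookkeeping: collecting the per-line conditions over all $L \in \Phi_\eta$, and rewriting $\Delta(j,\eta)$ via Lemma~\ref{lemDj}.
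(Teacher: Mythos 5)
Your proposal is correct and follows essentially the same route as the paper, which derives Proposition~\ref{caracLRS} from monomiality, Lemma~\ref{lem:restriction-polynomial} and Lemma~\ref{lemDj} via exactly the discussion you reproduce. If anything, you are more explicit than the paper's informal deduction about the one genuine subtlety --- that coefficients of exponents colliding after reduction modulo $T^q-T$ cannot cancel for \emph{all} choices of $\phi$, which the homogeneous-polynomial argument from the proof of Lemma~\ref{lem:restriction-polynomial} indeed settles.
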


\section{Analyses of sequences of degree sets}
\label{sec:degree-set}

For a generic tuple $(\eta, q, d)$, it seems difficult to give an explicit description of the degree set of $\Lift^\eta \RS_q(d)$. Our approach is to analyse \emph{sequences} of degree sets $D(q, d, \eta)$ with varying parameters $q = p^e$, $d$, and $\eta$, in order to produce good asymptotic families of codes.

We will illustrate our analyses with graphical representations of degree sets. Our convention is the following. Assume one wants to represent a degree set $D \subseteq [q-1]^2$. If $(i,j) \in D$, then a black (or sometimes grey) unit square is represented at coordinate $(i, j)$; otherwise, a white unit square is plotted. Such an illustration is proposed in Example~\ref{ex:degree-set}.

\begin{example}
    \label{ex:degree-set}
    The degree set $D$ of $\Lift^2 (\RS_8(5))$, namely
    \[
    D = \{ (0,0), (1,0), (2,0), (3,0), (4,0), (5,0), (1,0), (1,1), (1,2), (1,3), (2,0), (2,1), (4,0), (4,1), (4,4) \}
    \]
    is represented in Figure~\ref{fig:ex-p2-r3-d5-eta2}.
    
\begin{figure}[!ht]
    \centering
    \includegraphics[scale=0.35]{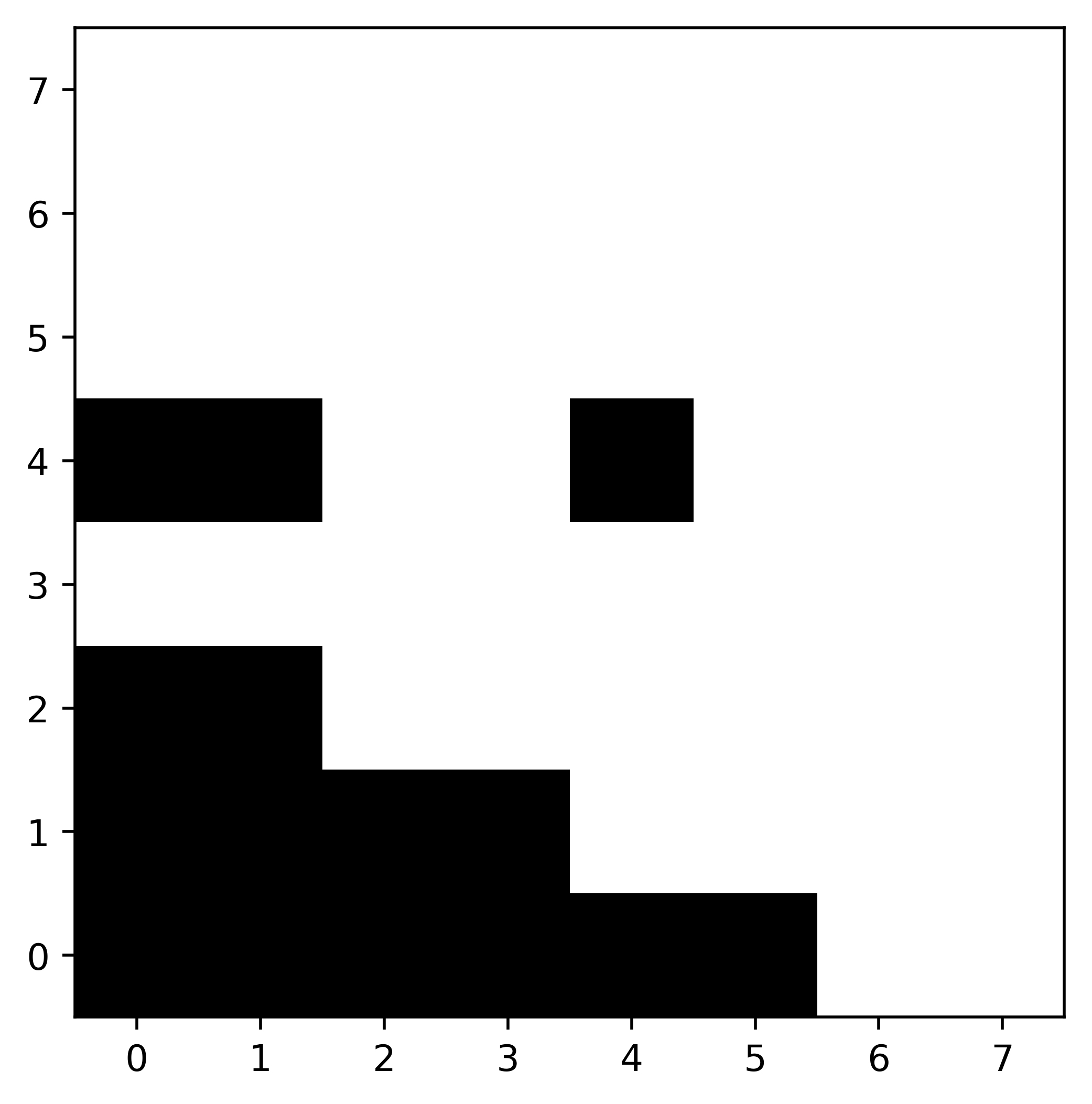}
    \caption{A representation of the degree set $D$ of $\Lift^2 \RS_8(5)$. 
    }
    \label{fig:ex-p2-r3-d5-eta2}
\end{figure}
\end{example}

Let us now provide generic relations between $\eta$-lifted codes of varying parameters.

\subsection{Increasing and decreasing sequences of $\eta$-lifted codes}

\subsubsection{Sequence $(D(q, d, \eta))_{\eta \ge 1}$, with $(q,d)$ fixed and varying $\eta$}

\begin{lemma}
Let us fix a prime power $q$ and $d \le q-1$. The sequence of codes $(\Lift^\eta \RS_q(d))_{\eta \ge 1}$ is decreasing with respect to the inclusion of codes.
\end{lemma}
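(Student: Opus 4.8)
The plan is to show that $\Lift^{\eta+1}\RS_q(d) \subseteq \Lift^\eta\RS_q(d)$ for every $\eta \ge 1$, which immediately gives the claimed monotonicity. The natural way to do this is to observe that $\eta$-lines form a subfamily of $(\eta+1)$-lines: if $\phi \in \FF_q[T]$ has $\deg\phi \le \eta$, then in particular $\deg\phi \le \eta+1$, so $\Phi_\eta \subseteq \Phi_{\eta+1}$. Consequently, the defining constraints of $\Lift^{\eta+1}\RS_q(d)$ — namely that $\ev_{\FF_q}(f \circ L) \in \RS_q(d)$ for \emph{all} $L \in \Phi_{\eta+1}$ — are a superset of the constraints defining $\Lift^\eta\RS_q(d)$. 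Hence any $f$ satisfying all the $(\eta+1)$-line constraints \emph{a fortiori} satisfies all the $\eta$-line constraints, so $\ev_{\FF_q^2}(f) \in \Lift^\eta\RS_q(d)$.

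Concretely, I would first recall the definition $\Phi_\eta = \{ L_\phi : t \mapsto (t,\phi(t)) \mid \phi \in \FF_q[T],\ \deg\phi \le \eta\}$ and note the trivial inclusion $\Phi_\eta \subseteq \Phi_{\eta+1}$. Then, for $\bfc = \ev_{\FF_q^2}(f) \in \Lift^{\eta+1}\RS_q(d)$, pick any $L \in \Phi_\eta$; since $L \in \Phi_{\eta+1}$ as well, the membership of $\bfc$ in $\Lift^{\eta+1}\RS_q(d)$ forces $\ev_{\FF_q}(f\circ L) \in \RS_q(d)$. As this holds for every $L \in \Phi_\eta$, we conclude $\bfc \in \Lift^\eta\RS_q(d)$. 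Iterating (or using transitivity of inclusion) yields $\Lift^{\eta'}\RS_q(d) \subseteq \Lift^\eta\RS_q(d)$ whenever $\eta' \ge \eta$, i.e. the sequence is decreasing with respect to inclusion.

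There is essentially no obstacle here: the statement is a direct consequence of the monotonicity of the constraint set in $\eta$, and the only thing worth double-checking is that the code length $n = q^2$ is the same for all $\eta$ (it is, since the evaluation domain $\FF_q^2$ does not depend on $\eta$), so that "inclusion of codes" is meaningful. I would keep the proof to two or three sentences.

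\begin{proof}
Fix $\eta \ge 1$. Since any $\phi \in \FF_q[T]$ with $\deg\phi \le \eta$ also satisfies $\deg\phi \le \eta+1$, we have $\Phi_\eta \subseteq \Phi_{\eta+1}$. Hence, if $\bfc = \ev_{\FF_q^2}(f) \in \Lift^{\eta+1}\RS_q(d)$, then $\ev_{\FF_q}(f\circ L) \in \RS_q(d)$ for every $L \in \Phi_{\eta+1}$, and in particular for every $L \in \Phi_\eta$; thus $\bfc \in \Lift^\eta\RS_q(d)$. This proves $\Lift^{\eta+1}\RS_q(d) \subseteq \Lift^\eta\RS_q(d)$ for all $\eta \ge 1$, so the sequence $(\Lift^\eta\RS_q(d))_{\eta \ge 1}$ is decreasing with respect to inclusion. (All these codes have the same length $n = q^2$, so the inclusions make sense.)
\end{proof}
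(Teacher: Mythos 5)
Your proof is correct and follows exactly the paper's argument: since $\deg\phi\le\eta$ implies $\deg\phi\le\eta+1$, an $\eta$-line is also an $(\eta+1)$-line, so the constraints defining $\Lift^{\eta+1}\RS_q(d)$ subsume those defining $\Lift^{\eta}\RS_q(d)$. Your version merely spells out in more detail what the paper states in one sentence.
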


\begin{proof}
It is enough to notice that an $\eta$-line is also an $(\eta+1)$-line, therefore every codeword of $\Lift^{\eta+1} \RS_q(d)$ fulfills the constraints defining $\Lift^\eta \RS_q(d)$.
\end{proof}

In Figure~\ref{fig:incresing-eta}, we plot a sequence of degree sets which illustrates this result on $\FF_{16}$.

\begin{figure}[!ht]
\centering
\begin{subfigure}[t]{0.3\textwidth}
\centering
\includegraphics[width=0.9\textwidth]{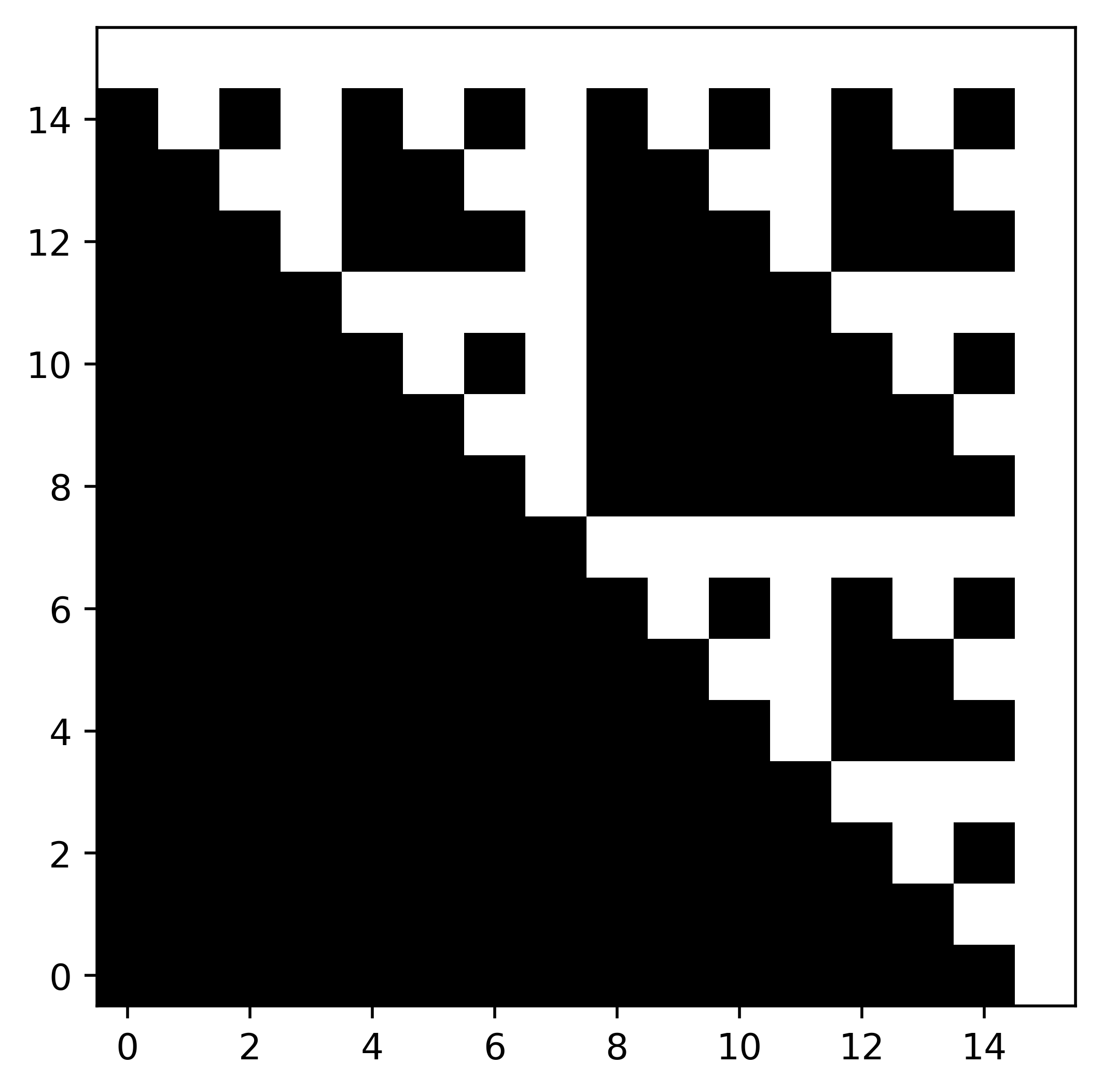}
\subcaption{$\eta=1$}
\end{subfigure}%
\begin{subfigure}[t]{0.3\textwidth}
\centering
\includegraphics[width=0.9\textwidth]{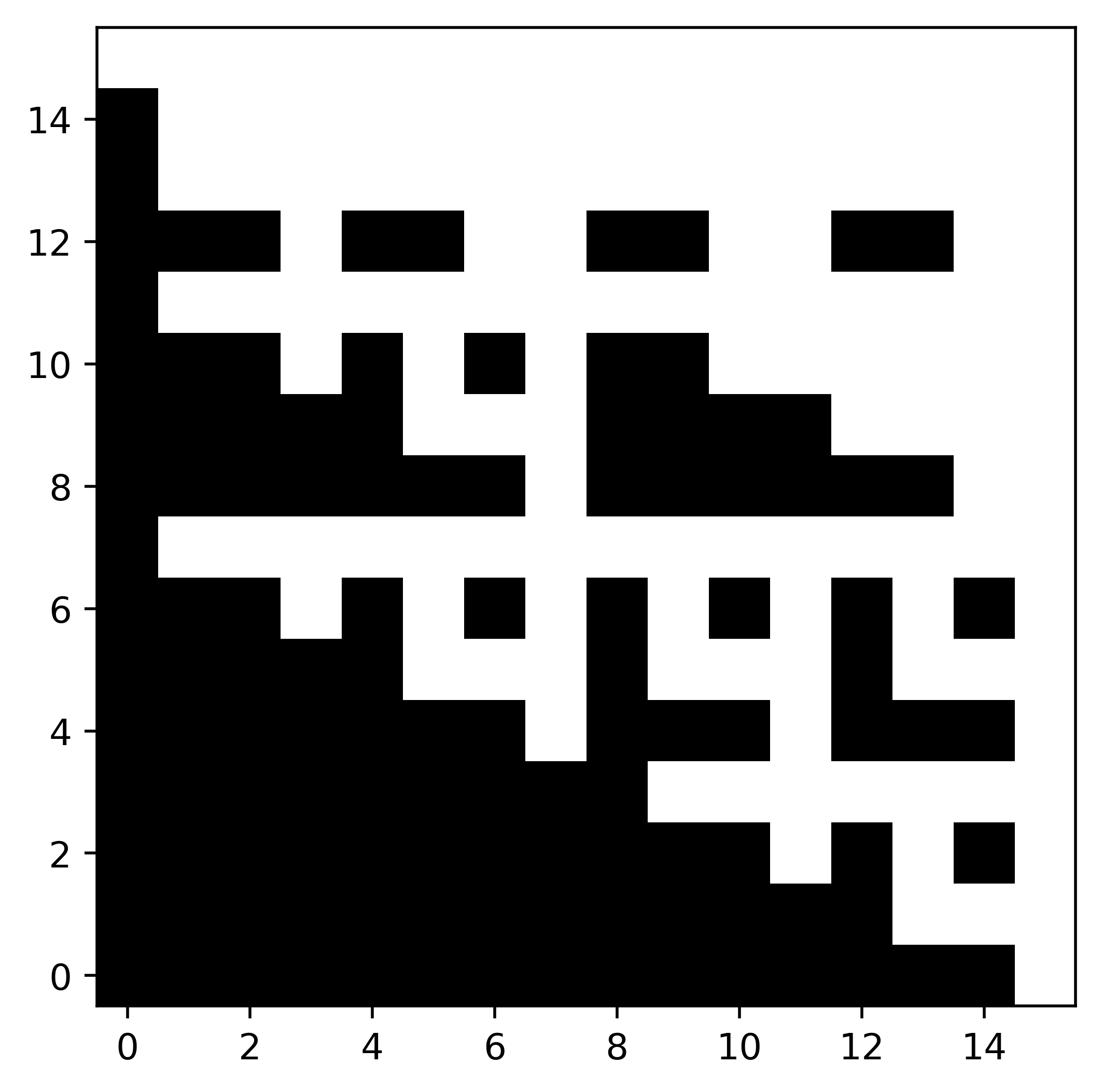}
\subcaption{$\eta=2$}
\end{subfigure}%
\begin{subfigure}[t]{0.3\textwidth}
\centering
\includegraphics[width=0.9\textwidth]{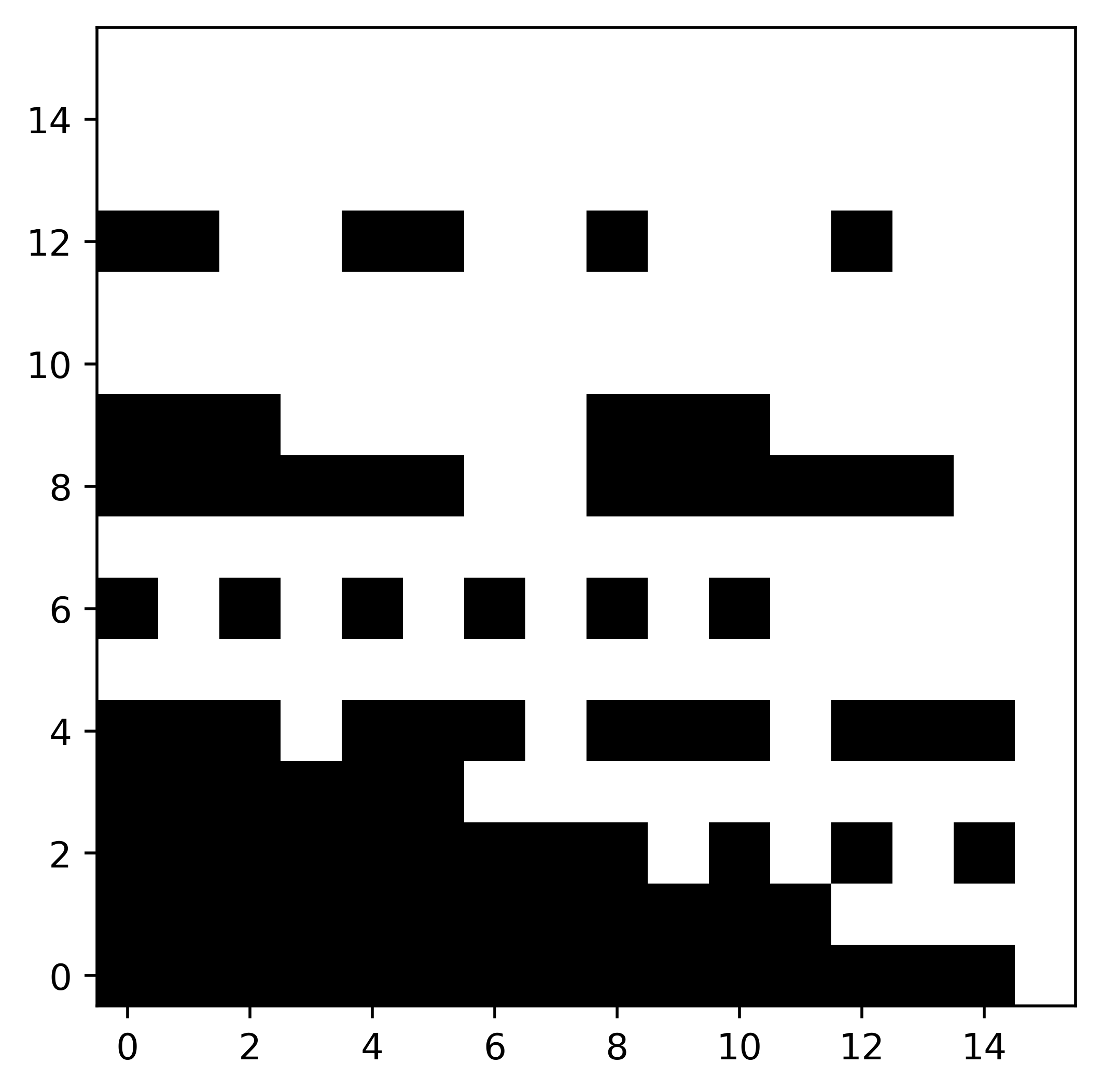}
\subcaption{$\eta=3$}
\end{subfigure}

\begin{subfigure}[t]{0.3\textwidth}
\centering
\includegraphics[width=0.9\textwidth]{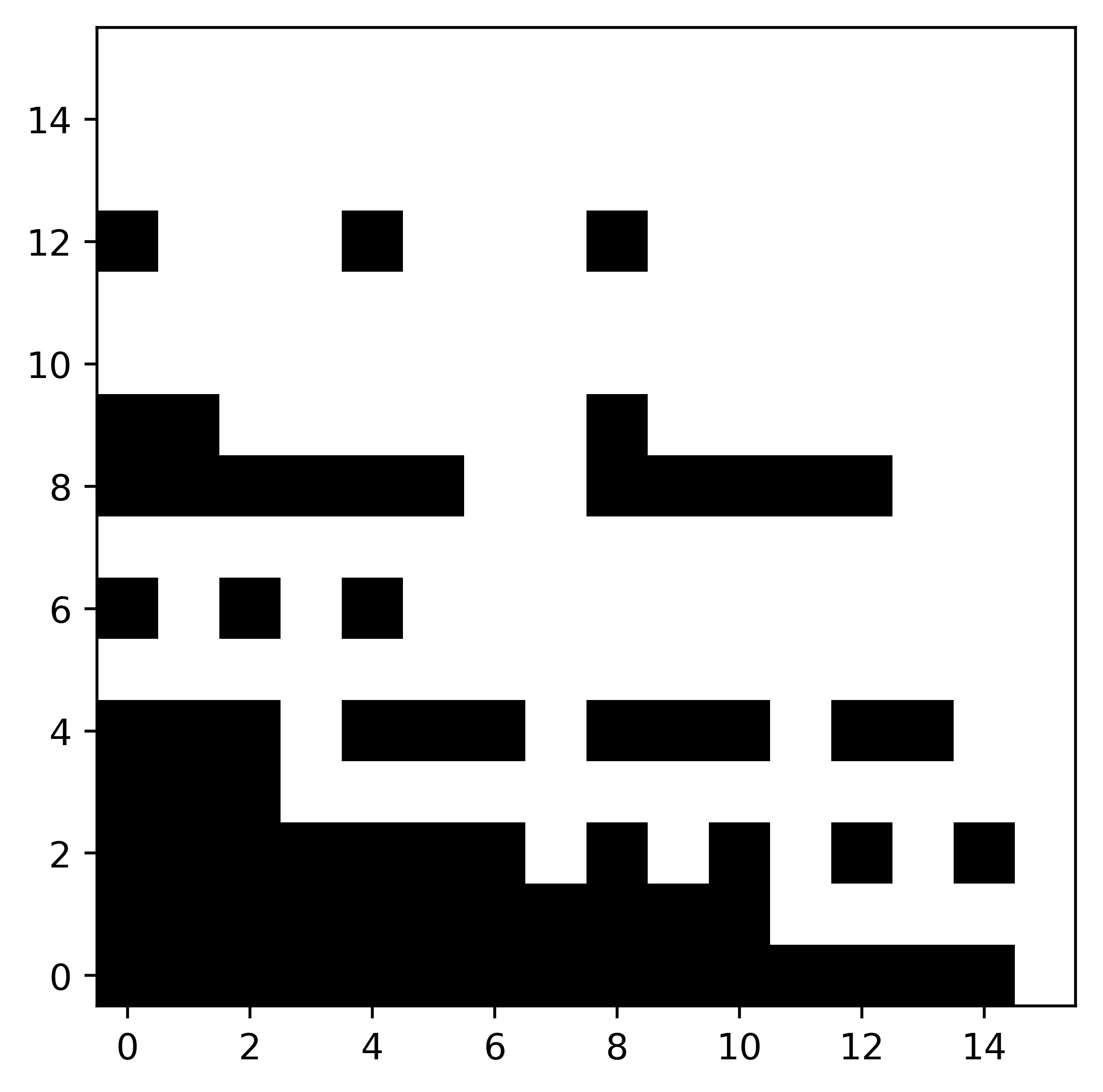}
\subcaption{$\eta=4$}
\end{subfigure}%
\begin{subfigure}[t]{0.3\textwidth}
\centering
\includegraphics[width=0.9\textwidth]{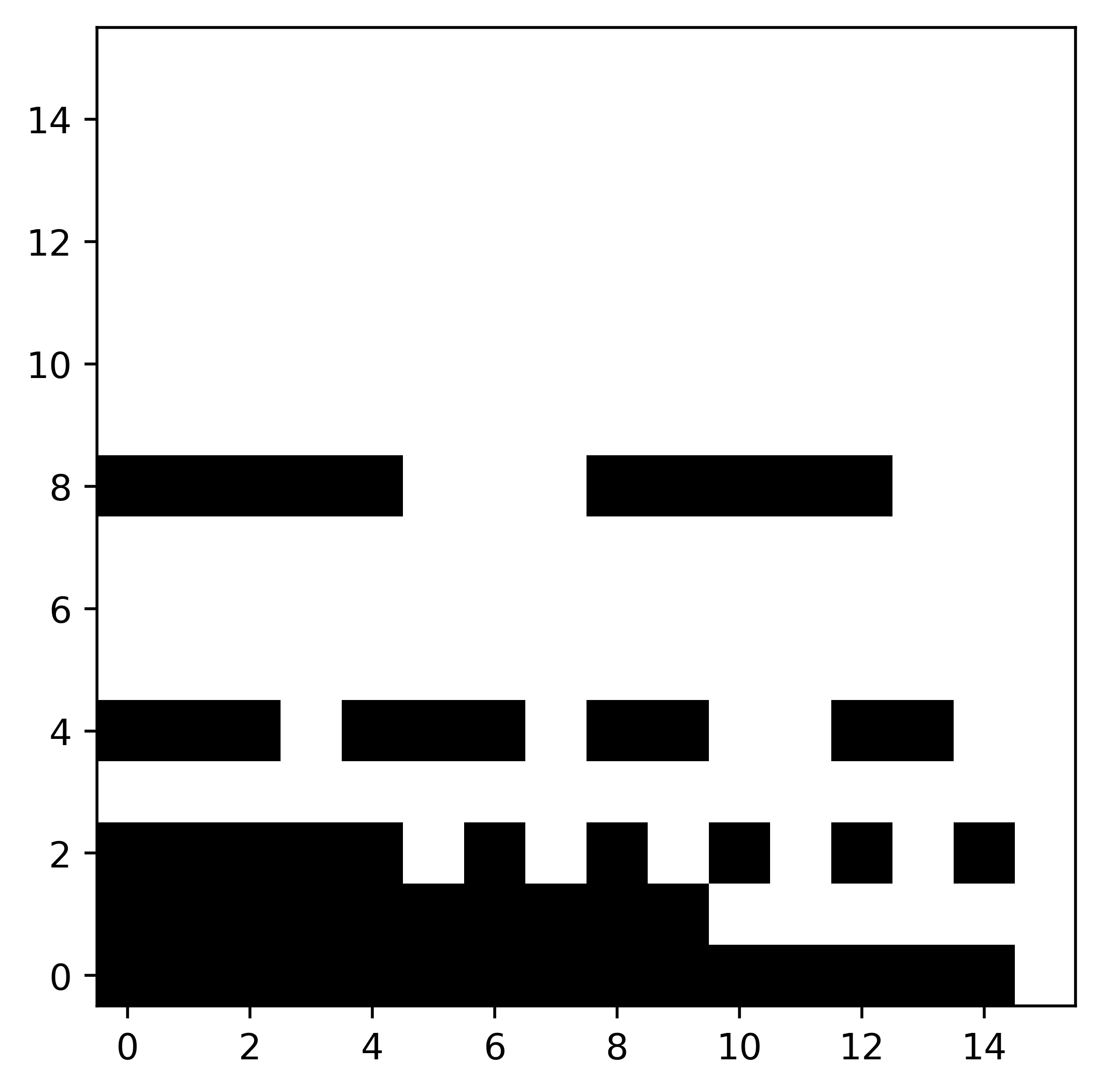}
\subcaption{$\eta=5$}
\end{subfigure}%
\begin{subfigure}[t]{0.3\textwidth}
\centering
\includegraphics[width=0.9\textwidth]{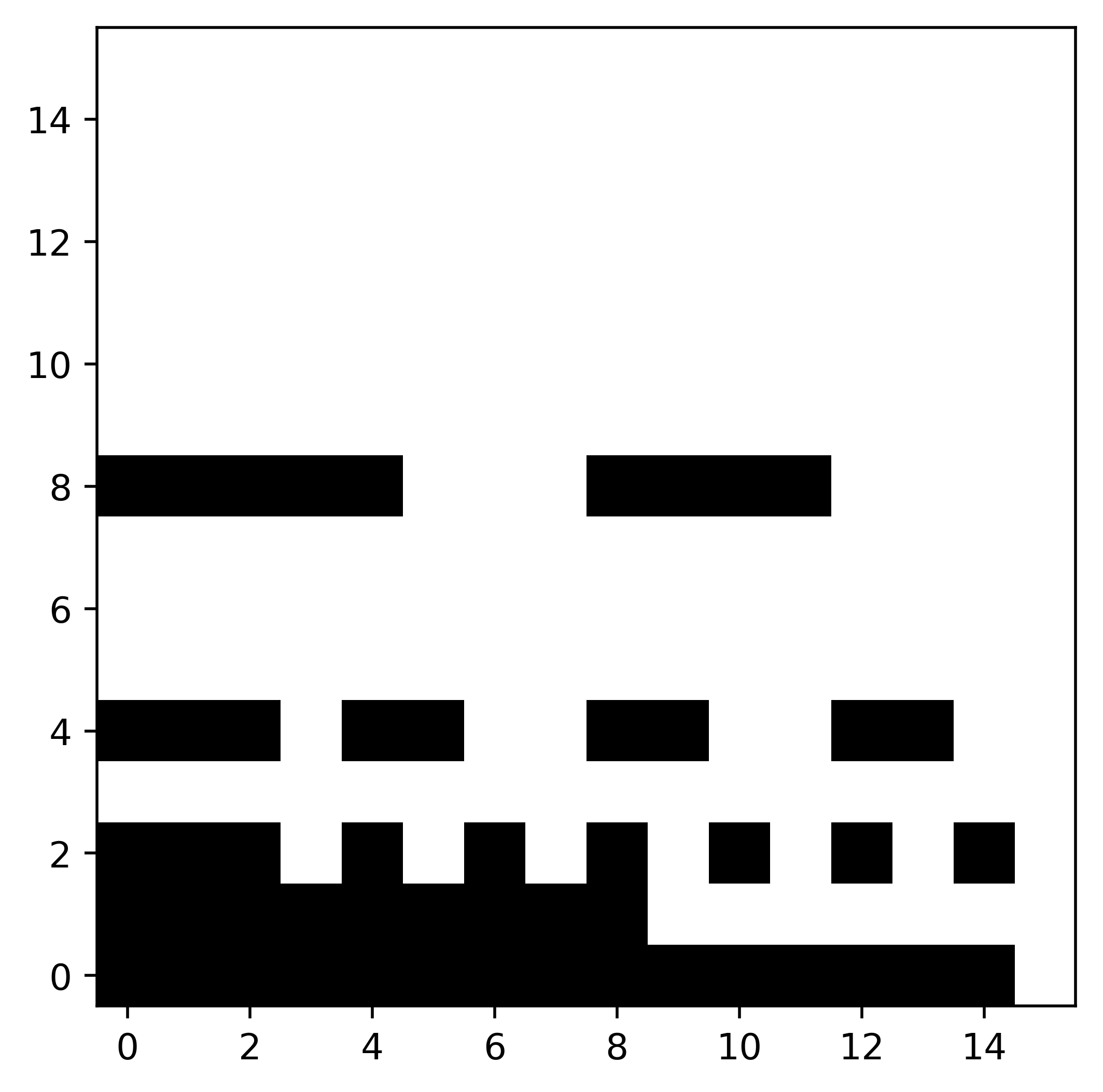}
\subcaption{$\eta=6$}
\end{subfigure}
\caption{\label{fig:incresing-eta}Representation of the degree set of $\Lift^\eta \RS_{16}(14)$ for different values of $\eta$}
\end{figure}

\subsubsection{Sequence $(D(q, d, \eta))_{0 \le d \le q-2}$ with $(q,\eta)$ fixed and varying $d$}

\begin{lemma}
    \label{dcroiss}
    Let us fix a prime power $q$ and $\eta \ge 1$. The sequence $(\Lift^\eta \RS_q(d))_{d \ge 0}$ is increasing.
\end{lemma}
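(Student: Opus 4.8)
The plan is to show the inclusion $\Lift^\eta \RS_q(d) \subseteq \Lift^\eta \RS_q(d+1)$ for every $0 \le d \le q-2$; if $d+1 = q-1$ this is trivial since the target is the full space, so assume $d+1 \le q-2$ as well and work at the level of degree sets. Since both codes are monomial (by the proposition on monomiality), it suffices to prove $D(q,d,\eta) \subseteq D(q,d+1,\eta)$, i.e. that every monomial $X^iY^j$ whose restrictions to all $\eta$-lines land in $\RS_q(d)$ also has all restrictions in $\RS_q(d+1)$. This last implication is immediate from the definition of $\RS_q(d)$: a univariate polynomial of degree $\le d$ has degree $\le d+1$.

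Concretely, I would invoke Proposition~\ref{caracLRS}. Fix $(i,j) \in D(q,d,\eta) \subseteq [0,d]^2 \subseteq [0,d+1]^2$. By that characterisation, for every $\bfk \in \NN^\eta$ with $|\bfk^{(r)}| \le j^{(r)}$ for all $r \ge 0$ we have $\Red{q}(i + \ps{\bfw}{\bfk}) \le d$. A fortiori $\Red{q}(i + \ps{\bfw}{\bfk}) \le d+1$, which is exactly the condition (again via Proposition~\ref{caracLRS}) for $(i,j) \in D(q,d+1,\eta)$. Hence $D(q,d,\eta) \subseteq D(q,d+1,\eta)$, and monomiality of both codes turns this into $\Lift^\eta \RS_q(d) \subseteq \Lift^\eta \RS_q(d+1)$.

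Alternatively — and perhaps more cleanly, avoiding the edge case $d+1 = q-1$ where Proposition~\ref{caracLRS} does not directly apply — one can argue straight from the definition of $\eta$-lifted codes: if $\bfc = \ev_{\FF_q^2}(f) \in \Lift^\eta \RS_q(d)$, then for every $L \in \Phi_\eta$ we have $\ev_{\FF_q}(f \circ L) \in \RS_q(d) \subseteq \RS_q(d+1)$, so $\bfc \in \Lift^\eta \RS_q(d+1)$. This needs no monomiality and no digit analysis at all.

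There is essentially no obstacle here: the statement is a soft monotonicity fact, and the only thing to be slightly careful about is the boundary value $d = q-2$ (where $\Lift^\eta \RS_q(q-1) = \FF_q^{q^2}$ trivially contains everything) and the hypothesis $d+1 \le q-1$ needed for $\RS_q(d+1)$ to be a well-defined full-length Reed–Solomon code. The direct-definition argument handles both automatically, so I would present that as the proof and perhaps remark that it is also visible on the degree sets via Proposition~\ref{caracLRS}.
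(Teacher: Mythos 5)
Your proof is correct, and the argument you settle on --- the direct one from the definition, using $\RS_q(d) \subseteq \RS_q(d+1)$ --- is exactly the paper's proof. The preliminary detour through monomiality and Proposition~\ref{caracLRS} is sound but unnecessary, as you yourself note.
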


\begin{proof}
It is a straightforward consequence of the embedding of $\RS_q(d)$ into $\RS_q(d+1)$.
\end{proof}

In Figure~\ref{fig:incresing-degree}, we plot a sequence of degree sets which illutrates this result on $\FF_{16}$ with $\eta=2$.

\begin{figure}[ht]
\centering
\begin{subfigure}[t]{0.3\textwidth}
\centering
\includegraphics[width=0.9\textwidth]{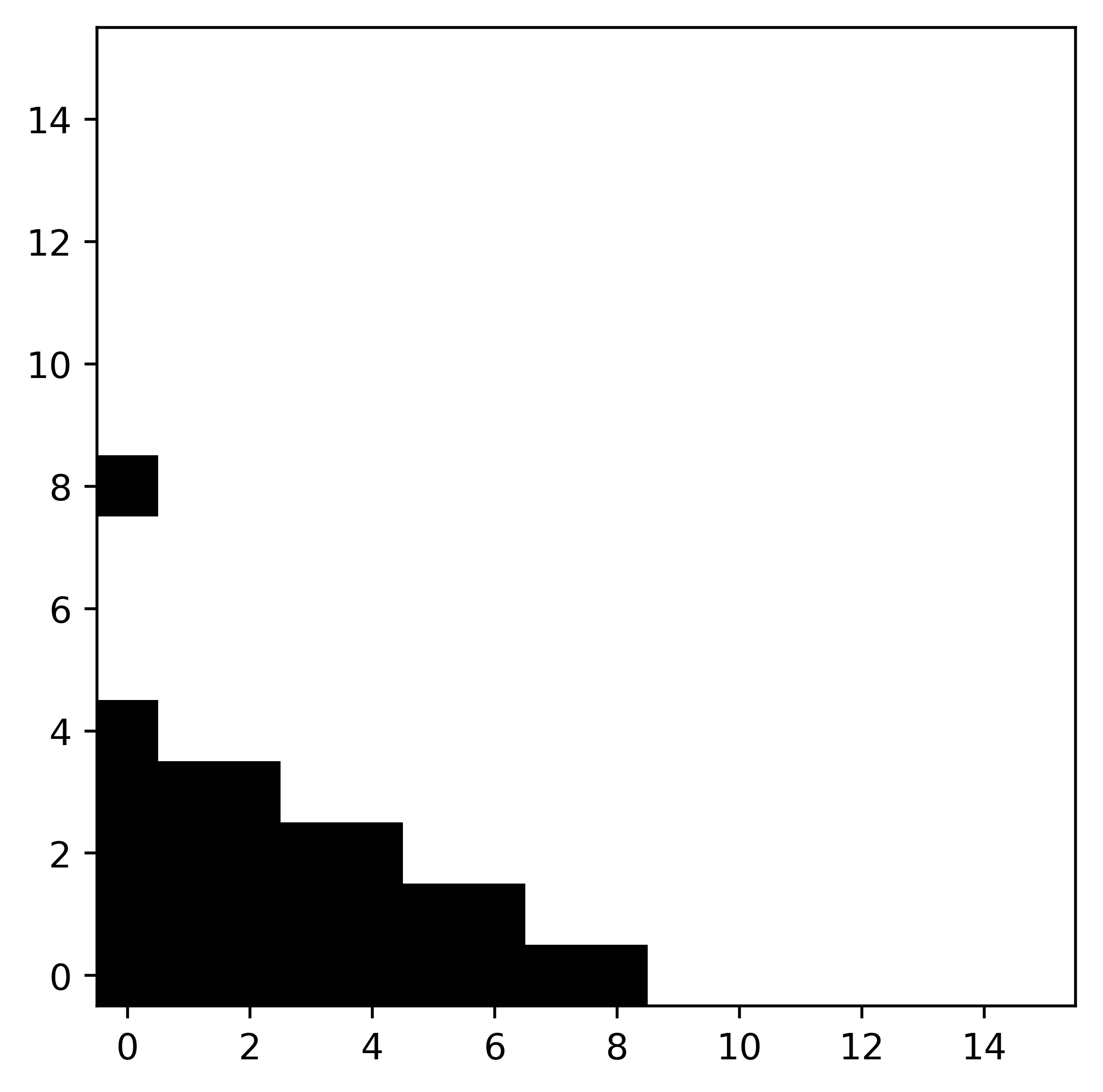}
\subcaption{$d=8$}
\end{subfigure}%
\begin{subfigure}[t]{0.3\textwidth}
\centering
\includegraphics[width=0.9\textwidth]{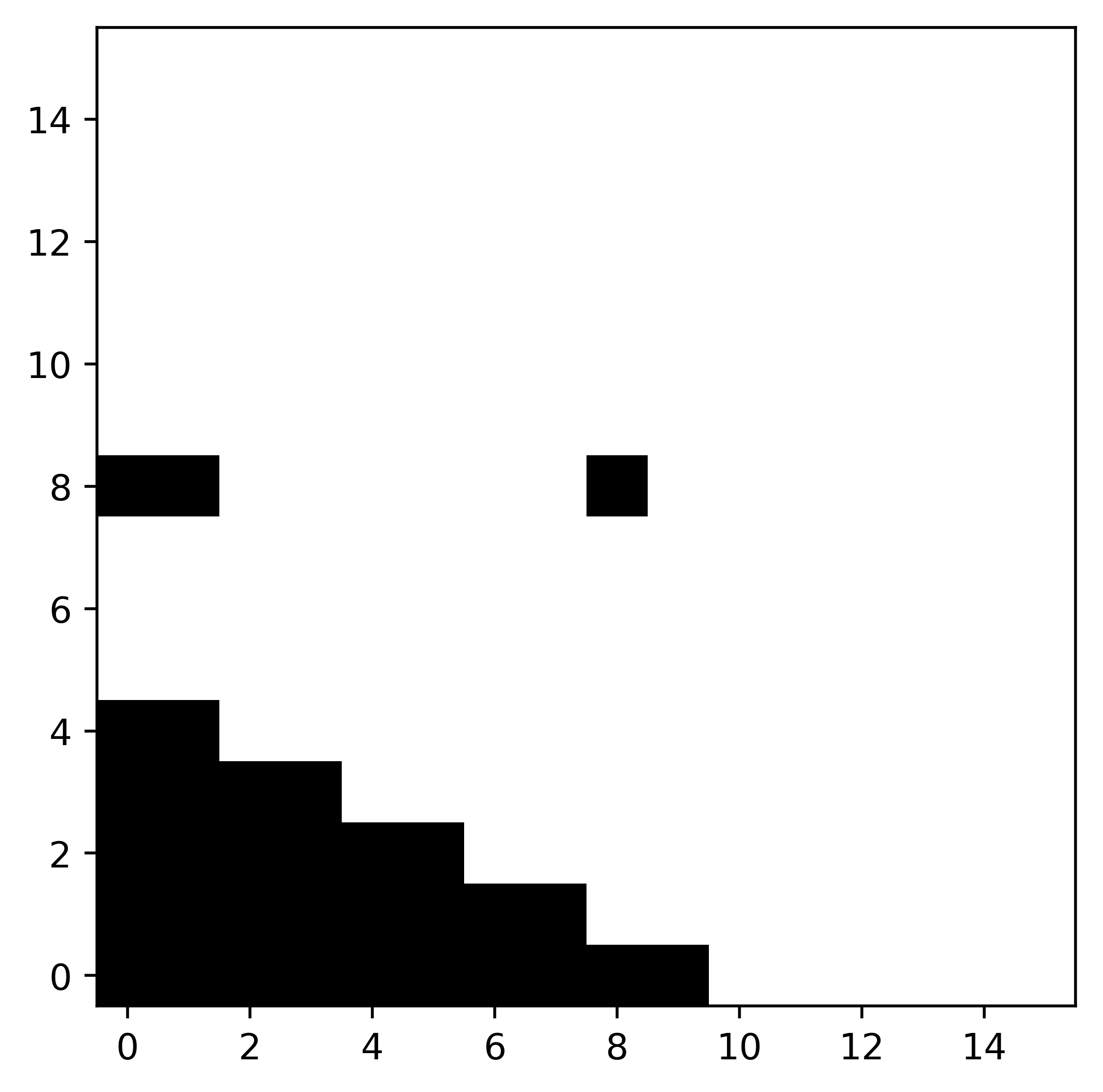}
\subcaption{$d=9$}
\end{subfigure}%
\begin{subfigure}[t]{0.3\textwidth}
\centering
\includegraphics[width=0.9\textwidth]{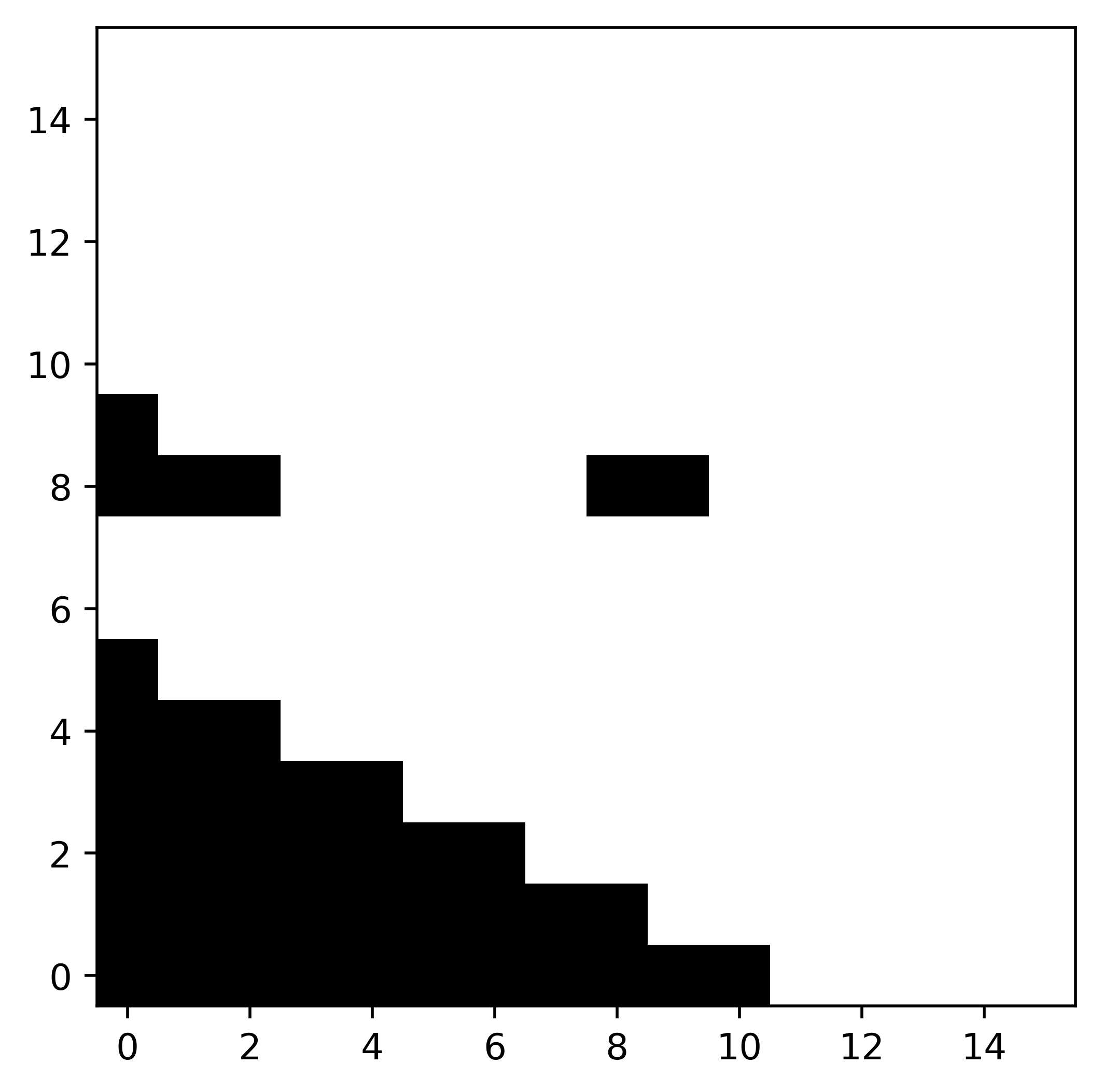}
\subcaption{$d=10$}
\end{subfigure}

\begin{subfigure}[t]{0.3\textwidth}
\centering
\includegraphics[width=0.9\textwidth]{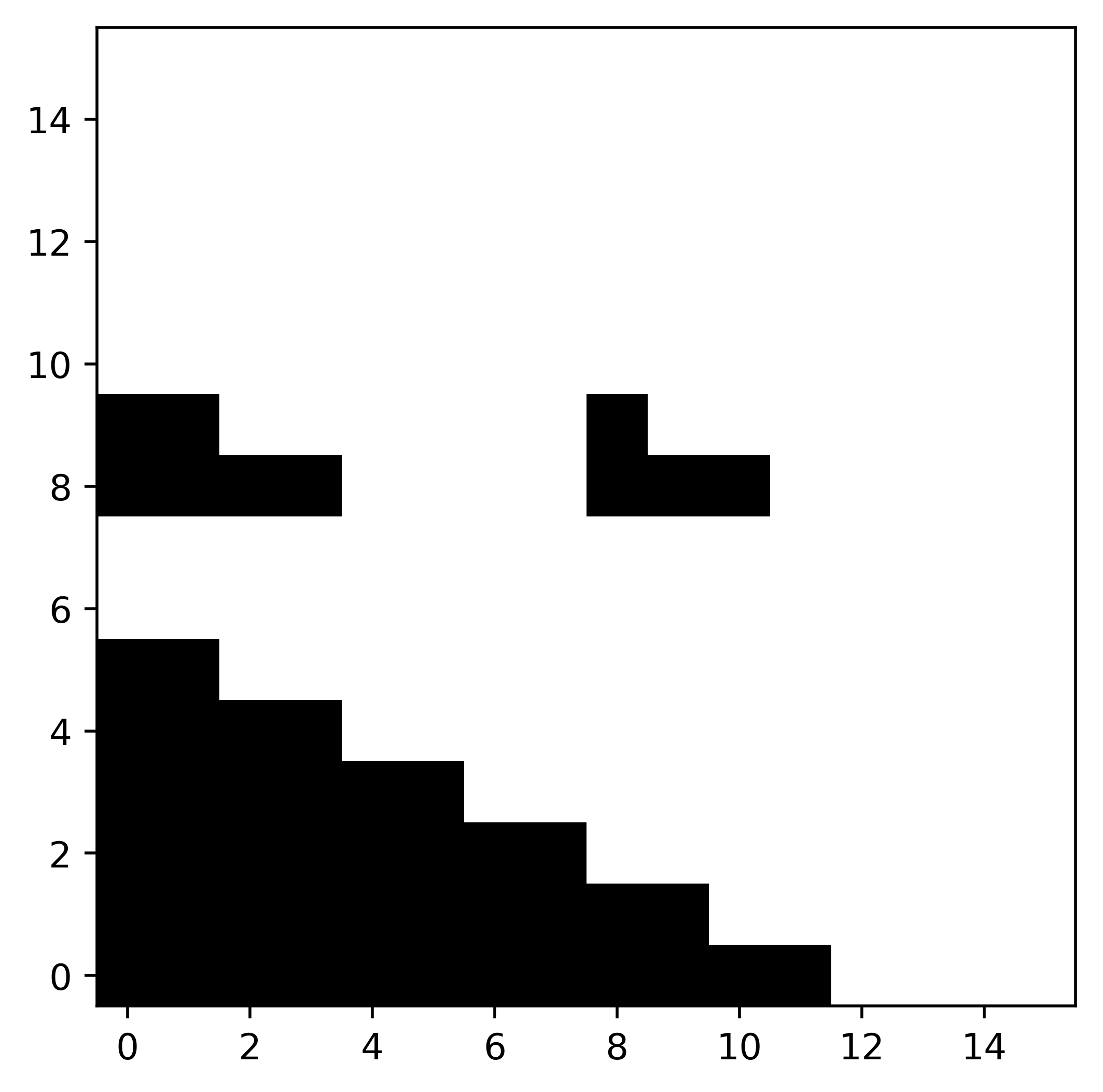}
\subcaption{$d=11$}
\end{subfigure}%
\begin{subfigure}[t]{0.3\textwidth}
\centering
\includegraphics[width=0.9\textwidth]{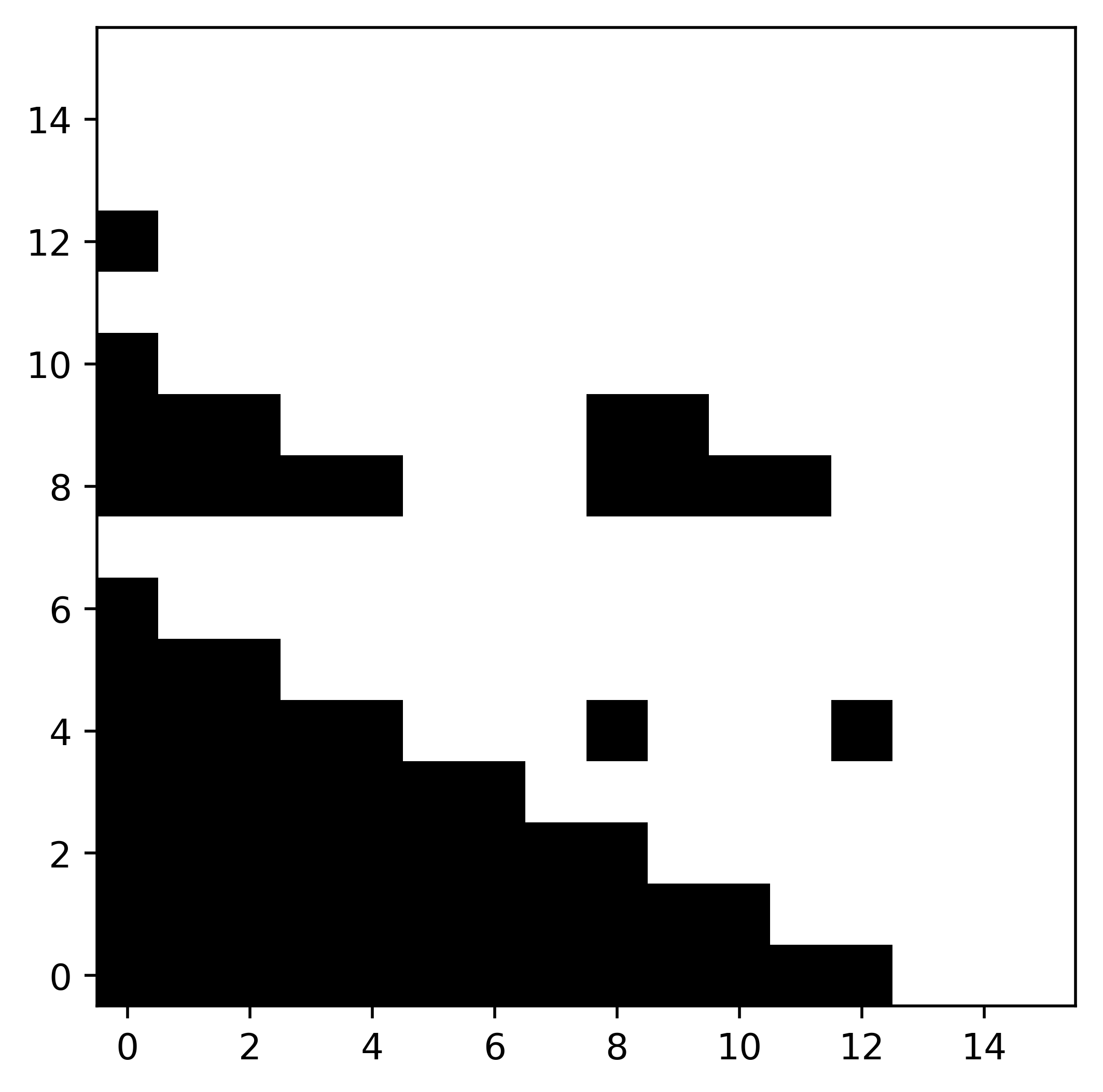}
\subcaption{$d=12$}
\end{subfigure}%
\begin{subfigure}[t]{0.3\textwidth}
\centering
\includegraphics[width=0.9\textwidth]{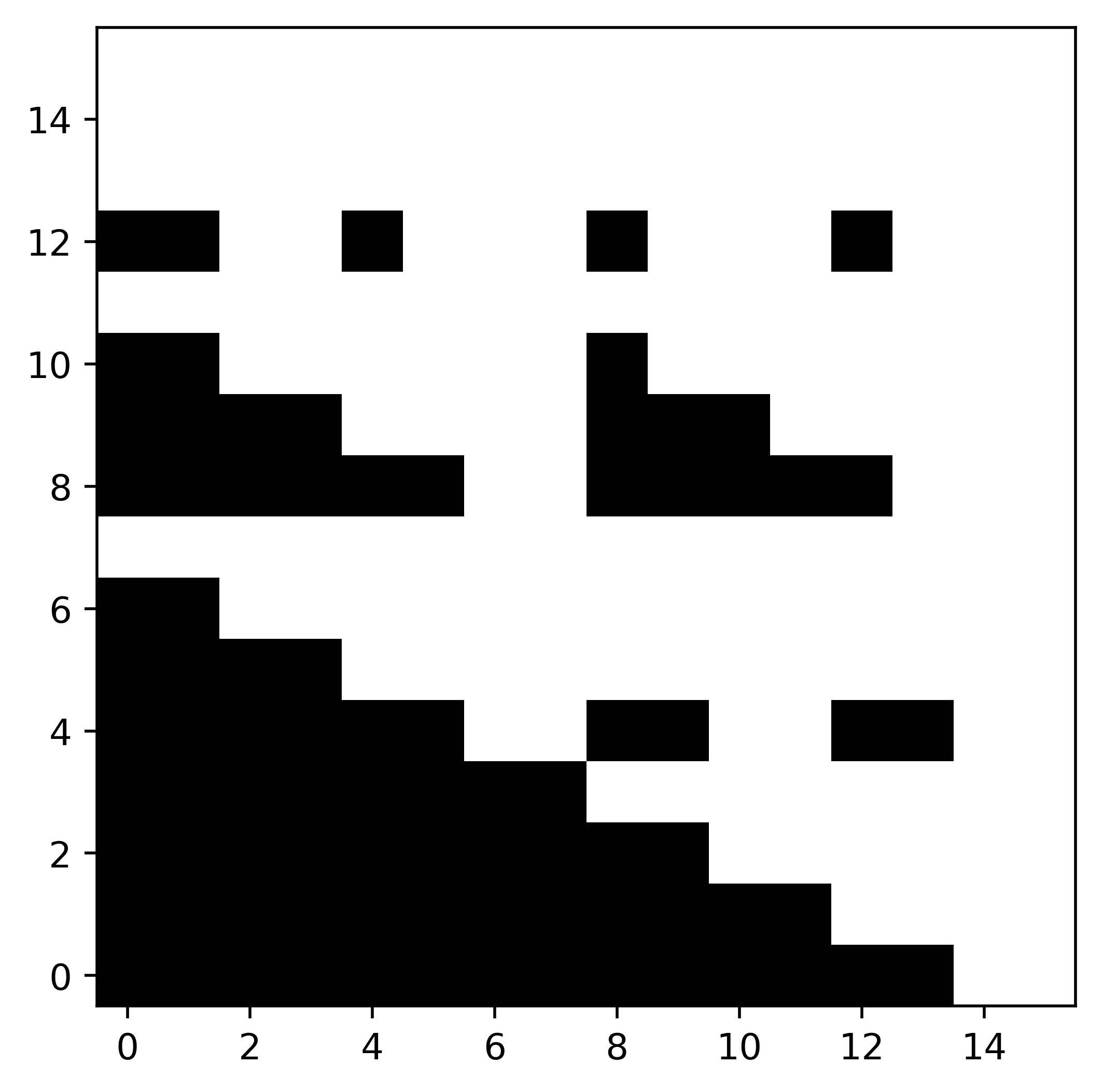}
\subcaption{$d=13$}
\end{subfigure}
\caption{Representation of the degree set of $\Lift^2 \RS_{16}(d)$ for different values of $d$}\label{fig:incresing-degree}
\end{figure}

\subsubsection{Sequence $( D(q, q - \alpha,\eta) )_{q}$ with fixed $(\alpha, \eta)$, and varying $q$}

Let us fix a prime number $p$, and let us consider a sequence of degree sets $( D(p^e, p^e - \alpha, \eta) )_{e \ge 1}$ with fixed $(\alpha, \eta)$, and varying $e$. Figure~\ref{fig:inclusion-degree-sets} represents such a sequence. In this figure, one can notice that $D(p^e, p^e - \alpha, \eta)$ is a subpattern (highlighted in grey) of the larger degree sets $D(p^{e+1}, p^{e+1} - \alpha, \eta)$.

\begin{figure}[ht]
\centering
\begin{subfigure}[t]{0.3\textwidth}
\centering
\includegraphics[width=0.9\textwidth]{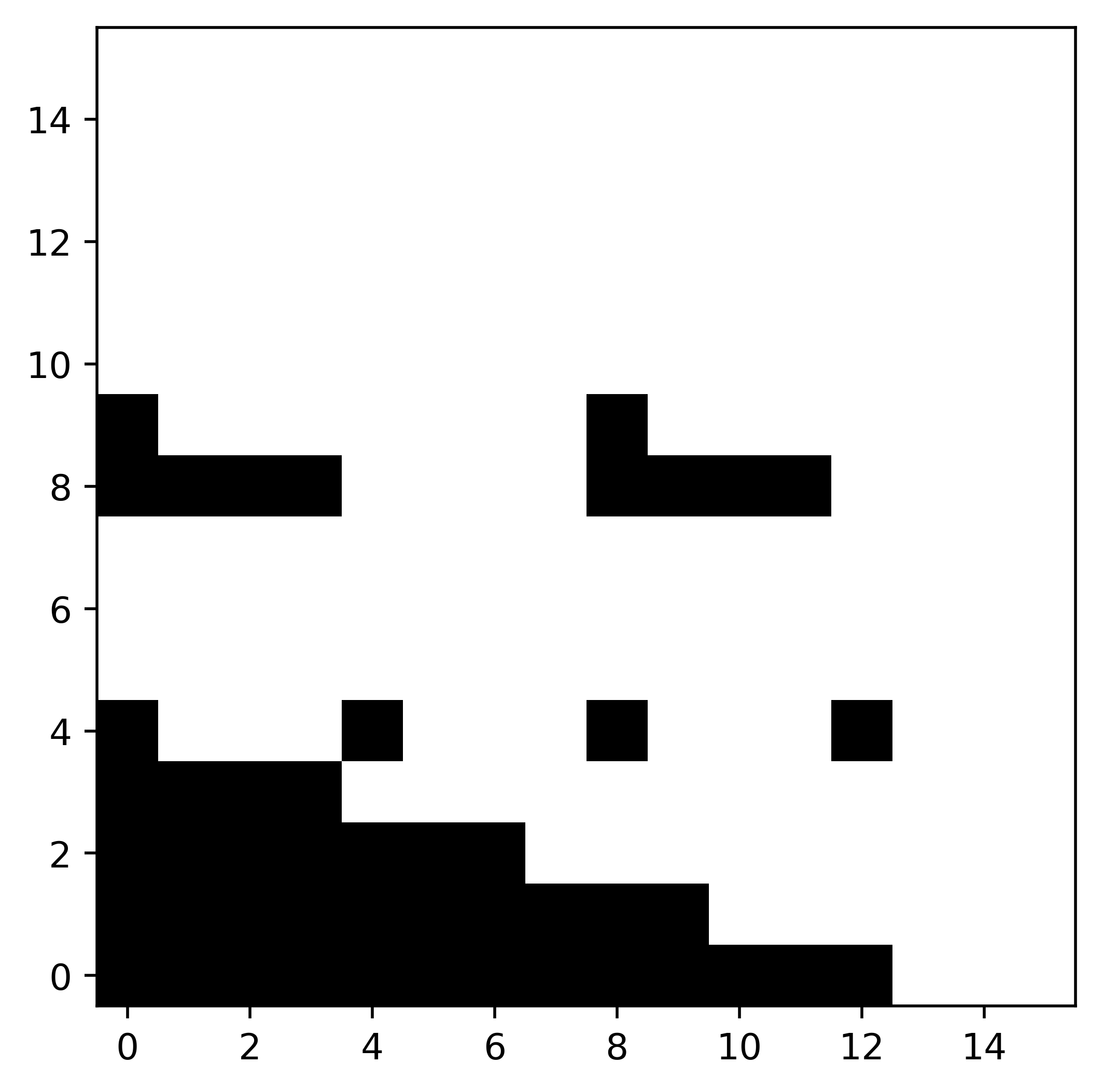}
\subcaption{$e=4$}
\end{subfigure}%
\begin{subfigure}[t]{0.3\textwidth}
\centering
\includegraphics[width=0.9\textwidth]{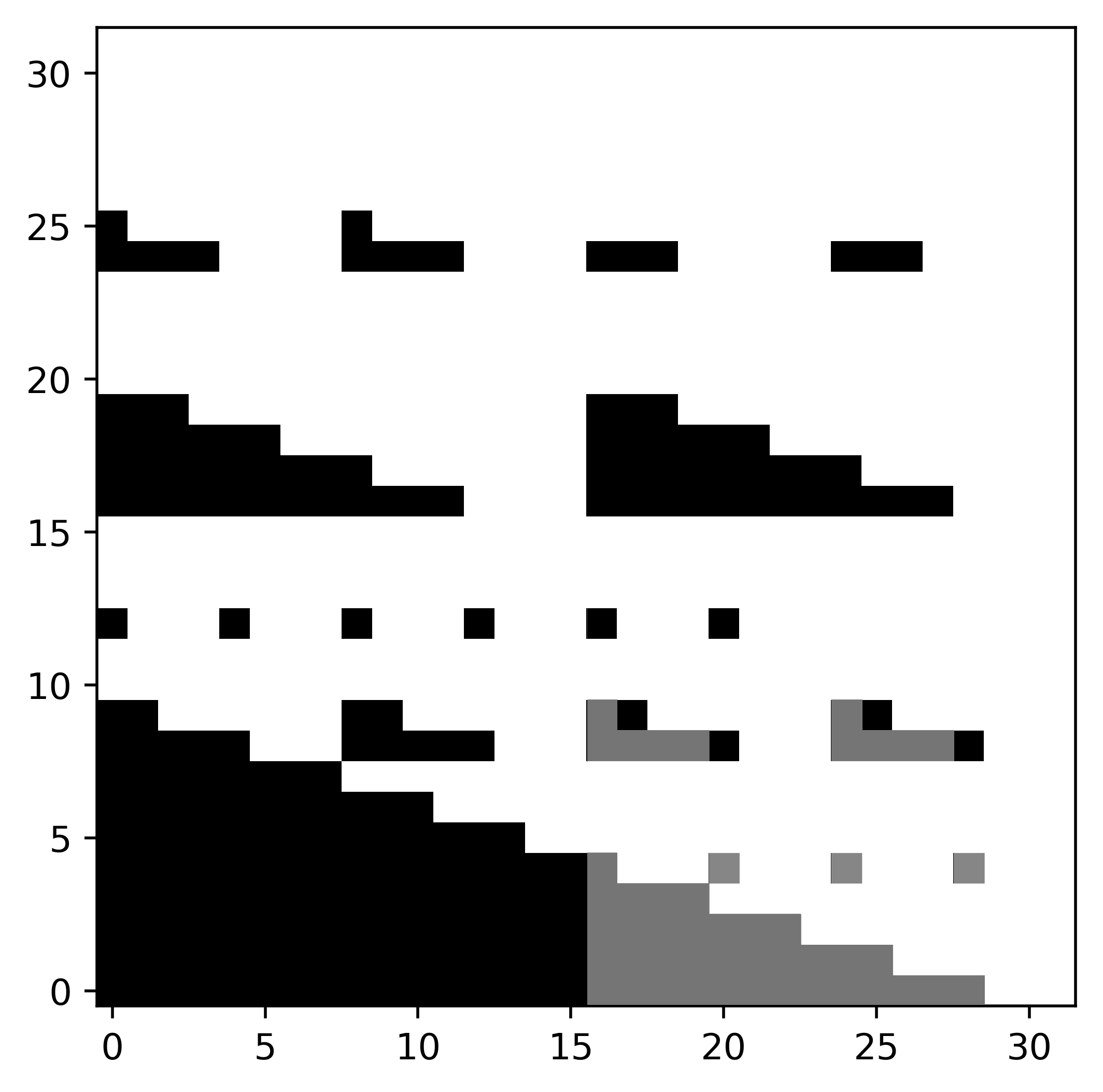}
\subcaption{$e=5$}
\end{subfigure}%
\begin{subfigure}[t]{0.3\textwidth}
\centering
\includegraphics[width=0.9\textwidth]{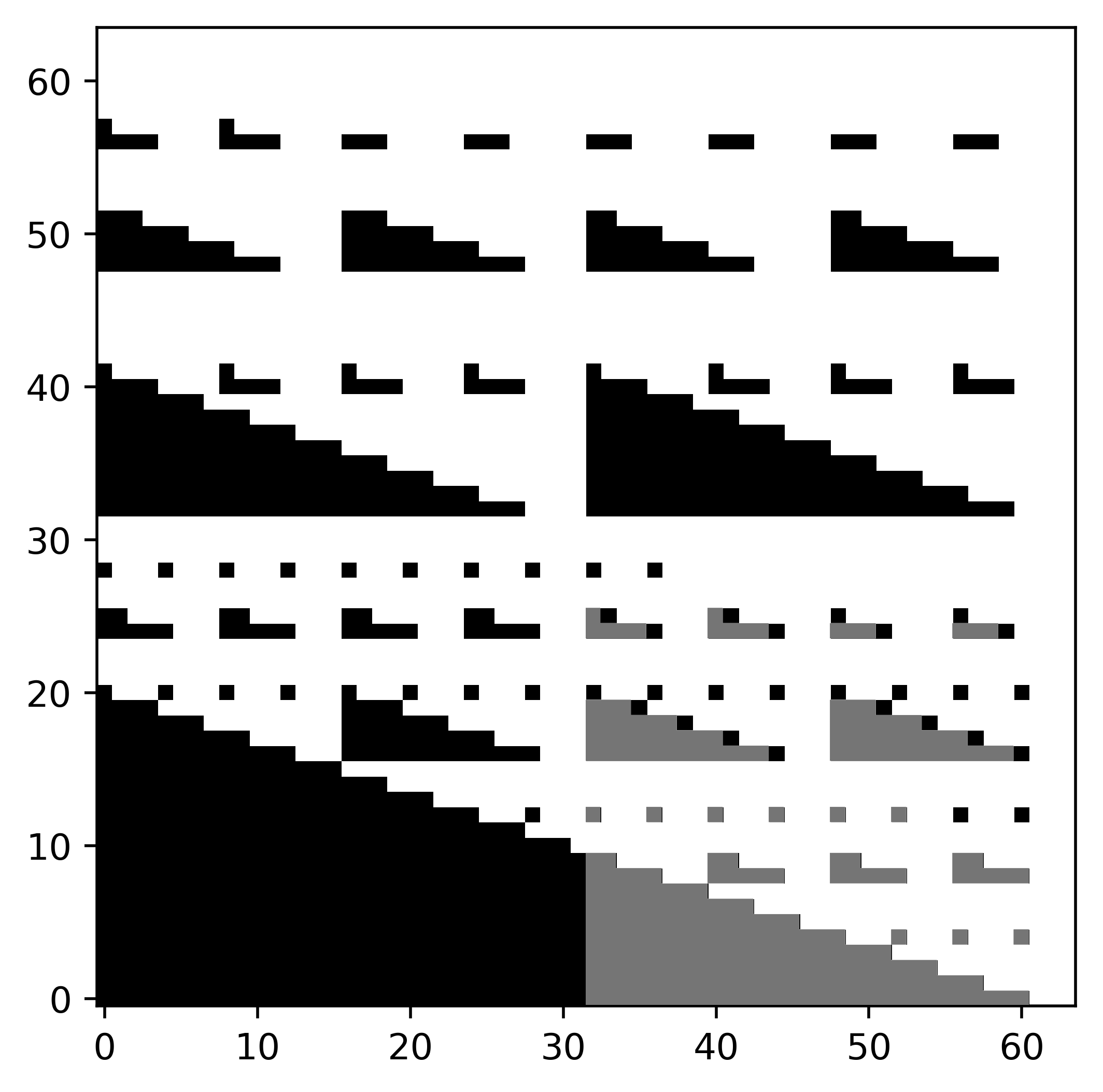}
\subcaption{$e=6$}
\end{subfigure}
\caption{
    \label{fig:inclusion-degree-sets}
    Representation of the degree set of $\Lift^3 \RS_{2^e}(2^e-4)$ for increasing values of $e$. In the degree set over $\FF_{2^e}$, the grey part is an exact copy of the degree set over $\FF_{2^{e-1}}$ which is represented on its left.
    }
\end{figure}

This remark seems trivial at first, but it has a meaningful consequence in terms of codes. Indeed, it shows that the corresponding $\eta$-lifted codes are (up to isomophism) subcodes to each other when the field size $q = p^e$ grows. This property is formalized in the following lemma.

\begin{lemma}
    \label{lem: seq d}
    Let $\eta < q = p^e$ and $2 \le \alpha \le p^e$. If $(p^e - i,j) \in D(p^e, p^e - \alpha, \eta)$, then
    \[
         (p^{e+1} - i, j) \in D(p^{e+1}, p^{e+1} - \alpha, \eta)\,.
    \]  
\end{lemma}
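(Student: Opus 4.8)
The plan is to use the digit-based characterisation of $D(q,d,\eta)$ given by Proposition~\ref{caracLRS} and to track carefully what happens to the base-$p$ digits when $q=p^e$ is replaced by $p^{e+1}$. The key point is that writing $i$ with $0 < i \le p^e$, the integer $p^e-i$ has a base-$p$ expansion supported on digits $0,\dots,e-1$ (with $p^e - i = \sum_{r=0}^{e-1}(p^e-i)^{(r)}p^r$ when $i>0$), whereas $p^{e+1}-i$ agrees with $p^e-i$ on digits $0,\dots,e-1$ and has an extra digit $(p-1)$ in position $e$. More precisely, one checks the clean identity $p^{e+1} - i = (p^e - i) + (p-1)p^e$ for $0 < i \le p^e$. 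Likewise $\alpha \le p^e$ forces $d' := p^{e+1}-\alpha$ to agree with $d:=p^e-\alpha$ on the low $e$ digits and to carry the digit $p-1$ at position $e$ (again using $\alpha > 0$; the edge case $\alpha = p^e$ is handled separately since then $d=0$ and the statement is essentially vacuous, or one notes $i$ must then equal $p^e$ and checks it directly).

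**Main steps.** First I would fix $(p^e-i,j) \in D(p^e, p^e-\alpha,\eta)$ and take an arbitrary $\bfk \in \NN^\eta$ with $|\bfk^{(r)}| \le j^{(r)}$ for all $r\ge 0$; the goal is to show $\Red{p^{e+1}}((p^{e+1}-i) + \ps{\bfw}{\bfk}) \le p^{e+1}-\alpha$. Since $(p^e-i,j)$ lies in the old degree set, the hypothesis applied to the same $\bfk$ gives $\Red{p^e}((p^e-i)+\ps{\bfw}{\bfk}) \le p^e - \alpha$. The crux is then a lemma relating $\Red{p^{e+1}}$ and $\Red{p^e}$ on the specific integer $N := (p^e - i) + \ps{\bfw}{\bfk}$: I would argue that $N = (p^e-i) + \ps{\bfw}{\bfk}$ and $N' := (p^{e+1}-i) + \ps{\bfw}{\bfk} = N + (p-1)p^e$ satisfy $\Red{p^{e+1}}(N') = \Red{p^e}(N) + (p-1)p^e$ whenever $\Red{p^e}(N) \neq 0$ (and handle $\Red{p^e}(N)=0$, i.e. $N=0$ or $(p^e-1)\mid N$ with $N>0$, separately — if $N=0$ then $i=p^e$ and $\bfk=\bfzero$, a case one checks by hand; otherwise $N\equivstar{p^e}0$ means $\Red{p^e}(N)=0$ only if... actually $\Red{p^e}$ lands in $[0,p^e-1]$ so $\Red{p^e}(N)=0$ iff $N=0$, the relevant degenerate case). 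Granting this, $\Red{p^{e+1}}(N') = \Red{p^e}(N) + (p-1)p^e \le (p^e-\alpha) + (p-1)p^e = p^{e+1}-\alpha = d'$, which is exactly the condition for $(p^{e+1}-i,j)$ to lie in the new degree set.

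**The obstacle.** The technical heart is the claimed identity $\Red{p^{e+1}}(N + (p-1)p^e) = \Red{p^e}(N) + (p-1)p^e$. This needs the reduction $a \equivstar{q} \Red{q}(a)$ to be analysed modulo $q-1$: writing $\Red{p^e}(N) = N - (p^e-1)s$ for the appropriate $s\ge 0$, one must show $N + (p-1)p^e - (p^{e+1}-1)s'$ lands in $[0,p^{e+1}-1]$ for a suitable $s'$, and the natural guess is $s'=s$. Indeed $N+(p-1)p^e - (p^{e+1}-1)s = \Red{p^e}(N) + (p^e-1)s + (p-1)p^e - (p^{e+1}-1)s = \Red{p^e}(N) + (p-1)p^e + s\big((p^e-1)-(p^{e+1}-1)\big) = \Red{p^e}(N) + (p-1)p^e - s(p^{e+1}-p^e) = \Red{p^e}(N) + (p-1-s\cdot\tfrac{p^{e+1}-p^e}{p^e})p^e$; one needs to verify $0 \le s \le p-1$ so that this stays in range, which follows because $N = (p^e-i)+\ps{\bfw}{\bfk} \le p^e + \eta|\bfk| \le p^e + \eta \cdot(\text{something}\le q-1)$ and $s = \lfloor (N - \Red{p^e}(N))/(p^e-1)\rfloor$ is at most roughly $\eta+1$, so the real constraint is an inequality $\eta < p^e$ or similar — precisely the hypothesis $\eta < q$ in the statement. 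I expect bookkeeping the exact bound on $s$, and confirming it is at most $p-1$ under $\eta < p^e$, to be the only genuinely delicate point; everything else is digit arithmetic via Lucas and Proposition~\ref{caracLRS}.
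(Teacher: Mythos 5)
Your overall reduction is the same as the paper's: fix a valid $\bfk$, set $N = (p^e - i) + \ps{\bfw}{\bfk}$ and $N' = N + (p-1)p^e$, and try to control $\Red{p^{e+1}}(N')$ in terms of $\Red{p^e}(N)$. The gap is in the key step. Writing $N = \Red{p^e}(N) + s(p^e-1)$, your computation with $s'=s$ gives $N' - s(p^{e+1}-1) = \Red{p^e}(N) + (p-1)(1-s)p^e$; note the coefficient of $p^e$ is $(p-1)(1-s)$, not $p-1-s$, so this quantity lies in $[1,p^{e+1}-1]$ only for $s\in\{0,1\}$ — already for $s=2$ it equals $\Red{p^e}(N) - (p-1)p^e < 0$. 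Thus the condition you actually need is $s\le 1$, not $s\le p-1$, and neither is available: from $\ps{\bfw}{\bfk}\le \eta j\le \eta(p^e-1)$ one only gets $s\le \eta+1$, and the hypothesis $\eta<q=p^e$ allows $\eta$ (hence $s$) to be far larger than $p-1$ (for $p=2$ you would need $s\le 1$ while $s$ can be of order $\eta$). The claimed identity $\Red{p^{e+1}}(N+(p-1)p^e)=\Red{p^e}(N)+(p-1)p^e$ is simply false in general: for $p=2$, $e=2$, $N=8$ one has $s=2$, $\Red{4}(8)=2$, but $\Red{8}(12)=5\ne 2+4$.

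The reason the naive choice $s'=s$ fails is that $p$ copies of $p^e-1$ amount to one copy of $p^{e+1}-1$ (up to a defect of $p-1$), so the correct multiplier of $p^{e+1}-1$ is roughly $s/p$, not $s$. This is exactly what the paper's proof does: it writes the quotient as $s=Q_1p+Q_0$ with $0\le Q_0\le p-1$ and $s\le\eta+1$, obtains $N'=(Q_1+1)(p^{e+1}-1)+R'$ when $Q_0\ge 1$ (and $N'=Q_1(p^{e+1}-1)+R'$ when $Q_0=0$) for an explicit remainder $R'$, and then verifies $1\le R'\le p^{e+1}-\alpha$ using $Q_0\le p-1$ together with $Q_1\le(\eta+1)/p\le p^{e-1}$ — this last estimate being where the hypothesis $\eta<q$ genuinely enters, rather than through any bound of the form $s\le p-1$. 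Your treatment of the degenerate cases ($N=0$, $\alpha=p^e$) is fine, but without the base-$p$ splitting of the quotient $s$ the main estimate does not go through, so the proof as proposed is incomplete.
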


\begin{proof}
    Let $(p^e - i, j) \in D(p^e, p^e - \alpha, \eta)$, and consider $\bfk \in \NN$ such that $|\bfk^{(r)}| \leq j^{(r)}$ for every $r \geq 0$. Using Proposition~\ref{caracLRS}, we know that $\Red{p^e}((p^e - i) + \ps{\bfw}{\bfk}) \le p^e - \alpha$, and we want to prove that $\Red{p^{e+1}}(p^{e+1} - i) \le p^{e+1} - \alpha$.
    
    Notice that there exists $(Q_0, Q_1, R) \in \NN^3$ satisfying:
    \[
        (p^e - i) + \ps{\bfw}{\bfk} = (Q_1 p + Q_0)(p^e - 1) + R
    \]
    with $Q_0 \leq p-1$ and $R \leq p^e - \alpha$.  Since $\ps{\bfw}{\bfk} \le \eta |\bfk| \le \eta j \le \eta(p^e - 1)$, one can also check that $Q_1 p + Q_0 \le \eta + 1$.
    
    The case $R = 0$ must be handled at first. Notice that this implies that $(p^e - i) + \ps{\bfw}{\bfk} = 0$, meaning that $(p^e-i, j) = (0,0)$. Then one can check that $(p^{e+1} - p^e, 0) \in D(p^{e+1}, p^{e+1} - \alpha, \eta)$ since $\alpha \le p^e$. Hence, from now on, we assume that $R \ge 1$, and we distinguish two cases.
    
    
   First, assume that $Q_0 \geq 1$. Then we have
    \[
        p^{e+1} - i + \ps{\bfw}{\bfk}
            = p^{e+1} - p^e + (Q_1 p + Q_0)(p^e - 1) + R
            = (Q_1 + 1) (p^{e+1} - 1) + R'
    \]
    where
    \[
        R' \mydef  Q_0(p^e - 1) + R - (Q_1 + 1)(p - 1)\,.
    \]
    We see that $p^{e+1} - i + \ps{\bfw}{\bfk} \equivstar{p^{e+1}} R'$, hence it is sufficient to prove that $1 \le R' \le p^{e+1} - \alpha$.  Using $R \le p^e - \alpha$ and $Q_0 \le p-1$, we get $R' \leq p^{e+1} - \alpha$. Now, notice that $Q_1 \le \frac{\eta + 1 - Q_0}{p} \le \lfloor \frac{p^e-1}{p} \rfloor = p^{e-1} - 1$. Hence,
    \[
        R' \ge R + p^e - 1 - (p-1)p^{e-1} \ge R + p^{e-1} - 1 \ge 1\,.
    \]
        
    Now, assume that $Q_0 = 0$. We thus have
        \[
            p^{e+1} - i + \ps{\bfw}{\bfk} = Q_1(p^{e+1} - 1) + R'
        \]
        where 
        \[
            R' \mydef p^{e+1} - p^e + R - Q_1(p-1).
        \]
    Once again, let us prove that $1 \le R' \le p^{e+1} - \alpha$. It is straightforward to check that $R' \leq p^{e+1}-\alpha$. Moreover, $Q_1 \le \frac{\eta+1}{p} \le p^{e-1}$, leading to
    \[
        R' \ge p^{e+1} - p^e + R - p^{e-1}(p-1) \ge R \ge 1\,.
    \]
\end{proof}

\subsection{On the asymptotic information rate of $\Lift^\eta(\RS_q(d))$ when $q \to \infty$}

In this section, we consider sequences of codes $\Lift^\eta \RS_q(d)$ where $q \ge 2$ varies exponentially (\emph{i.e.} $q = p^e$ with increasing $e$), and where we see $d$ as a function of $q$ such that $d(q) \le q - 2$. Recall that $q$ represents simultaneously the size of the finite field and the square root of the code length. Throughout the section, we will write $q=p^e$.

To our opinion, two cases are of interest: $d = q - \alpha$ where $\alpha \ge 2$ is a fixed integer, and $d = \lfloor \gamma q \rfloor$ where $\gamma \in (0,1)$. In the first case ($d = q - \alpha$) we prove that we obtain $\eta$-lifted codes whose information rate grows to $1$ when $q \to \infty$. In the second case ($d = \lfloor \gamma q \rfloor$) we prove that the sequence of $\eta$-lifted codes admits an asymptotic information rate $R_\gamma > 0$ when $q \to \infty$, meaning that this sequence of codes is asymptotically good and is locally correctable from a constant fraction of errors. In order to prove these results, we look for tight enough lower bounds on the dimension of $\eta$-lifted codes.

\subsubsection{A lower bound for $|D(q, q-\alpha, \eta)|$.}

We first highlight that, for a fixed $\alpha \ge 2$, the degree set $D(q, q-\alpha, \eta)$ of  $\Lift^\eta\RS_q(q-\alpha)$ contains many copies of the degree set of $\WRM^\eta_{p^\epsilon}(p^\epsilon-\alpha-\eta)$, for $\epsilon \leq e$. In terms of codes, it informally means that weighted Reed-Muller codes defined over several fields $\FF_{p^\epsilon}$ for $\epsilon \leq e$, can be embedded in many different manners into $\eta$-lifted codes. This is formalized in the following proposition.

\begin{proposition}
    \label{prop:small-wrm-in-lifts}
    Let $0 \leq \epsilon \leq e$, $\alpha \in [0,p^\epsilon-1]$ and $(i,j) \in  \Deg(\WRM^\eta_{p^\epsilon}(p^\epsilon-\alpha-\eta))$. Then, for every $0 \le a,b \le p^{e-\epsilon}-1$, we have
    \[
        (i+ a p^\epsilon, j + b p^\epsilon) \in D(p^e, p^e - \alpha, \eta)\,.
    \]  
\end{proposition}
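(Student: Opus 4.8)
The plan is to apply the characterisation of $D(p^e, d, \eta)$ from Proposition~\ref{caracLRS} with $d \mydef p^e - \alpha$. Two harmless reductions come first: if $p^\epsilon - \alpha - \eta < 0$ then $\Deg(\WRM^\eta_{p^\epsilon}(p^\epsilon - \alpha - \eta))$ is empty and there is nothing to prove; and if $\alpha \le 1$ then $d \ge p^e - 1$, so $\Lift^\eta \RS_{p^e}(d)$ is the whole space $\FF_{p^e}^{p^{2e}}$ and the claim is trivial. Hence I assume $\alpha \ge 2$ and $\alpha + \eta \le p^\epsilon$; then $(i,j) \in \Deg(\WRM^\eta_{p^\epsilon}(p^\epsilon - \alpha - \eta))$ simply means $i + \eta j \le p^\epsilon - \alpha - \eta$, and in particular $i, j < p^\epsilon$. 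Writing $I \mydef i + a p^\epsilon$ and $J \mydef j + b p^\epsilon$, the membership $(I,J) \in [0, p^e - \alpha]^2$ follows by a one-line estimate from $i + \eta j \le p^\epsilon - \alpha - \eta$ together with $a, b \le p^{e - \epsilon} - 1$; so it only remains to verify the digit/reduction condition of Proposition~\ref{caracLRS} for $(I, J)$.

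Fix $\bfk \in \NN^\eta$ with $|\bfk^{(r)}| \le J^{(r)}$ for all $r \ge 0$; the goal is $\Red{p^e}(I + \ps{\bfw}{\bfk}) \le p^e - \alpha$. The structural observation is that, since $j < p^\epsilon$ and $b < p^{e-\epsilon}$, the base-$p$ digits of $J = j + b p^\epsilon$ split cleanly: digits $0, \dots, \epsilon - 1$ of $J$ are those of $j$, digits $\epsilon, \dots, e-1$ are those of $b$, and the higher ones vanish. I therefore split each coordinate $k_\ell = k_\ell' + p^\epsilon k_\ell''$ into its residue modulo $p^\epsilon$ and its quotient; summing the digit inequalities yields $|\bfk'| \le j$ and $|\bfk''| \le b$, hence $\ps{\bfw}{\bfk'} \le \eta j$ and $\ps{\bfw}{\bfk''} \le \eta b$. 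Since $\ps{\bfw}{\bfk} = \ps{\bfw}{\bfk'} + p^\epsilon \ps{\bfw}{\bfk''}$, we obtain $I + \ps{\bfw}{\bfk} = A + p^\epsilon B$ with $A \mydef i + \ps{\bfw}{\bfk'}$ and $B \mydef a + \ps{\bfw}{\bfk''}$, where $0 \le A \le i + \eta j \le p^\epsilon - \alpha - \eta$ (this is precisely where the $\WRM$ degree bound is used) and $0 \le B \le (p^{e-\epsilon} - 1) + \eta b \le (\eta + 1)(p^{e-\epsilon} - 1)$.

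It remains to compute $\Red{p^e}(A + p^\epsilon B)$. Write $B = c\, p^{e-\epsilon} + B_0$ with $0 \le B_0 < p^{e-\epsilon}$; the bound $B < (\eta+1) p^{e-\epsilon}$ forces $c \le \eta$. Then $p^\epsilon B = c\, p^e + p^\epsilon B_0 \equivstar{p^e} c + p^\epsilon B_0$, so $I + \ps{\bfw}{\bfk} = A + p^\epsilon B \equivstar{p^e} A + c + p^\epsilon B_0$, and
\[
    0 \le A + c + p^\epsilon B_0 \le (p^\epsilon - \alpha - \eta) + \eta + p^\epsilon (p^{e-\epsilon} - 1) = p^e - \alpha \le p^e - 1 .
\]
Since this integer lies in $[0, p^e - 1]$ it equals $\Red{p^e}(I + \ps{\bfw}{\bfk})$ (the only subtlety, the all-zero case $A = B = 0$, being immediate), and it is $\le p^e - \alpha$. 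Proposition~\ref{caracLRS} then gives $(I, J) \in D(p^e, p^e - \alpha, \eta)$, which is the assertion.

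The only step I expect to need genuine care is the carry bookkeeping in the last paragraph: when $B \ge p^{e-\epsilon}$ the product $p^\epsilon B$ overflows past $p^e$, and one must check both that the carry $c$ does not exceed $\eta$ --- so that it is absorbed by the slack ``$-\eta$'' built into the degree $p^\epsilon - \alpha - \eta$ of the weighted Reed-Muller code --- and that the reduced value still lies in $[0, p^e - 1]$, so that it literally equals $\Red{p^e}$. Everything else is digit accounting that hinges on the clean split $J \leftrightarrow (j,b)$, which is guaranteed exactly by $j < p^\epsilon$ (a consequence of the $\WRM$ degree bound) and $b < p^{e-\epsilon}$.
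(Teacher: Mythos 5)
Your proof is correct and follows essentially the same route as the paper's: split the base-$p$ digits of $J=j+bp^\epsilon$ and of each $k_\ell$ at position $\epsilon$, write $I+\ps{\bfw}{\bfk}=A+p^\epsilon B$ with $A\le p^\epsilon-\alpha-\eta$ and $B\le(\eta+1)(p^{e-\epsilon}-1)$, and absorb the carry $c\le\eta$ from the overflow of $p^\epsilon B$ into the slack $-\eta$ built into the weighted Reed--Muller degree. Your added preliminaries (the degenerate cases $\alpha\le 1$ and $p^\epsilon-\alpha-\eta<0$, and the explicit check that $(I,J)\in[0,p^e-\alpha]^2$ before invoking Proposition~\ref{caracLRS}) are sound tidying of details the paper leaves implicit.
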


\begin{proof}
Assume that $(i,j) \in  \Deg(\WRM^\eta_{p^\epsilon}(p^\epsilon-\alpha-\eta))$. Then 
$i+\eta j \leq p^\epsilon-\alpha-\eta$. We use the characterisation of Proposition~\ref{caracLRS} to prove our result.

Take $\bfk \in \NN^\eta$ such that for all $r \geq 0$,  $\sum_{\ell=1}^\eta k_\ell^{(r)} \leq (j + b p^\epsilon)^{(r)}$.
Then
\[
    \sum_{\ell=1}^\eta k_\ell^{(r)} \leq
        \left\{ \begin{array}{cl}
            j^{(r)}          & \text{ if } r \in [0, \epsilon - 1],  \\
            b^{(r-\epsilon)} & \text{ if } r \in [\epsilon, e - 1], \\
            0                & \text{ if } r \geq e.  
        \end{array} \right.
\]
Our purpose is to bound $\Red{p^e}(i+a p^\epsilon + \ps{\bfw}{\bfk})$. We see that
\[
    i+a p^\epsilon + \ps{\bfw}{\bfk}
    = i+a p^\epsilon + \sum_{\ell=1}^\eta \ell \left( \sum_{r=0}^{\epsilon -1} k_\ell^{(r)} p^r + \sum_{r=\epsilon}^{e-1} k_\ell^{(r)} p^r\right)
    = R_1 + p^\epsilon R_2
\]
where $R_1 \mydef i + \sum_{\ell=0}^\eta \ell \sum_{r=0}^{\epsilon -1} k_\ell^{(r)} p^r$ and $R_2 \mydef a + \sum_{\ell=0}^\eta \ell \sum_{r=\epsilon}^{e-1} k_\ell^{(r)} p^{r-\epsilon}$.

One can check that $R_1 \leq i+\eta j \leq  p^\epsilon - \alpha - \eta$. It remains to deal with $R_2$. Let us write $R_2 = \sum_{r=0}^{e - \epsilon-1} R_2^{(r)} p^r + R_2' p^{e-\epsilon}$ with $R_2' \leq \eta$. Then
\[
    p^\epsilon R_2 = (p^e - 1) R_2'+ R'_2 + \sum_{r=0}^{e-\epsilon-1} R_2^{(r)} p ^{\epsilon+r} \equiv^\star_{p^e} R'_2 + \sum_{r=\epsilon}^{e-1} R_2^{(r)} p^{r}\,.
\]

Therefore,
\[
    i + ap^\epsilon + \langle \bfw, \bfk \rangle \equiv^\star_{p^e} R_1 + R'_2 + \sum_{r=0}^{\epsilon-1} R_2^{(r)} p ^{\epsilon+r} \leq p^\epsilon - \alpha - \eta + \eta + p^\epsilon(p^{e-\epsilon}-1) \leq p^e-\alpha,
\]
which proves that $(i+ ap^\epsilon,j + b p^\epsilon)$ belongs to $D(p^e, p^e - \alpha, \eta)$.
\end{proof}

Notice that $\WRM^\eta_{p^\epsilon}(p^\epsilon-\alpha-\eta) = \{ \bfzero \}$ if $\alpha \ge p^\epsilon$. Therefore let us set $e_\alpha = \lfloor\log_p \alpha \rfloor$ and define
\[
    \mathcal{W}(\epsilon, a, b) \mydef
    \left\{(i + ap^\epsilon, j + b p^\epsilon) \mid (i,j) \in \Deg\WRM^\eta_{p^\epsilon} (p^\epsilon-\alpha-\eta)\right\}
\]
as the degree set of weighted Reed-Muller codes over $\FF_{p^e}$, translated by $(ap^\epsilon,bp^\epsilon)$. Proposition~\ref{prop:small-wrm-in-lifts} ensures that:
\begin{equation}
    \label{eq:big-union}
    D(p^e, p^e - \alpha, \eta) \supset \bigcup_{\epsilon=e_\alpha  +1 }^e \bigcup_{ 0 \leq a,b < p^{e-\epsilon}} \mathcal{W}(\epsilon,a,b).
\end{equation}
Equation~\eqref{eq:big-union} helps us to obtain a first lower bound on the dimension of lifted codes. It is clear that $\mathcal{W}(\epsilon,a,b) \cap \mathcal{W}(\epsilon,a',b') = \varnothing$ if $(a',b') \ne (a,b)$. Unfortunately, the union given in \eqref{eq:big-union} is not disjoint, as illustrated in Figure~\ref{triangles}. The main reason is that $\mathcal{W}(\epsilon,a,b)$ contains a certain number of degree sets of the form $\mathcal{W}(\epsilon',a',b')$, for $\epsilon' < \epsilon$. We compute this precise number in Lemma~\ref{lem:nb-wrm}.

\begin{figure}[ht]
\centering
\includegraphics[scale=0.6]{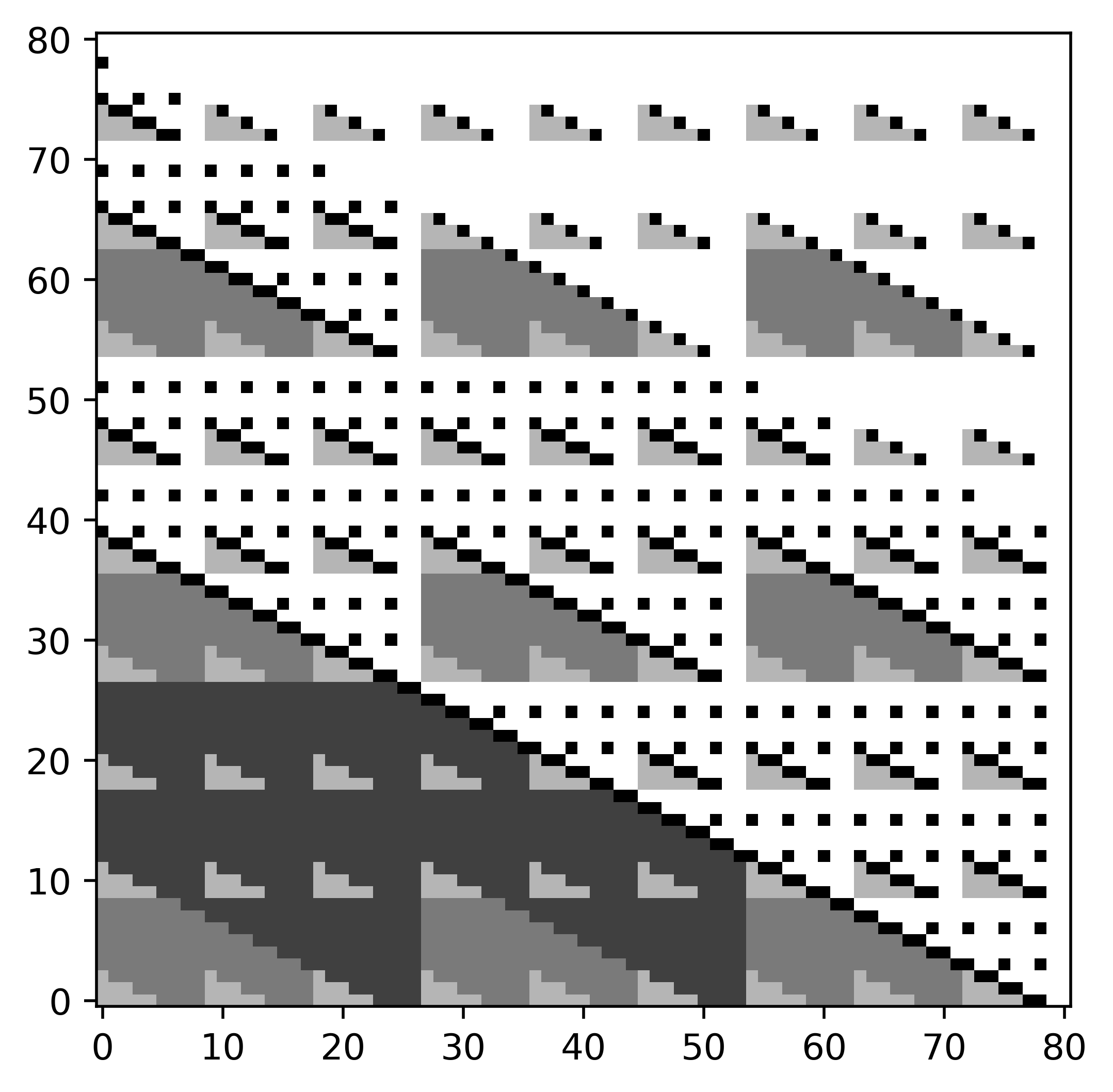}
\caption{Embedding of $\mathcal{W}(\epsilon,a,b) \subset D(3^5, 3^5 - 3, 2)$ with $\epsilon \leq 5$.}
\label{triangles}
\end{figure}

For $m \ge 0$, we set
\begin{equation}
    \label{seqT}
    T_m(p,\eta)=T_m \mydef \left(\left\lfloor\frac{ p^m - 1}{\eta}\right\rfloor +1 \right)\left(p^{m}-\frac{\eta}{2}\left\lfloor\frac{ p^{m} - 1}{\eta}\right\rfloor\right).
\end{equation}
One can check that $T_m$ is a positive integer which counts the number of pairs of non-negative integers $(u,v)$ such that $u + \eta v \le p^m - 1$. 

\begin{lemma}
    \label{lem:nb-wrm}
    Fix $e_\alpha+1 \le \epsilon_1 \leq \epsilon_2 \le e$. Then, for all $0 \le a_2, b_2 < p^{e-\epsilon_2}$, we have:
    \[
    \big| \{ (a_1, b_1) \mid \mathcal{W}(\epsilon_1, a_1, b_1) \subset \mathcal{W}(\epsilon_2, a_2, b_2) \} \big|
    = T_{\epsilon_2-\epsilon_1}\,.
    \]
\end{lemma}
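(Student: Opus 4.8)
The plan is to replace the set inclusion $\mathcal{W}(\epsilon_1,a_1,b_1)\subset \mathcal{W}(\epsilon_2,a_2,b_2)$ by an explicit arithmetic description of the admissible pairs $(a_1,b_1)$, and then to recognise their number as the combinatorial quantity defining $T_m$. Set $s\mydef \epsilon_2-\epsilon_1$ and $d_i\mydef p^{\epsilon_i}-\alpha-\eta$ for $i\in\{1,2\}$; we work in the relevant regime where the weighted Reed--Muller codes involved are non-trivial, that is $p^{\epsilon_1}\ge \alpha+\eta$ (so $d_1,d_2\ge 0$), which is essentially what the hypothesis $\epsilon_1\ge e_\alpha+1$ buys for fixed $\eta$. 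Since $d_i\le p^{\epsilon_i}-1$, the evaluation map is injective on the corresponding monomials and $\Deg\WRM^\eta_{p^{\epsilon_i}}(d_i)=\{(x,y)\in\NN^2 : x+\eta y\le d_i\}$, whence
\[
    \mathcal{W}(\epsilon_i,a,b)=\{\,(ap^{\epsilon_i}+x,\;bp^{\epsilon_i}+y)\mid x+\eta y\le d_i\,\}\,.
\]

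The heart of the argument is the claim that $\mathcal{W}(\epsilon_1,a_1,b_1)\subset \mathcal{W}(\epsilon_2,a_2,b_2)$ if and only if there exist $u,v\in\NN$ with $u+\eta v\le p^s-1$ and $(a_1,b_1)=(a_2p^s+u,\;b_2p^s+v)$. For the direction ``$\Leftarrow$'', a point $(a_1p^{\epsilon_1}+x,\,b_1p^{\epsilon_1}+y)$ of the left-hand set (so $x+\eta y\le d_1$) rewrites as $(a_2p^{\epsilon_2}+(up^{\epsilon_1}+x),\,b_2p^{\epsilon_2}+(vp^{\epsilon_1}+y))$, and, using $d_2-d_1=p^{\epsilon_2}-p^{\epsilon_1}=(p^s-1)p^{\epsilon_1}$,
\[
    (up^{\epsilon_1}+x)+\eta(vp^{\epsilon_1}+y)=(u+\eta v)p^{\epsilon_1}+(x+\eta y)\le (p^s-1)p^{\epsilon_1}+d_1=d_2\,,
\]
so the point lies in $\mathcal{W}(\epsilon_2,a_2,b_2)$. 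For ``$\Rightarrow$'', apply the inclusion first to $(0,0)\in\Deg\WRM^\eta_{p^{\epsilon_1}}(d_1)$: one gets $a_1p^{\epsilon_1}=a_2p^{\epsilon_2}+i'$ and $b_1p^{\epsilon_1}=b_2p^{\epsilon_2}+j'$ with $i',j'\ge 0$, which forces $u\mydef a_1-a_2p^s\ge 0$ and $v\mydef b_1-b_2p^s\ge 0$; then apply it to $(d_1,0)\in\Deg\WRM^\eta_{p^{\epsilon_1}}(d_1)$ to obtain $(u+\eta v)p^{\epsilon_1}+d_1\le d_2$, i.e. $u+\eta v\le p^s-1$.

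It remains to count. The map $(u,v)\mapsto (a_2p^s+u,\,b_2p^s+v)$ is injective, and when $u+\eta v\le p^s-1$ we have $u\le p^s-1$ and $v\le (p^s-1)/\eta\le p^s-1$; combined with $a_2,b_2\le p^{e-\epsilon_2}-1$ this shows that every image pair $(a_1,b_1)$ already satisfies $0\le a_1,b_1\le p^{e-\epsilon_1}-1$, i.e. lies in the admissible range. Hence the set in the statement is in bijection with $\{(u,v)\in\NN^2: u+\eta v\le p^s-1\}$, whose cardinality is exactly $T_s=T_{\epsilon_2-\epsilon_1}$ by the interpretation of $T_m$ recorded immediately after~\eqref{seqT}. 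The only genuinely non-mechanical point is the ``$\Rightarrow$'' direction, and within it the choice of the two extremal test points $(0,0)$ and $(d_1,0)$ of the small triangle, which pin down both its lower-left corner and its diagonal tip inside $\mathcal{W}(\epsilon_2,a_2,b_2)$; the rest is routine bookkeeping.
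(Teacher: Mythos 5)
Your proposal is correct and follows essentially the same route as the paper: both reduce the inclusion $\mathcal{W}(\epsilon_1,a_1,b_1)\subseteq\mathcal{W}(\epsilon_2,a_2,b_2)$ to the arithmetic condition $(u+\eta v)p^{\epsilon_1}\le p^{\epsilon_2}-p^{\epsilon_1}$ with $(u,v)=(a_1-a_2p^{\epsilon_2-\epsilon_1},\,b_1-b_2p^{\epsilon_2-\epsilon_1})$, and then identify the count with the definition of $T_{\epsilon_2-\epsilon_1}$. Your explicit treatment of the ``only if'' direction via the test points $(0,0)$ and $(d_1,0)$ (which forces $u,v\ge 0$ and the bound $u+\eta v\le p^s-1$) is a slightly more careful rendering of a step the paper leaves implicit, but it is not a different argument.
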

\begin{proof}
    We first notice that ${\cal W}(\epsilon_1, a_1, b_1) \subseteq {\cal W}(\epsilon_2, a_2, b_2)$ if and only if 
    \[
        {\cal W}(\epsilon_1, a_1 - a_2 p^{\epsilon_2-\epsilon_1}, b_1 - b_2p^{\epsilon_2-\epsilon_1}) \subseteq {\cal W}(\epsilon_2, 0, 0)\,.
    \]
    
    Moreover, for $u, v \ge 0$, we see that ${\cal W}(\epsilon_1, u, v) \subseteq {\cal W}(\epsilon_2, 0, 0)$ if and only if for every $i,j \ge 0$, we have
    \[
                i + \eta j \le p^{\epsilon_1} - \alpha - \eta \implies i + u p^{\epsilon_1} + \eta( j + v p^{\epsilon_1}) \le p^{\epsilon_2} - \alpha - \eta\,,
    \]
    which is equivalent to $(u + \eta v) p^{\epsilon_1} \le p^{\epsilon_2}-p^{\epsilon_1}$. It remains to notice that $T_{\epsilon_2 - \epsilon_1}$ counts the number of non-negative integers $u, v$ such that 
    \[
        u + \eta v \le \left\lfloor \frac{p^{\epsilon_2} - p^{\epsilon_1}}{p^{\epsilon_1}} \right\rfloor = p^{\epsilon_2-\epsilon_1} - 1\,.
    \]
\end{proof}

For any $m \in \NN$, we set 
\begin{equation}
\label{eq:defw}
\begin{aligned}
    W_m(\alpha)
        &\mydef |\Deg \WRM^\eta_{p^{m}}(p^m - \alpha - \eta)| = |{\cal W}(m, 0, 0)|\\
        &= \left\lfloor \frac{p^m - \alpha}{\eta} \right\rfloor \left( p^m - \alpha + 1 - \frac{\eta}{2} \left( \left\lfloor \frac{p^m - \alpha}{\eta} + 1  \right\rfloor \right) \right)\,.
\end{aligned}
\end{equation}

Let us also define $N_0 \mydef 1$, and
\begin{equation}
    \label{seqN}
    N_m \mydef p^{2m} - \sum_{\nu=0}^{m-1}N_\nu T_{m-\nu}
\end{equation}
as the number of triangles $\mathcal{W}(e-m, a, b)$ that are not included in any $\mathcal{W}(e-m', a', b')$ with $m' \leq m$. Notice that, equivalently, we have
\begin{equation}
    \label{seqNbis}
    p^{2m} = \sum_{\nu=0}^{m}N_\nu T_{m-\nu}\,.
\end{equation}

\begin{example}
As displayed in Figure \ref{triangles}, for $p=3$ and $\eta=2$, the first terms of the sequence $(N_m)$ are 1, 5, 36, 264.
\end{example}

The following theorem can be proven by a simple counting argument.

\begin{theorem}
    \label{thm:main-thm-lower-bound}
    Fix $\alpha \ge 2$, $\eta \ge 1$ and a prime power $q = p^e$. Let $(W_m(\alpha))_{m \le e}$ and $(N_m)_{m \le e}$ be the sequences defined above. Then, the dimension $|D(q, q-\alpha, \eta)| $ of $\Lift^\eta \RS_q(q-\alpha)$ is lower bounded by
    \[
        \sum_{\epsilon=0 }^{e - e_\alpha - 1} W_{e - \epsilon}(\alpha) N_{\epsilon}\,,
    \]
    where $e_\alpha = \lfloor \log_p \alpha \rfloor$.
\end{theorem}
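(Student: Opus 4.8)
The plan is to use the explicit containment from \eqref{eq:big-union} together with the combinatorial bookkeeping in Lemmas~\ref{lem:nb-wrm} and the sequences $(T_m)$, $(N_m)$ to perform an inclusion–exclusion-free count of a \emph{disjoint} subfamily of the translated Reed-Muller degree sets. The key observation is that although the union $\bigcup_{\epsilon,a,b}\mathcal{W}(\epsilon,a,b)$ is not disjoint, the sets $\mathcal{W}(\epsilon,a,b)$ are \emph{nested}: any two are either disjoint or one contains the other (this follows from Lemma~\ref{lem:nb-wrm}'s proof, since $\mathcal{W}(\epsilon_1,a_1,b_1)\subseteq\mathcal{W}(\epsilon_2,a_2,b_2)$ is characterised by a clean divisibility-type condition and the $\mathcal{W}(\epsilon,a,b)$ with fixed $\epsilon$ are pairwise disjoint). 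Hence the family of these sets, ordered by inclusion, forms a forest, and the union's cardinality equals the sum of $|\mathcal{W}(\epsilon,a,b)|$ over the \emph{maximal} (equivalently, the \enquote{newly counted}) ones at each level.

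First I would fix the level index $m = e-\epsilon$ running from $0$ to $e-e_\alpha-1$, so that $\mathcal{W}(e-m,a,b)$ has $p^{2m}$ choices of $(a,b)$ in total, and $|\mathcal{W}(e-m,a,b)| = |\mathcal{W}(e-m,0,0)| = W_{e-m}(\alpha)$ by translation-invariance of the size. Next I would argue that $N_m$, as defined by the recursion \eqref{seqN}, counts exactly the number of triangles $\mathcal{W}(e-m,a,b)$ at level $m$ that are \emph{not} contained in any $\mathcal{W}(e-m',a',b')$ with $m' < m$: indeed, by Lemma~\ref{lem:nb-wrm}, each triangle $\mathcal{W}(e-m',a',b')$ at a strictly coarser level $m' < m$ contains exactly $T_{m-m'}$ triangles of level $m$, and a level-$m$ triangle contained in \emph{some} coarser triangle is contained in a unique minimal one; partitioning by the level $m'$ of that minimal ancestor gives $p^{2m} = \sum_{\nu=0}^{m} N_\nu T_{m-\nu}$, which is precisely \eqref{seqNbis}, so $N_m = p^{2m} - \sum_{\nu=0}^{m-1} N_\nu T_{m-\nu}$ as claimed. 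The $N_m$ triangles at level $m$ that are not contained in any coarser one are pairwise disjoint (fixed level) and disjoint from all coarser triangles (by maximality), so across all levels $m = 0,\dots,e-e_\alpha-1$ these $\sum_m N_m$ triangles are pairwise disjoint, and their union is contained in $D(q,q-\alpha,\eta)$ by \eqref{eq:big-union}.

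Putting it together, the dimension $|D(q,q-\alpha,\eta)|$ is at least the total number of cells in this disjoint subfamily, namely $\sum_{m=0}^{e-e_\alpha-1} N_m \cdot W_{e-m}(\alpha)$, which after reindexing $\epsilon = m$ is exactly the claimed bound $\sum_{\epsilon=0}^{e-e_\alpha-1} W_{e-\epsilon}(\alpha)\, N_\epsilon$. The step I expect to be the main obstacle is the careful verification that the family $\{\mathcal{W}(\epsilon,a,b)\}$ is genuinely \emph{laminar} (nested-or-disjoint), so that \enquote{newly counted at level $m$} is well-defined and the recursion \eqref{seqNbis} correctly partitions the $p^{2m}$ level-$m$ triangles without double counting; this requires checking that containment of triangles is transitive in the expected way and that a level-$m$ triangle lying inside two coarser triangles forces those two to be nested, which follows from the disjointness of same-level triangles but should be spelled out. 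The remaining arithmetic — that $W_m(\alpha)$ in \eqref{eq:defw} indeed equals $|\Deg\WRM^\eta_{p^m}(p^m-\alpha-\eta)|$, and that $T_m$ in \eqref{seqT} counts pairs $(u,v)$ with $u+\eta v \le p^m - 1$ — is routine counting of lattice points under a line and can be cited from the definitions.
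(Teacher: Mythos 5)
Your argument is correct and is precisely the ``simple counting argument'' the paper invokes without writing out: you use exactly the paper's own setup --- the containment \eqref{eq:big-union}, Lemma~\ref{lem:nb-wrm}, and the interpretation of $N_m$ via \eqref{seqN}--\eqref{seqNbis} as the number of level-$m$ triangles not contained in any strictly coarser one --- and sum $W_{e-m}(\alpha)$ over the resulting disjoint subfamily. The laminarity you rightly flag as the point needing verification does hold: a nonempty $\mathcal{W}(\epsilon_1,a_1,b_1)$ meets a coarser $\mathcal{W}(\epsilon_2,a_2,b_2)$ if and only if its anchor $(a_1p^{\epsilon_1},b_1p^{\epsilon_1})$ lies in the latter (the triangles being downward closed), and since $\alpha+\eta\le p^{\epsilon_1}$ whenever the small triangle is nonempty this is equivalent to the containment condition $(a_1-a_2p^{\epsilon_2-\epsilon_1})+\eta(b_1-b_2p^{\epsilon_2-\epsilon_1})\le p^{\epsilon_2-\epsilon_1}-1$ of Lemma~\ref{lem:nb-wrm}, so intersecting triangles are nested and your count is valid.
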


\subsubsection{Asymptotical behaviour of the sequences $(T_m)$, $(W_m(\alpha))$ and $(N_m)$}

Let us sum up the asymptotics of the sequences introduced in the previous paragraph.

\begin{lemma}\label{lem:asymp}
When $m \rightarrow + \infty$,
\begin{enumerate}
\item $T_m \sim \frac{p^{2m}}{2\eta}$,
\item $W_m(\alpha) \sim T_m$ for any $\alpha \ge 2$.
\end{enumerate}
\end{lemma}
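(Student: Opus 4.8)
The statement asks for the asymptotic equivalents of $T_m$ and $W_m(\alpha)$ as $m \to \infty$ with $p, \eta, \alpha$ fixed. Both claims reduce to elementary estimates on floor functions, so the plan is purely computational with no conceptual obstacle; the only care needed is to keep track of which terms are lower-order.

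First I would handle $T_m$. Writing $T_m = \left(\left\lfloor \frac{p^m-1}{\eta}\right\rfloor + 1\right)\left(p^m - \frac{\eta}{2}\left\lfloor \frac{p^m-1}{\eta}\right\rfloor\right)$, set $\ell_m \mydef \left\lfloor \frac{p^m-1}{\eta}\right\rfloor$, so that $\ell_m = \frac{p^m-1}{\eta} - \theta_m$ for some $\theta_m \in [0,1)$, and in particular $\ell_m = \frac{p^m}{\eta}(1 + o(1))$. The first factor is $\ell_m + 1 \sim \frac{p^m}{\eta}$. For the second factor, $p^m - \frac{\eta}{2}\ell_m = p^m - \frac{\eta}{2}\cdot\frac{p^m - 1 - \eta\theta_m}{\eta} = p^m - \frac{p^m}{2} + \frac{1 + \eta\theta_m}{2} = \frac{p^m}{2} + O(1) \sim \frac{p^m}{2}$. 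Multiplying, $T_m \sim \frac{p^m}{\eta}\cdot\frac{p^m}{2} = \frac{p^{2m}}{2\eta}$, which is item (1).

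Next I would do $W_m(\alpha)$. The same reasoning applies verbatim with $p^m$ replaced by $p^m - \alpha$ in the appropriate places: from \eqref{eq:defw}, $W_m(\alpha) = \left\lfloor \frac{p^m-\alpha}{\eta}\right\rfloor\left(p^m - \alpha + 1 - \frac{\eta}{2}\left(\left\lfloor \frac{p^m-\alpha}{\eta}\right\rfloor + 1\right)\right)$ (up to the minor placement of the $+1$ inside the floor, which only shifts things by $O(1)$ and is absorbed). Since $\alpha$ is a fixed constant, $\left\lfloor \frac{p^m-\alpha}{\eta}\right\rfloor \sim \frac{p^m}{\eta}$ and $p^m - \alpha + 1 - \frac{\eta}{2}\left(\left\lfloor\frac{p^m-\alpha}{\eta}\right\rfloor+1\right) = \frac{p^m}{2} + O(1) \sim \frac{p^m}{2}$ by exactly the computation above, so $W_m(\alpha) \sim \frac{p^{2m}}{2\eta}$. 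Comparing with item (1) gives $W_m(\alpha) \sim T_m$, which is item (2). (Alternatively, one notes directly that $\lim_m W_m(\alpha)/T_m = 1$ by dividing the two product formulas factor by factor and sending $m \to \infty$, since each ratio of corresponding factors tends to $1$.)

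There is essentially no hard part: the whole argument is bounding $\left\lfloor \frac{p^m - c}{\eta}\right\rfloor$ between $\frac{p^m-c}{\eta} - 1$ and $\frac{p^m-c}{\eta}$ for a constant $c \in \{0, 1, \alpha, \alpha+1\}$ and observing that the leading $p^{2m}$-coefficient survives while everything else is $O(p^m)$. If I wanted to be maximally slick I would only prove item (1) by this estimate and then get item (2) for free from the observation that $W_m(\alpha)$ and $T_m$ are given by the same closed form up to replacing $p^m$ by $p^m - \alpha$, so their ratio is $\big(1 - \alpha p^{-m}\big)^2(1+o(1)) \to 1$.
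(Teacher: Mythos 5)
Your proof is correct. The paper states Lemma~\ref{lem:asymp} without proof, treating it as a routine consequence of the closed forms \eqref{seqT} and \eqref{eq:defw}, and your computation is exactly the intended one: bounding the floor terms within $O(1)$ of $\frac{p^m-c}{\eta}$ for a fixed constant $c$ shows each factor is $\frac{p^m}{\eta}(1+o(1))$ and $\frac{p^m}{2}(1+o(1))$ respectively, giving both equivalents. (Your parenthetical remark about the placement of the $+1$ is harmless: $\lfloor x+1\rfloor = \lfloor x\rfloor + 1$, so the two readings coincide exactly.)
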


The following technical lemma will be useful in the proof of Theorem \ref{th:inforate1}.

\begin{lemma}
    \label{tech}
    Let $(N_m)$ be the sequence defined in \eqref{seqN}. Then
    \[
        \lim_{m \rightarrow + \infty} \frac{1}{p^{2m}} \sum_{\ell=0}^m N_\ell = 0.
    \]
\end{lemma}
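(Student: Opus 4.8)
The plan is to reduce the statement to the convergence of the non‑negative series $\sum_{\nu \ge 0} N_\nu p^{-2\nu}$, and then to conclude by a direct $\epsilon$–$M$ estimate. Throughout I will use the identity \eqref{seqNbis}, namely $p^{2m} = \sum_{\nu=0}^m N_\nu T_{m-\nu}$, together with the fact that every $N_\nu$ is a non‑negative integer (it counts maximal triangles, as explained around \eqref{seqN}) and with the asymptotics of Lemma~\ref{lem:asymp}.

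First I would prove that $C \mydef \sum_{\nu \ge 0} N_\nu p^{-2\nu}$ is finite. Since $T_k p^{-2k} \to \frac{1}{2\eta} > 0$ by Lemma~\ref{lem:asymp} and $T_k \ge 1$ for every $k \ge 0$ (the pair $(0,0)$ always satisfies $u + \eta v \le p^k - 1$), the sequence $(T_k p^{-2k})_{k \ge 0}$ of positive reals converges to a positive limit, hence is bounded below by some constant $c > 0$. Dividing \eqref{seqNbis} by $p^{2m}$ yields $1 = \sum_{\nu=0}^m (N_\nu p^{-2\nu})(T_{m-\nu} p^{-2(m-\nu)}) \ge c \sum_{\nu=0}^m N_\nu p^{-2\nu}$, so the partial sums of $\sum_\nu N_\nu p^{-2\nu}$ are bounded by $1/c$; being a series of non‑negative terms, it converges.

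Then I would conclude as follows. Write $\frac{1}{p^{2m}} \sum_{\ell=0}^m N_\ell = \sum_{\ell=0}^m (N_\ell p^{-2\ell})\, p^{-2(m-\ell)}$. Fix $\epsilon > 0$ and choose $M$ with $\sum_{\ell > M} N_\ell p^{-2\ell} < \epsilon$ (possible by convergence). For $m > M$, split the sum at $M$: for $\ell \le M$ one has $p^{-2(m-\ell)} \le p^{-2(m-M)}$, so that part is at most $C\, p^{-2(m-M)}$, which tends to $0$ as $m \to \infty$; for $M < \ell \le m$ one has $p^{-2(m-\ell)} \le 1$, so that part is at most $\sum_{\ell > M} N_\ell p^{-2\ell} < \epsilon$. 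Hence $\frac{1}{p^{2m}}\sum_{\ell=0}^m N_\ell \le C\,p^{-2(m-M)} + \epsilon$ for all $m > M$, so $\limsup_{m \to \infty} \frac{1}{p^{2m}}\sum_{\ell=0}^m N_\ell \le \epsilon$; since $\epsilon$ is arbitrary and the quantity is non‑negative, the limit equals $0$.

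The only delicate inputs are the two facts used in the first step: the non‑negativity of the $N_\nu$ (immediate from their combinatorial meaning, but genuinely needed — without it one would have to establish absolute convergence of $\sum_\nu N_\nu p^{-2\nu}$, which is not readily available from the recursion alone), and the uniform lower bound $T_k \ge c\, p^{2k}$ (which here follows painlessly from Lemma~\ref{lem:asymp} and $T_k \ge 1$, but could alternatively be proved by an elementary estimate of $\lfloor (p^k-1)/\eta\rfloor$ from below). Everything else is routine real analysis.
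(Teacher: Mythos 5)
Your proof is correct and follows essentially the same route as the paper: first establish convergence of the non-negative series $\sum_\nu N_\nu p^{-2\nu}$ by combining the identity \eqref{seqNbis} with the asymptotics $T_k \sim p^{2k}/2\eta$ from Lemma~\ref{lem:asymp}, then conclude by splitting the sum $\sum_{\ell=0}^m N_\ell p^{-2\ell}p^{-2(m-\ell)}$ at a fixed index $M$. Your use of a uniform lower bound $T_k \ge c\,p^{2k}$ (valid for all $k$ since $T_k \ge 1$ and the normalized sequence has a positive limit) slightly streamlines the paper's version, which instead bounds $p^{2\ell} \le (2\eta+\delta)T_\ell$ only for large $\ell$ and treats the remaining finitely many terms separately, but the argument is the same in substance.
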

\begin{proof}
Let us first prove that the series $\sum_{\ell \geq 0} \frac{N_\ell}{p^{2\ell}}$ is convergent.
Fix $\delta > 0$.

By Lemma \ref{lem:asymp}, $T_m \sim \frac{p^{2m}}{2\eta}$. Hence there exists $L \in \NN$ such that for any $\ell \geq L$, $p^{2\ell} \leq (2\eta+\delta) T_\ell$. Therefore, using \eqref{seqN}, we get
\[
\begin{aligned}
    \sum_{\ell=0}^m \frac{N_\ell}{p^{2\ell}}
    &= \sum_{\ell=0}^{m-L} \frac{N_\ell}{p^{2\ell}} + \sum_{\ell=m-L+1}^m \frac{N_\ell}{p^{2\ell}}\\
    &\le \frac{1}{p^{2m}} \sum_{\ell=0}^{m-L} N_\ell p^{2(m-\ell)} + \sum_{\ell=m-L+1}^m \frac{N_\ell}{p^{2\ell}}\\
    &\le \frac{(2 \eta + \delta)}{p^{2m}} \sum_{\ell=0}^{m-L} N_\ell T_{m-\ell} + \sum_{\ell=m-L+1}^m \frac{N_\ell}{p^{2\ell}}\,,
\end{aligned}
\]
since $m - \ell \ge L  \iff  \ell \leq m-L$.

Notice that all the terms of the first sum are non-negative. Hence by \eqref{seqNbis}, we have $\sum_{\ell=0}^{m-L} N_\ell T_{m-\ell} \le p^{2m}$, leading to
\[
\sum_{\ell=0}^m \frac{N_\ell}{p^{2\ell}} \leq  (2\eta +\delta) +  \sum_{\ell=m-L+1}^m \frac{N_\ell}{p^{2\ell}}.
\]
It remains to notice that the right handside sum is finite, and each summand $N_\ell/p^{2\ell}$ is trivially bounded by $1$. Therefore $\sum_{\ell \ge 0} N_\ell / p^{2\ell}$ is convergent.

Denote by $S$ its limit. We know there exists $M \in \NN$ such that, for any $m \geq M$ it holds that
\[
   \left| S - \sum_{\ell=0}^m \frac{N_\ell}{p^{2\ell}} \right| \le \delta\,.
\]
As a consequence, $\sum_{\ell=M+1}^m N_\ell/p^{2\ell} \le 2 \delta$ and since $\sum_{\ell=0}^M N_\ell/p^{2\ell} \le S$, we get 
\[
    \frac{1}{p^{2m}}\sum_{\ell=0}^m N_\ell = \sum_{\ell=0}^M \frac{N_\ell}{p^{2\ell}} \frac{1}{p^{2(m-\ell)}} + \sum_{\ell=M+1}^m \frac{N_\ell}{p^{2 \ell}} \leq \frac{S}{p^{2(m-M)}} + 2\delta,
    \]
which concludes the proof.
\end{proof}

\subsubsection{Asymptotics of the rate of $\Lift^\eta \RS_q(q-\alpha)$ when $q \to \infty$ and $\alpha$ is fixed}

\begin{theorem}\label{th:inforate1}
    Let $\alpha \ge 2$, $\eta \ge 1$ and $p$ be a prime number. Define $e_\alpha = \lfloor \log_p \alpha \rfloor$, and consider the sequence of codes $\calC_e = \Lift^\eta \RS_{p^e}(p^e - \alpha)$, for $e \ge e_\alpha$. Then, the information rate $R_e$ of $\calC_e$ approaches $1$ when $e \to \infty$.
\end{theorem}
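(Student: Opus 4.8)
The plan is to divide the lower bound on $|D(p^e, p^e-\alpha,\eta)|$ from Theorem~\ref{thm:main-thm-lower-bound} by the code length $p^{2e}$ and show the resulting ratio tends to $1$. Concretely, set $R_e \ge \frac{1}{p^{2e}}\sum_{\epsilon=0}^{e-e_\alpha-1} W_{e-\epsilon}(\alpha)\, N_\epsilon$, and it suffices to prove this sum is $p^{2e}(1-o(1))$. The main tool is the identity \eqref{seqNbis}, namely $p^{2m} = \sum_{\nu=0}^m N_\nu T_{m-\nu}$, which already gives an exact decomposition of the full length; the strategy is to compare the weighted-Reed-Muller sum $\sum_\epsilon W_{e-\epsilon}(\alpha) N_\epsilon$ against $\sum_\epsilon T_{e-\epsilon} N_\epsilon = p^{2e} - \sum_{\epsilon=e-e_\alpha}^{e} T_{e-\epsilon} N_\epsilon$, using $W_m(\alpha)\sim T_m$ from Lemma~\ref{lem:asymp}.

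First I would fix $\delta>0$ and use Lemma~\ref{lem:asymp}(2) to pick $L$ such that $W_m(\alpha)\ge (1-\delta)T_m$ for all $m\ge L$. Split the sum $\sum_{\epsilon=0}^{e-e_\alpha-1}W_{e-\epsilon}(\alpha)N_\epsilon$ at $\epsilon = e-L$: for $\epsilon \le e-L$ (so $e-\epsilon\ge L$) replace $W_{e-\epsilon}(\alpha)$ by $(1-\delta)T_{e-\epsilon}$ to get a lower bound $(1-\delta)\sum_{\epsilon=0}^{e-L}T_{e-\epsilon}N_\epsilon$; the remaining terms $\epsilon\in[e-L+1,\,e-e_\alpha-1]$ are non-negative and can simply be dropped. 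Next, extend the first sum to $\epsilon$ up to $e$ by adding back the "missing" terms: $\sum_{\epsilon=0}^{e}T_{e-\epsilon}N_\epsilon = p^{2e}$ by \eqref{seqNbis}, so $\sum_{\epsilon=0}^{e-L}T_{e-\epsilon}N_\epsilon = p^{2e} - \sum_{\epsilon=e-L+1}^{e}T_{e-\epsilon}N_\epsilon$. Each $T_{e-\epsilon}$ in this tail has $e-\epsilon < L$ fixed, so $T_{e-\epsilon}$ is bounded by a constant $T_{\max}=\max_{0\le m<L}T_m$; hence the tail is at most $T_{\max}\sum_{\epsilon=e-L+1}^{e}N_\epsilon \le T_{\max}\sum_{\epsilon=0}^{e}N_\epsilon$, which by Lemma~\ref{tech} is $o(p^{2e})$. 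Putting this together,
\[
    R_e \ge (1-\delta)\frac{1}{p^{2e}}\Big(p^{2e} - o(p^{2e})\Big) = (1-\delta)(1-o(1)),
\]
so $\liminf_{e\to\infty} R_e \ge 1-\delta$ for every $\delta>0$, and since $R_e\le 1$ trivially, $R_e\to 1$.

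The routine parts are the index bookkeeping in the split and verifying that $e_\alpha$ (a fixed constant once $\alpha,p$ are fixed) does not interfere. The one genuinely substantive ingredient is Lemma~\ref{tech}, $\frac{1}{p^{2m}}\sum_{\ell=0}^m N_\ell \to 0$, which is what kills the boundary terms; everything else is asymptotic comparison via Lemma~\ref{lem:asymp}. So the main obstacle is already discharged by the preceding lemmas, and the remaining work is to assemble the estimate cleanly — in particular, making sure the "drop the non-negative tail" and "add back a controlled tail" steps are applied to the right ranges so that the bound \eqref{seqNbis} can be invoked with non-negative summands, exactly as in the proof of Lemma~\ref{tech}.
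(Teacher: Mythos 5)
Your proposal is correct and follows essentially the same route as the paper's own proof: both split the sum from Theorem~\ref{thm:main-thm-lower-bound} at a threshold where $W_m(\alpha)\ge(1-\delta)T_m$, convert the main part to $p^{2e}$ minus a tail via \eqref{seqNbis}, and kill the boundary terms with Lemma~\ref{tech}. The only cosmetic difference is that you discard the non-negative remainder terms outright while the paper retains them and shows they also vanish; both are valid.
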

\begin{proof}
    By Lemma \ref{lem:asymp}, $W_m(\alpha) \sim_{m \rightarrow +\infty} T_m$. Fix $\delta > 0$ and let $M \ge e_\alpha$ such that for every $m \geq M$, $W_m(\alpha) \geq (1 - \delta) T_m$.
    
    Using Theorem~\ref{thm:main-thm-lower-bound}, we thus get 
    \[
    \begin{aligned}
        |D(p^e, p^e-\alpha, \eta)|
        &\ge \sum_{\epsilon=0 }^{e-e_\alpha-1} W_{e-\epsilon}(\alpha) N_{\epsilon}\\
        &\geq (1 - \delta)\sum_{\epsilon=0 }^{e-M} T_{e-\epsilon} N_{\epsilon} + \sum_{\epsilon=e-M+1 }^{e-e_\alpha-1} W_{e-\epsilon}(\alpha)  N_{\epsilon}\\
        & \geq (1 - \delta)\left(p^{2e} - \sum_{\epsilon=e-M+1 }^e T_{e-\epsilon} N_{\epsilon} \right) + \sum_{\epsilon=e-M+1 }^{e-e_\alpha-1} W_{e-\epsilon}(\alpha)  N_{\epsilon}\\
        & \geq  (1 - \delta)\left(p^{2e} - T_{M-1} \sum_{\epsilon=e-M+1 }^e N_{\epsilon} \right) + W_{M-1}(\alpha)  \sum_{\epsilon=e-M+1 }^{e-e_\alpha-1} N_{\epsilon}\,.
        \end{aligned}
        \]
Then, by Lemma \ref{tech}, both terms $\sum_{\epsilon=e-M+1 }^e N_{\epsilon}/p^{2e}$ and $\sum_{\epsilon=e-M+1 }^{e-e_\alpha-1} N_{\epsilon}/p^{2e}$ vanish when $e \to \infty$. Hence we get
\[
    R_e = \frac{|D(q, q-\alpha, \eta)|}{p^{2e}} \rightarrow 1\,.
\]
\end{proof}

\begin{example}
    Let us give some numerical computations of the dimension and information rate of $\Lift^\eta \RS_{p^e}(p^e-\alpha)$ illustrating Theorem~\ref{th:inforate1}.
    \[
    \begin{array}{c|c|c|c|c|c|c}
        p & \eta & \alpha & e & n = p^{2e} & k = |D(p^e, p^{e-c}, \eta)| & R = k/n \\
        \hline
        \hline
        \multirow{8}{*}{2} & \multirow{8}{*}{2} & \multirow{8}{*}{2} & 3 & 64 & 25 & 0.3906 \\
         &  &  &  4 & 256 & 121 & 0.4727 \\
         &  &  &  5 & 1024 & 561 & 0.5479 \\
         &  &  &  6 & 4096 & 2513 & 0.6135 \\
         &  &  &  7 & 16384 & 10977 & 0.6700 \\
         &  &  &  8 & 65536 & 47073  & 0.7183 \\
         &  &  &  9 & 262144 & 199105  & 0.7595 \\
         &  &  & 10 & 1048576 & 833345  & 0.7947 \\
        \hline
        \hline
        \multirow{5}{*}{2} & \multirow{5}{*}{2} & \multirow{5}{*}{16} & 6 & 4096 & 781 & 0.1907 \\
         &  &  &  7 & 16384 & 4944 & 0.3018 \\
         &  &  &  8 & 65536 & 26335 & 0.4018 \\
         &  &  &  9 & 262144 & 128142 & 0.4888 \\
         &  &  & 10 & 1048576 & 590885 & 0.5635 \\
        \hline
        \hline
        \multirow{5}{*}{2} & \multirow{5}{*}{4} & \multirow{5}{*}{2} &  3 & 64 & 16 & 0.2500 \\
         &  &  &  4 & 256 & 71 & 0.2773 \\
         &  &  &  5 & 1024 & 331 & 0.3232 \\
         &  &  &  6 & 4096 & 1506 & 0.3677 \\
         &  &  &  7 & 16384 & 6749 & 0.4119 \\
        \hline
        \hline
    \end{array}
    \]

    In Figure~\ref{fig:d= q - alpha}, we also represent the degree sets of $\Lift^2 \RS_{2^e}(2^e-\alpha)$ for $\alpha=3$ and $e \in \{7, 8, 9, 10\}$.
\end{example}

\begin{figure}[ht]
\centering
\begin{subfigure}[t]{0.25\textwidth}
\centering
\includegraphics[width=\textwidth]{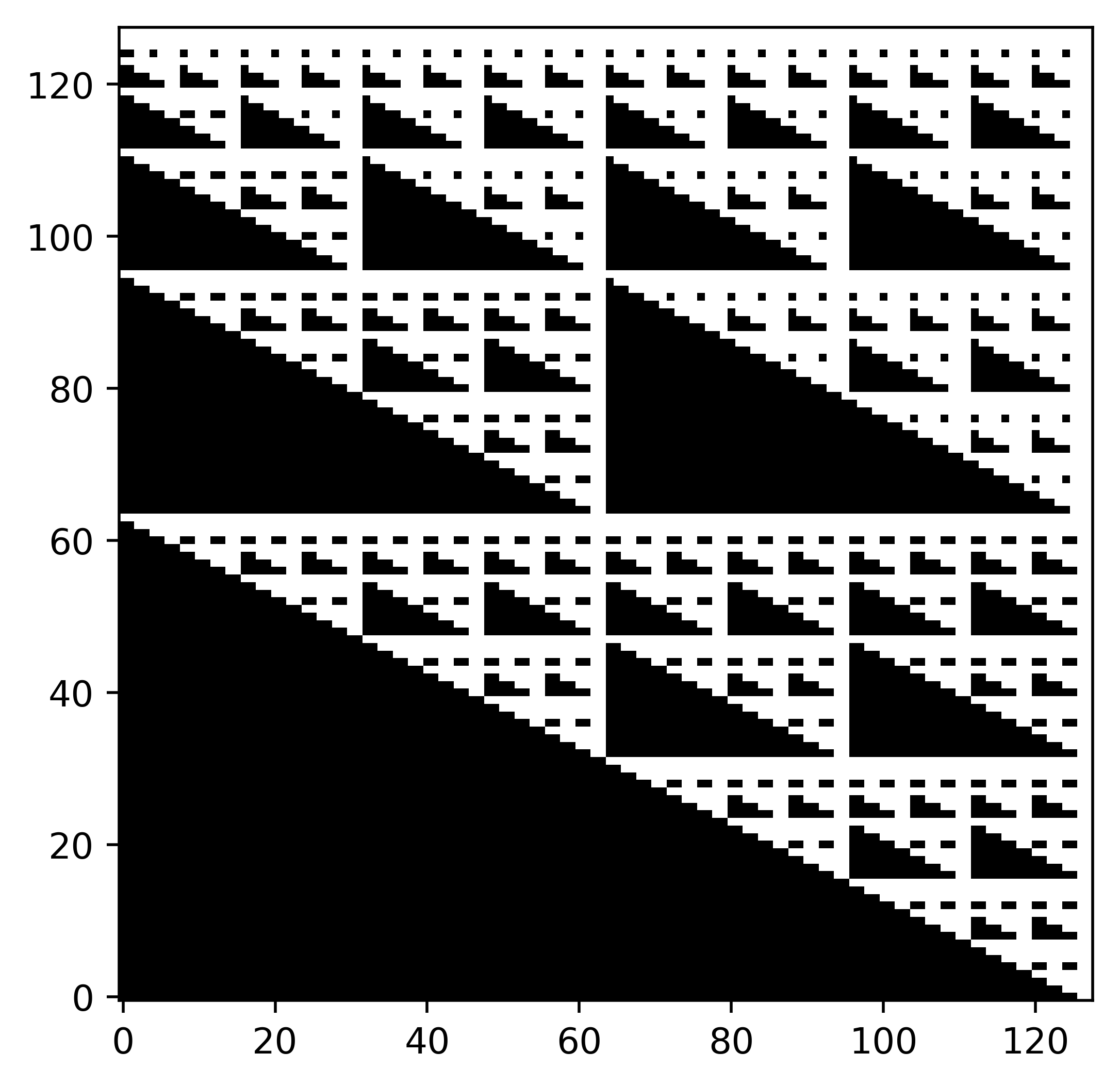}
\subcaption{$e=7$}
\end{subfigure}%
\begin{subfigure}[t]{0.25\textwidth}
\centering
\includegraphics[width=\textwidth]{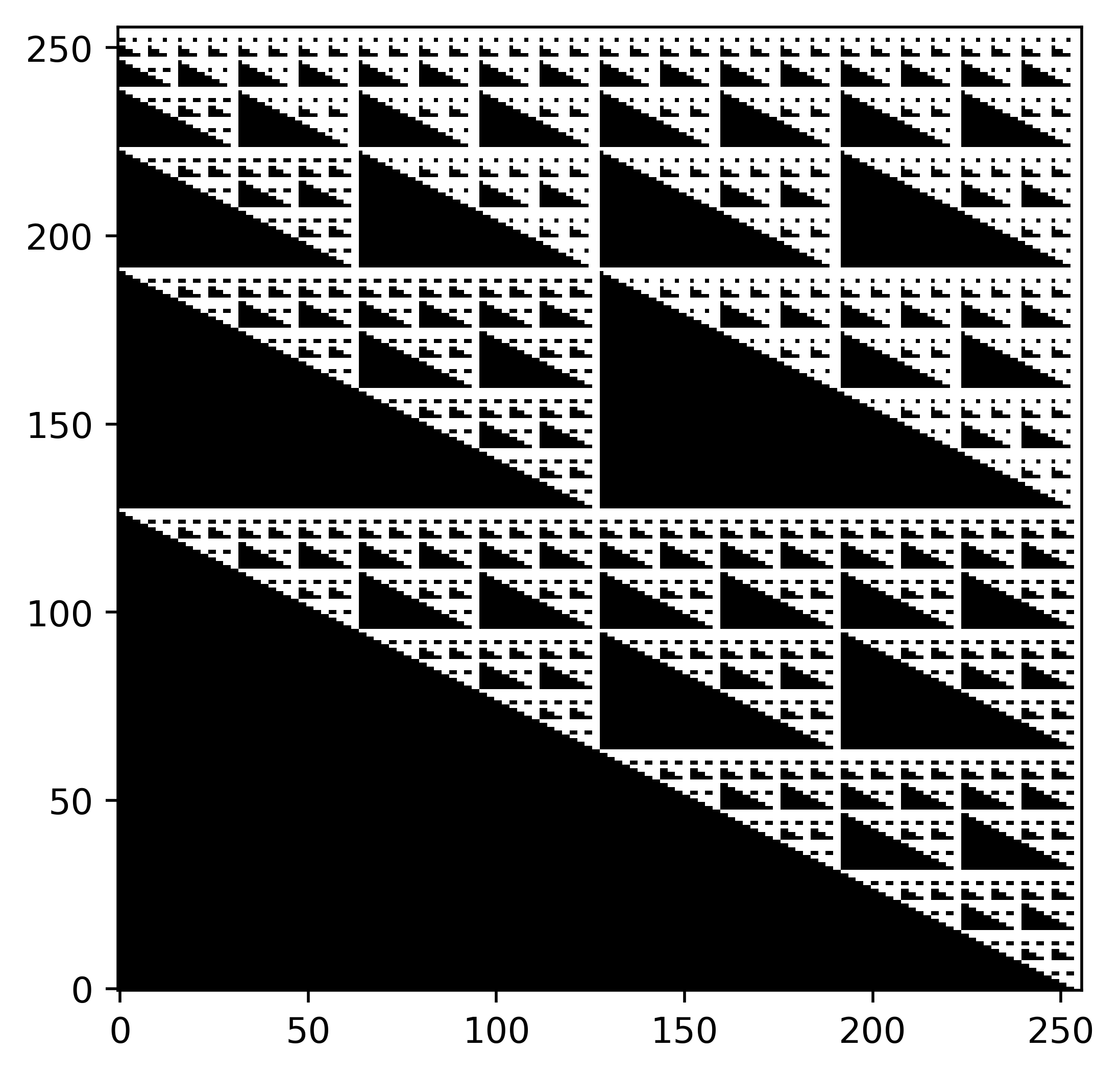}
\subcaption{$e=8$}
\end{subfigure}%
\begin{subfigure}[t]{0.25\textwidth}
\centering
\includegraphics[width=\textwidth]{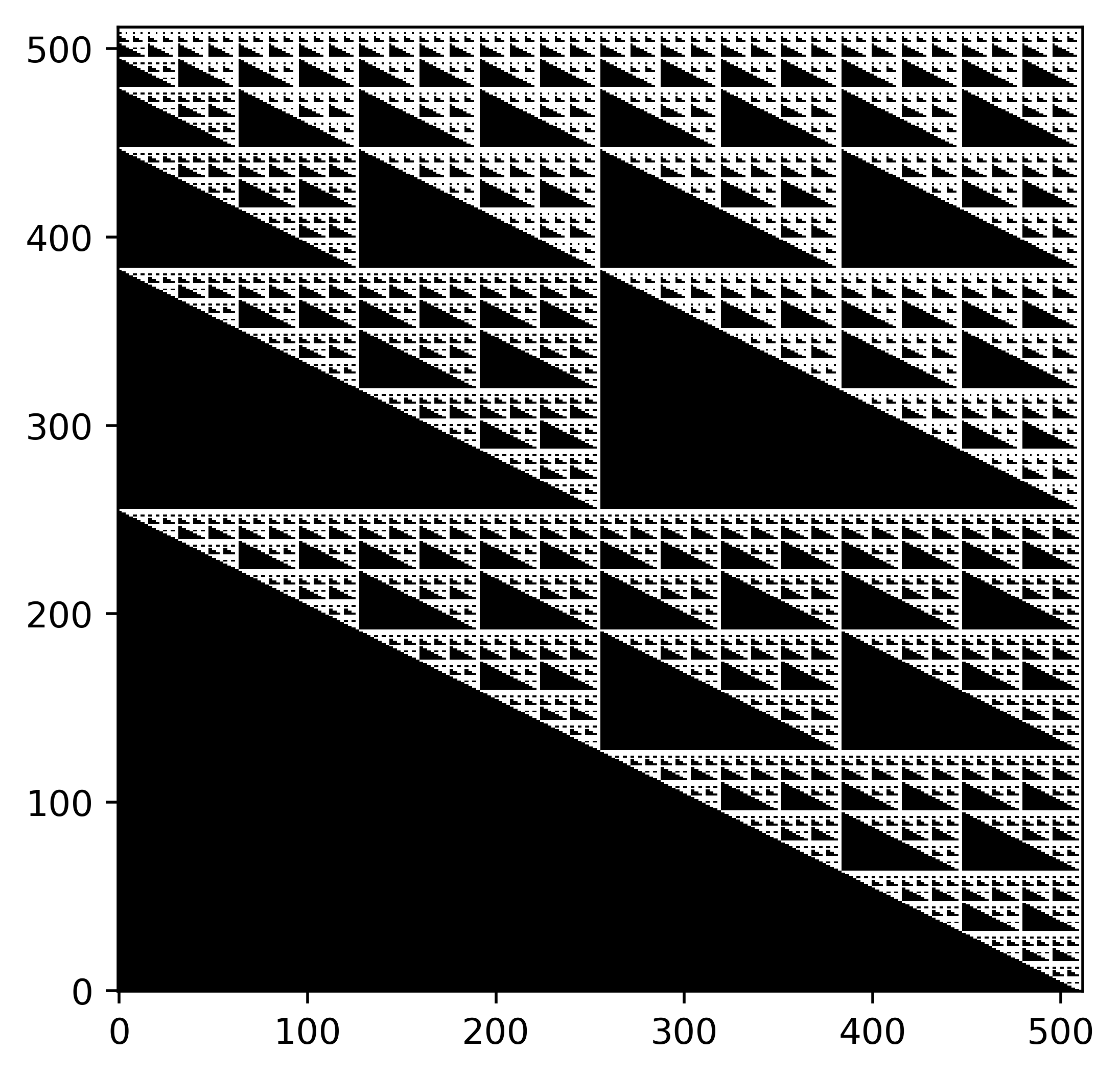}
\subcaption{$e=9$}
\end{subfigure}%
\begin{subfigure}[t]{0.25\textwidth}
\centering
\includegraphics[width=\textwidth]{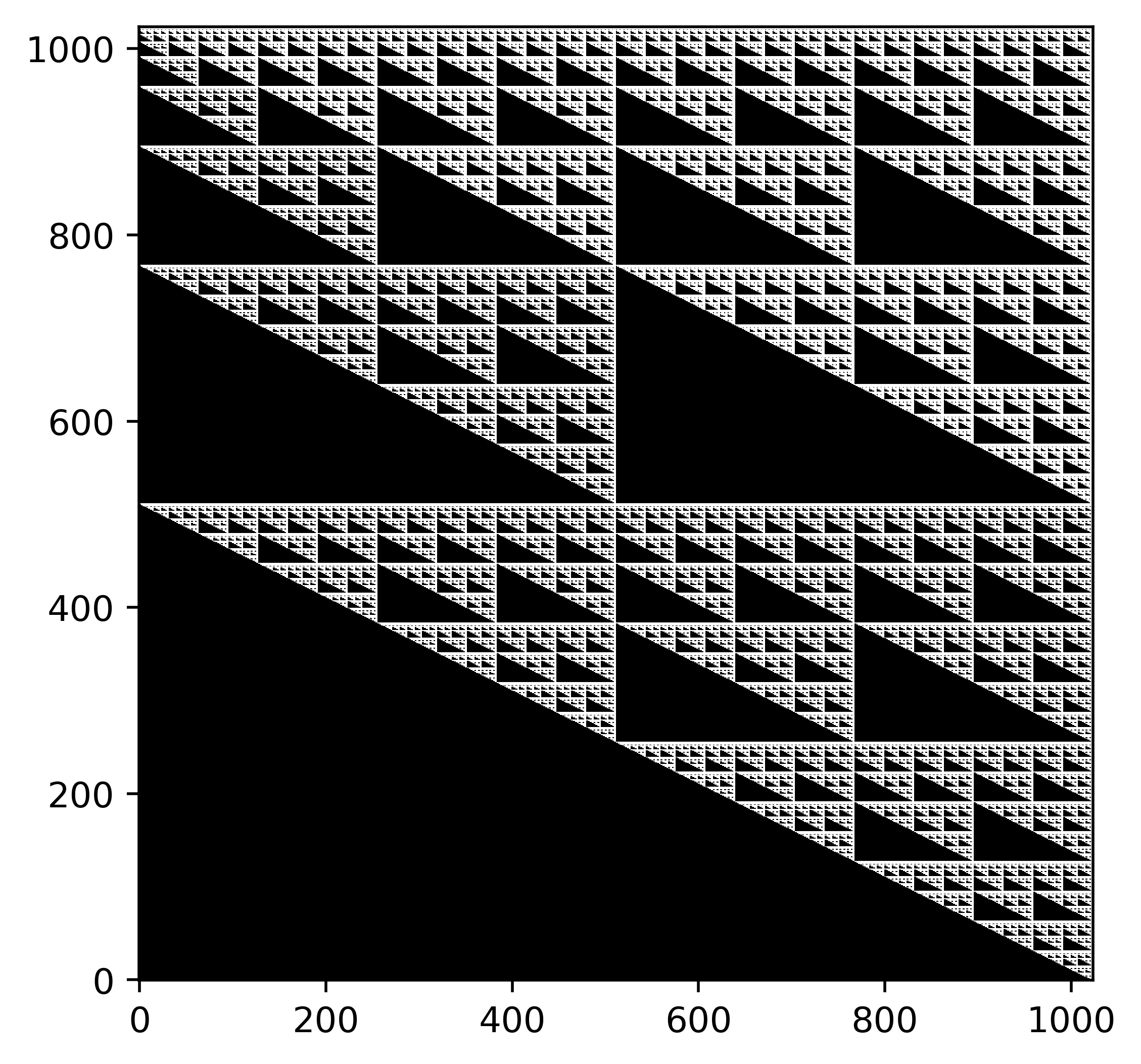}
\subcaption{$e=10$}
\end{subfigure}
\caption{Representation of the degree set of $\Lift^2 \RS_{2^e}(2^e-\alpha)$ for $\alpha=3$ and different values of $e$.}\label{fig:d= q - alpha}
\end{figure}

\subsubsection{Asymptotics of the rate of $\Lift^\eta \RS_q(\lfloor \gamma q \rfloor)$ when $q \to \infty$ and $\gamma$ is fixed}

\begin{theorem}
    \label{thm:asymp-good}
    Let $c \ge 1$, $\eta \ge 1$ and $p$ be a prime number.  Define $\gamma = 1 - p^{-c}$, and consider the sequence of codes $\calC_e = \Lift^\eta \RS_{p^e}(\gamma p^e)$, for $e \ge c+1$. Then, the information rate $R_e$ of $\calC_e$ satisfies:
    \[
        \lim_{e \to \infty} R_e \ge \frac{1}{2\eta}\sum_{\epsilon = 0}^{c-1} (p^{-\epsilon} - p^{-c})^2 N_\epsilon\,.
    \]
\end{theorem}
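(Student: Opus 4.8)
The plan is to follow the same scheme as the proof of Theorem~\ref{th:inforate1}, but now applied with the $e$-dependent parameter $\alpha = p^{e-c}$ in place of a fixed $\alpha$. First I would record that, since $e \ge c+1$, we have $p^e - d = p^{e-c} \ge p \ge 2$, so $d = \gamma p^e = p^e - p^{e-c}$ is an integer with $d \le p^e - 2$, and $\calC_e = \Lift^\eta\RS_{p^e}(d)$ is well defined. Writing $\alpha = p^{e-c}$, we get $e_\alpha = \lfloor\log_p\alpha\rfloor = e-c$, so for each fixed $e$ Theorem~\ref{thm:main-thm-lower-bound} applies (with this $\alpha \ge 2$) and yields
\[
    |D(p^e, p^e - p^{e-c}, \eta)| \;\ge\; \sum_{\epsilon=0}^{e-e_\alpha-1} W_{e-\epsilon}(p^{e-c})\,N_\epsilon \;=\; \sum_{\epsilon=0}^{c-1} W_{e-\epsilon}(p^{e-c})\,N_\epsilon\,.
\]

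The core of the argument is to compute the asymptotics of each summand $W_{e-\epsilon}(p^{e-c})$ as $e \to \infty$ with $\epsilon$, $c$, $\eta$ held fixed. By~\eqref{eq:defw}, $W_{e-\epsilon}(p^{e-c})$ equals the number of pairs $(i,j)\in\NN^2$ with $i+\eta j \le p^{e-\epsilon}-p^{e-c}-\eta$; since $\epsilon \le c-1$, the quantity $p^{e-\epsilon}-p^{e-c} = p^e(p^{-\epsilon}-p^{-c})$ is positive and tends to infinity. Using the explicit formula~\eqref{eq:defw}, setting $x = (p^{e-\epsilon}-p^{e-c})/\eta \to \infty$, and applying $\lfloor x\rfloor \sim x$ and $\lfloor x+1\rfloor \sim x$, I would obtain
\[
    W_{e-\epsilon}(p^{e-c}) \;\sim\; \frac{(p^{e-\epsilon}-p^{e-c})^2}{2\eta} \;=\; \frac{p^{2e}}{2\eta}\,(p^{-\epsilon}-p^{-c})^2\,.
\]
I would stress that this does \emph{not} follow from the second statement of Lemma~\ref{lem:asymp}: there $\alpha$ is held fixed while $m\to\infty$, whereas here $\alpha = p^{e-c}$ grows together with $m = e-\epsilon$, remaining the fixed proportion $p^{\epsilon-c}$ of $p^m$; this is precisely why the limiting constant is $(p^{-\epsilon}-p^{-c})^2/(2\eta)$ rather than $1/(2\eta)$.

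Finally I would divide the displayed lower bound by the code length $n = p^{2e}$ and let $e\to\infty$. Since the index set $\{0,\dots,c-1\}$ is finite and $e$-independent and each $N_\epsilon$ is a fixed constant, the limit passes through the sum term by term:
\[
    \liminf_{e\to\infty} R_e \;=\; \liminf_{e\to\infty}\frac{|D(p^e,p^e-p^{e-c},\eta)|}{p^{2e}} \;\ge\; \sum_{\epsilon=0}^{c-1}\Big(\lim_{e\to\infty}\frac{W_{e-\epsilon}(p^{e-c})}{p^{2e}}\Big)N_\epsilon \;=\; \frac{1}{2\eta}\sum_{\epsilon=0}^{c-1}(p^{-\epsilon}-p^{-c})^2\,N_\epsilon\,,
\]
which is the asserted inequality. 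The only genuinely delicate point is the middle paragraph: correctly identifying the asymptotic regime for $W_{e-\epsilon}(p^{e-c})$ and resisting the temptation to invoke Lemma~\ref{lem:asymp}. Everything else is a direct application of Theorem~\ref{thm:main-thm-lower-bound}, an elementary floor-function estimate, and the fact that finite sums commute with limits.
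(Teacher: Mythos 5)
Your proposal is correct and follows essentially the same route as the paper: both start from the lower bound $\sum_{\epsilon=0}^{c-1} W_{e-\epsilon}(p^{e-c})\,N_\epsilon$ (the paper cites Proposition~\ref{prop:small-wrm-in-lifts}, you cite Theorem~\ref{thm:main-thm-lower-bound}, which is the same bound), then extract the asymptotics of $W_{e-\epsilon}(p^{e-c})$ directly from~\eqref{eq:defw} and pass to the limit in the finite sum. Your remark that Lemma~\ref{lem:asymp} is inapplicable because $\alpha=p^{e-c}$ grows with $e$ is exactly right and is implicitly why the paper, too, works from~\eqref{eq:defw} rather than from that lemma.
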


\begin{proof}
    By Proposition \ref{prop:small-wrm-in-lifts},
    \[
        |D(p^e,p^e-p^{e-c},\eta)| \geq \sum_{\epsilon =0 }^{c-1} W_{e-\epsilon}(p^{e-c}) N_\epsilon\,.
    \]
    Moreover, using \eqref{eq:defw}, for every fixed $\epsilon \le c-1$ we have 
    \[
        \lim_{e \to \infty} W_{e-\epsilon}(p^{e-c})= p^{2e}\frac{(p^{-\epsilon} - p^{-c})^2}{2 \eta}\,.
    \]
    Then 
    \[
        \lim_{e \rightarrow \infty} R_e \geq \frac{1}{2\eta} \sum_{\epsilon = 0}^{c-1} (p^{-\epsilon} - p^{-c})^2 N_\epsilon\,.
    \]
\end{proof}

\begin{example}
    Let us give some numerical computations, illustrating the tightness of the bound given in Theorem~\ref{thm:asymp-good}.
    \[
    \begin{array}{c|c|c|c|c|c|c}
        p & \eta & c & e & n = p^{2e} & k = |D(p^e, p^{e-c}, \eta)| & R = k/n \\
        \hline
        \hline
        \multirow{7}{*}{2} & \multirow{7}{*}{2} & \multirow{7}{*}{4} & 5 & 1024 & 561     & 0.5479 \\
         &  &  &  6 & 4096 & 1861    & 0.4543 \\
         &  &  &  7 & 16384 & 6843    & 0.4177 \\
         &  &  &  8 & 65536 & 26335   & 0.4018 \\
         &  &  &  9 & 262144 & 103431  & 0.3946 \\
         &  &  & 10 & 1048576 & 410071  & 0.3911 \\
        \cline{4-7}
        & & & \multicolumn{3}{c|}{\text{lower bound on the asymptotic rate}} & 0.3877 \\
        \hline
        \hline
        \multirow{5}{*}{2} & \multirow{5}{*}{2} & \multirow{5}{*}{6} &  7 & 16384 &    10977 & 0.6700 \\
         &  &  &  8 & 65536 & 39431 & 0.6017 \\
         &  &  &  9 & 262144 & 150729 & 0.5750 \\
         &  &  & 10 & 1048576 & 590885 & 0.5635 \\
        \cline{4-7}
        & & & \multicolumn{3}{c|}{\text{lower bound on the asymptotic rate}} & 0.5533 \\
        \hline
        \hline
        \multirow{5}{*}{2} & \multirow{5}{*}{4} & \multirow{5}{*}{3} &  4 & 256 & 71    & 0.2773  \\
         &  &  &  5 & 1024 & 205 & 0.2002 \\
         &  &  &  6 & 4096 & 699 & 0.1707 \\
         &  &  &  7 & 16384 & 2587 & 0.1579 \\
        \cline{4-7}
        & & & \multicolumn{3}{c|}{\text{lower bound on the asymptotic rate}} &  0.1465 \\
        \hline
        \hline
        \multirow{4}{*}{5} & \multirow{4}{*}{2} & \multirow{4}{*}{2} &  3 & 15625 & 5789 & 0.3705 \\
         &  &  &  4 & 390625 & 132109 & 0.3382 \\
         &  &  &  5 & 9765625 & 3259709 & 0.3338 \\
        \cline{4-7}
        & & & \multicolumn{3}{c|}{\text{lower bound on the asymptotic rate}} & 0.3328 \\    \hline
        \hline
    \end{array}
    \]
    
    In Figure \ref{fig:d=gamma q}, we also represent the degree sets $D(2^e,2^e-2^{e-c},\eta)$ for  $p=2$, $\eta=2$, $c=4$ and $e \in \{ 5,6,7,8 \}$.
\end{example}

\begin{figure}[!ht]
\centering
\begin{subfigure}[t]{0.25\textwidth}
\centering
\includegraphics[width=\textwidth]{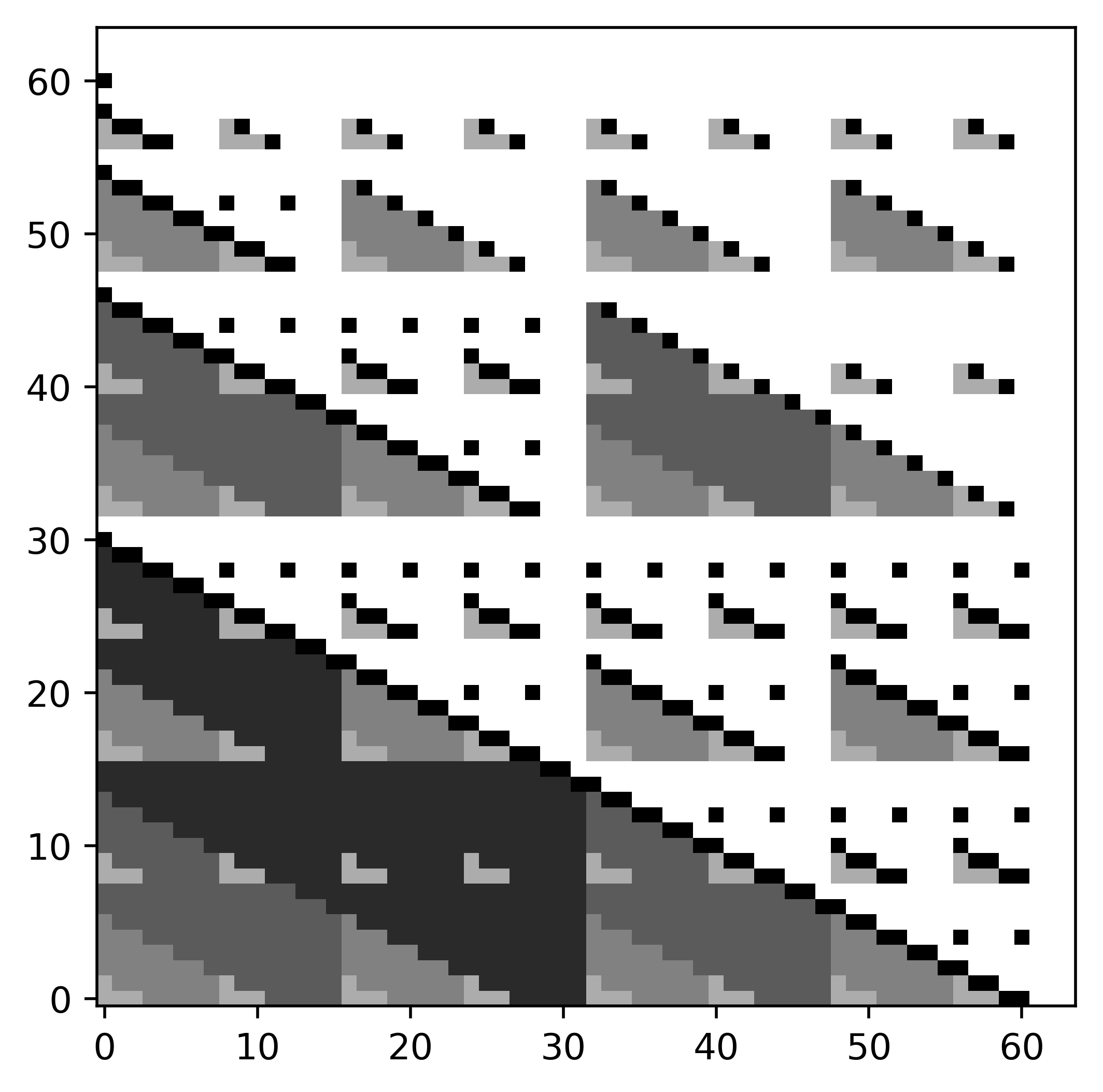}
\subcaption{$e=5$}
\end{subfigure}%
\begin{subfigure}[t]{0.25\textwidth}
\centering
\includegraphics[width=\textwidth]{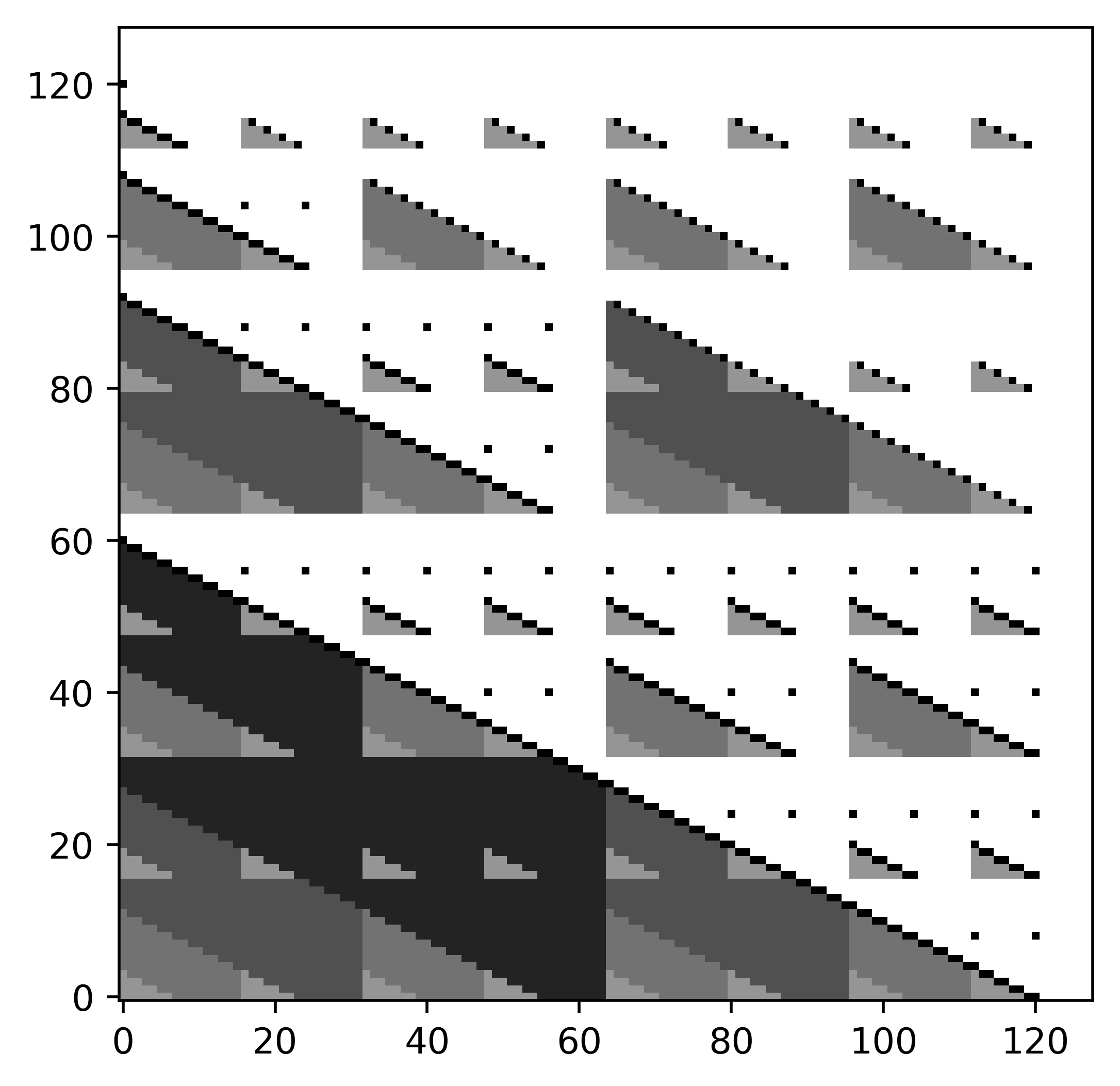}
\subcaption{$e=6$}
\end{subfigure}%
\begin{subfigure}[t]{0.25\textwidth}
\centering
\includegraphics[width=\textwidth]{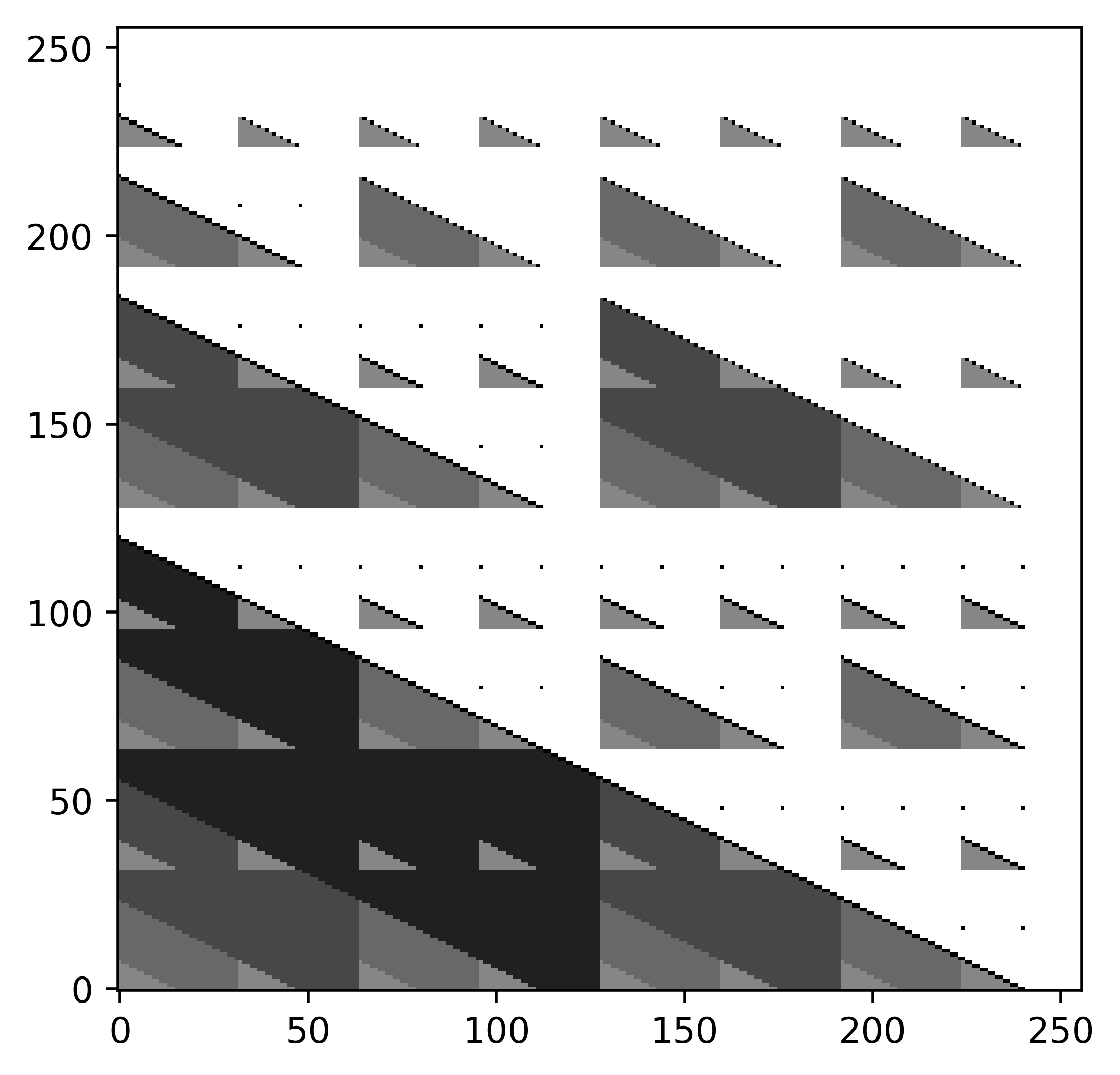}
\subcaption{$e=7$}
\end{subfigure}%
\begin{subfigure}[t]{0.25\textwidth}
\centering
\includegraphics[width=\textwidth]{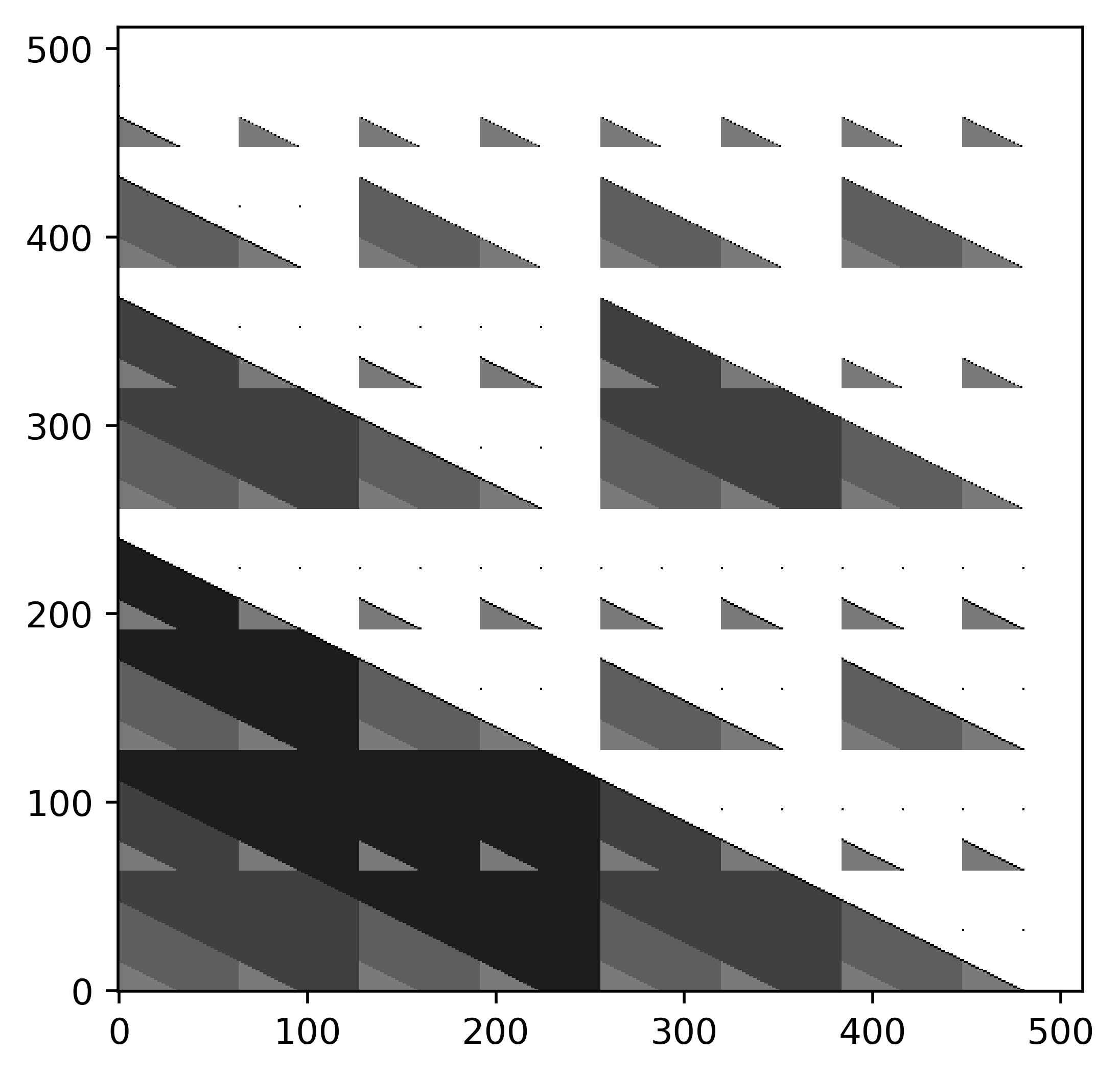}
\subcaption{$e=8$}
\end{subfigure}
\caption{Representation of the degree set of $\Lift^2 \RS_{2^e}(2^e-2^{e-c})$ for $c=4$ and different values of $e$. Note that in each case, the number of differents shades of grey is constant and equal to $c$.}\label{fig:d=gamma q}
\end{figure}


\section*{Acknowledgements}

Part of this work was done while the first author was affiliated to LIX, École Polytechnique, Inria \& CNRS UMR 7161, University Paris-Saclay, Palaiseau, France.  The first author is now funded by the French \emph{Direction Générale de l'Armement}, through the \emph{Pôle d'excellence cyber}. This work was also funded in part by the grant ANR-15-CE39-0013-01 \enquote{Manta} from the French National Research Agency, which gave the authors the opportunity to work together.

\newcommand{\etalchar}[1]{$^{#1}$}

\end{document}